\newtheorem{theorem}{Theorem}[section]
\newtheorem{lemma}[theorem]{Lemma}
\newtheorem{proposition}[theorem]{Proposition}
\newtheorem{corollary}[theorem]{Corollary}
\theoremstyle{definition}
\newtheorem{definition}{Definition}[section]
\newtheorem{remark}{Remark}[section]
\newtheorem*{remark*}{Remark}
\newcommand{\concept}[1]{\emph{#1}}
\newcommand{\LubyGlauber}{\emph{LubyGlauber}}
\newcommand{\LocalMetropolis}{\emph{LocalMetropolis}}
\newcommand{\LOCAL}{$\mathsf{LOCAL}$}
\newcommand{\DTV}[2]{d_{\mathrm{TV}}\left({#1},{#2}\right)}
\newcommand{\Diam}{{\mathrm{diam}}}
\newcommand{\dist}{{\mathrm{dist}}}
\newcommand{\LG}{\mathsf{LG}}
\newcommand{\LM}{\mathsf{LM}}
\newcommand{\QQ}[1]{\left(1-\frac{2}{q}\right)^{#1}}
\title{What can be sampled locally?
\thanks{This research is supported by the National Science Foundation of China under Grant No.~61722207 and No.~61672275.}
}
\author{
Weiming Feng~\thanks{School of Informatics, University of Edinburgh. Email: {wfeng@ed.ac.uk}.}
\and
Yuxin Sun~\thanks{Department of Computer Sciences, University of Wisconsin-Madison. Email: {yxsun@cs.wisc.edu}.}
\and
Yitong Yin~\thanks{State Key Laboratory for Novel Software Technology, Nanjing University. Email: {yinyt@nju.edu.cn}.}
}
\date{}
\begin{document}
\maketitle
\begin{abstract}
The local computation of Linial [FOCS'87] and Naor and Stockmeyer [STOC'93] studies whether a locally defined distributed computing problem is  locally solvable. In classic local computation tasks, the goal of distributed algorithms is to construct a feasible solution for some constraint satisfaction problem (CSP) locally defined on the network.

In this paper, we consider the problem of sampling a uniform CSP solution by distributed algorithms in the ~\LOCAL{} model, and ask whether a locally definable joint distribution is locally sample-able. 
We use Markov random fields (MRFs) and Gibbs distributions to model locally definable joint distributions.
We give two distributed algorithms based on Markov chains, called \LubyGlauber{} and \LocalMetropolis, which we believe to represent two basic approaches for distributed Gibbs sampling. The  algorithms achieve respective mixing times $O(\Delta\log n)$ and $O(\log n)$ under typical mixing conditions, where $n$ is the number of vertices and $\Delta$ is the maximum degree of the graph.

We show that the time bound $\Theta(\log n)$ is optimal for distributed sampling. We also show a strong $\Omega(\mathrm{diam})$ lower bound: in particular for sampling independent set in graphs with maximum degree $\Delta\ge 6$. This gives a strong separation between sampling and constructing locally checkable labelings.
\end{abstract}

\newpage
\section{Introduction}\label{sec:intro}

\paragraph{Local computation and the \LOCAL{} model.}
Locality of computation is a central theme in the theory of distributed computing. 
In the seminal works of Linial~\cite{linial1992locality}, and Naor and Stockmeyer~\cite{naor1995can}, the locality of distributed computation and the locally definable distributed computing problems are respectively captured by the \LOCAL{} model and the notion of \concept{locally checkable labeling (LCL)} problems. 
In the \LOCAL{} model~\cite{naor1995can, peleg2000distributed}, a network of $n$ processors is represented as an undirected graph, where each vertex represents a processor and each edge represents a bidirectional communication channel. Computations and communications are organized in synchronized rounds. In each round, each processor may receive a message of arbitrary size from each of its neighbors, perform an arbitrary local computation with the information collected so far, and send a message of arbitrary size to each of its neighbors. The output value for each vertex in a $t$-round protocol is determined by the local information within the $t$-neighborhood of the vertex. 
The local computation tasks are usually formulated as \emph{labeling} problems, such as the locally checkable labeling (LCL) problems introduced in~\cite{naor1995can}, in which the distributed algorithm is asked to construct a feasible solution of a constraint satisfaction problem (CSP) defined by local constraints with constant diameter in the network. 
Many problems can be expressed in this way, including various vertex/edge colorings, or local optimizations such as maximal independent set (MIS) and maximal matching. 

A classic question  for local computation  is whether a locally definable problem is locally computable.
Mathematically, this  asks
whether a feasible solution for a given local CSP can be \emph{constructed} using only local information.
There is a substantial body of research works dedicated to this question~\cite{awerbuch1989network, linial1992locality, naor1995can, kuhn2004cannot, kuhn2006price, kuhn2006complexity, barenboim2011deterministic, sarma2012distributed, fraigniaud2013towards,  barenboim2016locality, barenboim2016deterministic, chang2016exponential, fraigniaud2016local, ghaffari2016complexity, ghaffari2016improved, harris2016distributed, kuhn2016local, ghaffari2017distributed}.

\paragraph{The local sampling problem.}
Given an LCL problem which defines a local CSP on the network, aside from constructing a feasible solution of the local CSP, another interesting problem is  to \emph{sample} a \emph{uniform} random feasible solution, {e.g.}~to sample a uniform random proper coloring of the network $G$ with a given number of colors. 
More abstractly, given an instance of local CSP which, say, treats the vertices in the network $G(V,E)$ as variables,
a joint distribution of uniform random feasible solution $\boldsymbol{X}=(X_v)_{v\in V}$ is accordingly defined by these local constraints.
Our main question is \emph{whether a locally definable joint distribution can be sampled from locally}.

Intuitively, sampling could be substantially more difficult than labeling, because to sample a feasible solution is at least as difficult as to construct one, and furthermore, the marginal distribution of each random variable $X_v$ in a jointly distributed feasible solution $\boldsymbol{X}=(X_v)_{v\in V}$ may already encapsulate certain amount of non-local information about the solution space.

Retrieving such information about the solution space (as in sampling) instead of constructing one solution (as in labeling)
by distributed algorithms is especially well motivated in the context of distributed machine learning~\cite{newman2007distributed, doshi2009large, smyth2009asynchronous, yan2009parallel, gonzalez2011parallel, de2015rapidly, sa2016ensuring, Yang0Z16, XuLTZZ14}, where the data (the description of the joint distribution) is usually distributed among a large number of servers.

Besides uniform distributions, it is also natural to consider sampling from general non-uniform distributions over the solution space,
which are usually formulated as graphical models known as the weighted CSPs~\cite{cai2016nonnegative}, also known as factor graphs~\cite{mezard2009information}. 
In this model, a probability distribution called the \concept{Gibbs distribution} is defined over the space $\Omega=[q]^V$ of configurations, in such a way that
each constraint of the weighted CSP contributes a nonnegative factor in the probability measure of a configuration in $\Omega$. Due to Hammersley-Clifford's fundamental theorem~\citep[Theorem~9.3]{mezard2009information} of random fields, this model is universal for conditional independent (spatial Markovian)~\citep[Proposition~9.2]{mezard2009information} joint distributions. The conditional independence property roughly says that fixed a separator $S\subset V$ whose removal ``disconnects'' the variable sets $A$ and $B$, given any feasible configuration $X_S=\sigma_S$ over $S$, the configurations $X_A$ over $A$ and $X_B$ over $B$ are conditionally independent.

We are particularly interested in a basic class of weighted local CSPs, namely the \concept{Markov random fields (MRFs)}, where every local constraint (factor) is either a binary constraint over an edge or a unary constraint on a vertex. Specifically, given a graph $G(V,E)$ and a finite domain $[q]=\{1,2,\ldots, q\}$, the probability measure $\mu(\sigma)$ of each configuration $\sigma\in[q]^V$  under the Gibbs distribution $\mu$ is defined to be proportional to the weight:
\begin{align}
w(\sigma):=\prod_{e=uv\in E}A_e(\sigma_u,\sigma_v)\prod_{v\in V}b_v(\sigma_v),\label{eq:MRF-weight}
\end{align}
where $\{A_e\in \mathbb{R}_{\ge 0}^{q\times q}\}_{e\in E}$ are non-negative $q\times q$ symmetric matrices and $\{b_v\in \mathbb{R}_{\ge 0}^{ q}\}_{v\in V}$ are non-negative $q$-vectors, both specified by the instance of MRF. 
Examples of MRFs include combinatorial models such as independent set, vertex cover, graph coloring, and graph homomorphsm, or physical models such as hardcore gas model, Ising model, Potts model, and general spin systems.

\subsection{Our results}

We give two Markov chain based distributed algorithms for sampling from Gibbs distributions. Given any $\epsilon>0$, each algorithm returns a random output which is within total variation distance $\epsilon$ from the Gibbs distribution.
Our expositions mainly focus on MRFs, although both algorithms can be extended straightforwardly to general weighted local CSPs. 

In classic single-site Markov chains for Gibbs sampling, such as the Glauber dynamics, at each step a variable is picked at random and is updated according to its neighbors' current states.
A generic approach for parallelizing a single-site sequential Markov chain is to update a set of non-adjacent vertices in parallel at each step.
This natural idea has been considered in~\cite{gonzalez2011parallel}, also in a much broader context such as parallel job scheduling~\cite{daniel2004graph} or distributed Lov\'{a}sz local lemma~\cite{moser2010constructive, chung2014distributed}. 
For sampling from locally defined joint distributions, it is especially suitable because of the \concept{conditional independence} property of MRFs.

Our first algorithm, named \LubyGlauber, naturally parallelizes the Glauber dynamics by parallel updating vertices from independent sets generated by the ``Luby step'' in Luby's algorithm~\cite{alon1986fast,luby1986simple}. 
It is well known that Glauber dynamics achieves the mixing rate $\tau(\epsilon)=O\left(n\log\left(\frac{n}{\epsilon}\right)\right)$ under the \concept{Dobrushin's condition} for the decay of correlation~\cite{dobrushin1970prescribing,hayes2006simple}.
By a standard coupling argument, the \LubyGlauber{} algorithm achieves  a mixing rate $\tau(\epsilon)=O\left(\Delta\log\left(\frac{n}{\epsilon}\right)\right)$ under the same condition, where $\Delta$ is the maximum degree of the network.
In particular, for uniform proper $q$-colorings, this implies: 

\begin{theorem}\label{main-theorem-LubyGlauber}
If $q\ge\alpha\Delta$ for an arbitrary constant $\alpha>2$, there is an algorithm which samples a uniform proper $q$-coloring within total variation distance $\epsilon>0$ within \\$O\left(\Delta\log\left(\frac{n}{\epsilon}\right)\right)$ rounds of communications on any graph $G(V,E)$ with $n=|V|$ vertices and maximum degree $\Delta$, where $\Delta$ may be unbounded.
\end{theorem}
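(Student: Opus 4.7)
The plan is to define the \LubyGlauber{} Markov chain on proper $q$-colorings of $G$, verify Dobrushin's condition under $q\ge\alpha\Delta$ with $\alpha>2$, and use a path-coupling argument to bound its mixing time by $O(\Delta\log(n/\epsilon))$ rounds. Since one round of the chain is executable in $O(1)$ rounds of the LOCAL model (a single Luby marking, one round of message exchange of the surviving marks and of neighboring colors, and a local resampling), this immediately yields the claimed communication complexity.

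The chain proceeds as follows. In round $t$, every vertex independently marks itself with probability $p=\Theta(1/\Delta)$; a marked vertex joins the random independent set $I_t$ only if no neighbor is also marked. Every $v\in I_t$ resamples its color uniformly from $[q]$ minus the colors currently used by its neighbors, while vertices outside $I_t$ keep their colors. Because $I_t$ is independent, the simultaneous updates are non-conflicting, and detailed balance is inherited from sequential Glauber; thus the uniform distribution on proper $q$-colorings is the unique stationary distribution. The choice of $p$ guarantees $\Pr[v\in I_t]\ge\theta=\Omega(1/\Delta)$ for every $v$.

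For Dobrushin's condition, observe that if a single neighbor $u$ of $v$ changes value with all other neighbors held fixed, the conditional distribution of $X_v$ is uniform on one of two sets $A,B\subseteq[q]$ with $|A|=|B|\ge q-\Delta$ and $|A\triangle B|=2$, so the pairwise influence satisfies $\rho_{vu}\le 1/(q-\Delta)$ and $\sum_{u\sim v}\rho_{vu}\le\Delta/(q-\Delta)\le 1/(\alpha-1)=1-\delta$ with $\delta>0$. Now run path coupling on two configurations $X,Y$ differing only at a vertex $w$: couple the candidate markings identically across the two chains, and couple each color resampling via its optimal coupling. If $w\in I_t$ then its neighborhood agrees in the two copies, so $w$ is resampled to the same color in both and the discrepancy vanishes; if a neighbor $u\sim w$ lies in $I_t$ instead, the optimal coupling of $u$'s update disagrees with probability at most $\rho_{uw}$; vertices in $I_t$ that are not adjacent to $w$ are unaffected. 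Taking expectations, the one-step distance is at most $1-\theta+\theta\sum_{u\sim w}\rho_{uw}\le 1-\theta\delta$, so the standard path-coupling reduction gives $\Ebf[d(X_t,Y_t)]\le n(1-\theta\delta)^t\le\epsilon$ for $t=O((\Delta/\delta)\log(n/\epsilon))=O(\Delta\log(n/\epsilon))$.

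The main obstacle is the probabilistic bookkeeping in the coupling: the events $\{v\in I_t\}$ for different $v$ are correlated through the Luby tie-breaking, so one must carefully relate $\Pr[w\in I_t]$ to $\Pr[u\in I_t\mid w\notin I_t]$ for $u\sim w$, and one must argue that contributions of several neighbors of $w$ simultaneously present in $I_t$ add linearly in the Hamming distance rather than requiring an inclusion–exclusion correction. Thanks to the product structure of the marking step, these discrepancies amount to a $(1-p)$ factor that can be absorbed into the constants hidden in $\theta=\Omega(1/\Delta)$, and the rest is the standard reduction from one-step contraction to a mixing-time bound.
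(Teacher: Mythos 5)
Your proposal is correct in substance but takes a genuinely different route from the paper. The paper runs the chain with the random-ID ``Luby step'' and analyzes it via the Hayes/Dyer--Goldberg--Jerrum style maximal one-step coupling: writing $p^{(t)}_j=\Pr[X^{(t)}_j\neq Y^{(t)}_j]$, it shows $\mathbf{p}^{(t+1)}\le M\mathbf{p}^{(t)}$ with $M=(J-\Gamma)J+\Gamma R$ and $\|M\|_\infty\le 1-(1-\boldsymbol{\alpha})\gamma$, where $\gamma$ is any lower bound on $\Pr[v\in I]$ (equal to $\frac{1}{\Delta+1}$ for the Luby step), plus a short burn-in phase to reach feasibility; the coloring theorem then follows from $\boldsymbol{\alpha}=\max_v d_v/(q_v-d_v)<1$. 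Your version---marking with probability $p=\Theta(1/\Delta)$ and path coupling on the Hamming metric over all of $[q]^V$---is a legitimate alternative for colorings, and in fact matches the paper's own remark that any independent-set sampler with $\Pr[v\in I]\ge\gamma$ works; it also sidesteps the burn-in, since for colorings the update rule and the influence bound are valid on improper configurations whenever $q\ge\Delta+1$. What the paper's matrix argument buys is uniform treatment of heterogeneous selection probabilities and general MRFs without a weighted metric; what yours buys is a more elementary self-contained argument for the specific coloring case. (Two small imprecisions: $|A|$ and $|B|$ need not be equal and $|A\triangle B|$ can be $0,1$ or $2$, though $\rho_{vu}\le 1/(q-\Delta)$ holds anyway; and uniqueness of the stationary distribution needs irreducibility, i.e.\ $q\ge\Delta+2$, which $q\ge\alpha\Delta>2\Delta$ supplies.)

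One step does need repair. In the one-step bound you write $1-\theta+\theta\sum_{u\sim w}\rho_{uw}$, using the same $\theta$ for both terms, but $\theta$ is a \emph{lower} bound on $\Pr[v\in I_t]$, which is the right direction only for the deletion term; the creation term needs an \emph{upper} bound on $\Pr[u\in I_t]$, which for your marking scheme is $p$, strictly larger than $\theta=p(1-p)^{\Delta}$. The valid bound is $\mathbf{E}[d(X',Y')]\le 1-p(1-p)^{\Delta}+p\Delta/(q-\Delta)$, so contraction requires $(1-p)^{\Delta}>\Delta/(q-\Delta)$, i.e.\ $(1-p)^{\Delta}>1/(\alpha-1)$. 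This is exactly where the slack $\alpha>2$ must be used a second time: choosing $p=c/\Delta$ with a constant $c<(\alpha-2)/(\alpha-1)$ gives $(1-p)^{\Delta}\ge 1-c>1/(\alpha-1)$, the contraction factor is still $1-\Omega(1/\Delta)$, and your $O(\Delta\log(n/\epsilon))$ conclusion stands. As written, though, the inequality $1-\theta+\theta\sum_{u\sim w}\rho_{uw}$ is not justified, so this constant-tracking should be made explicit.
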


A barrier for this natural approach is that it will perform poorly on general graphs with large chromatic number. The situation motivates us to ask following questions:
\begin{itemize}
\item Is it possible to update all variables in $\boldsymbol{X}=(X_v)_{v\in V}$ simultaneously and still converge to the correct stationary distribution $\mu$?
\item
More concretely, is it always possible to sample almost uniform proper $q$-coloring, for a $q=O(\Delta)$, on any graphs $G(V,E)$ with $n=|V|$ vertices and  maximum degree $\Delta$, within $O(\log n)$ rounds of communications,  especially when $\Delta$ is unbounded?
\end{itemize}

Surprisingly, the answers to both questions are ``yes''. We give an algorithm, called the   \LocalMetropolis{} algorithm, achieving these goals. 
This is a bit surprising, since it seems to fully parallelize a process which is intrinsically sequential due to the massive local dependencies, especially on graphs with unbounded maximum degree.
The algorithm follows the \emph{Metropolis-Hastings} paradigm: at each step, it proposes to update all variables independently and then applies proper local filtrations to the proposals to ensure its convergence to the correct joint distribution. 
Our main discovery is that for locally defined joint distributions, the Metropolis filters are \concept{localizable}. 

The \LocalMetropolis{} algorithm always converges to the correct Gibbs distribution. The analysis of its mixing time is more involved.
In particular, for uniformly sampling proper $q$-coloring we show:

\begin{theorem}\label{main-theorem-LocalMetropolis}
If $q\ge\alpha\Delta$ for an arbitrary constant $\alpha>2+\sqrt{2}$, there is an algorithm for sampling uniform proper $q$-coloring within total variation distance $\epsilon>0$ in $O\left(\log\left(\frac{n}{\epsilon}\right)\right)$ rounds of communications on any graph $G(V,E)$ with $n=|V|$ vertices and maximum degree at most $\Delta\ge 9$, where $\Delta$ may be unbounded.
\end{theorem}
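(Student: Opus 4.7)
The plan is to specialize the \LocalMetropolis{} chain to proper $q$-colorings, verify that the uniform distribution on proper colorings is its stationary distribution, and then bound its mixing time by a path coupling argument that yields geometric contraction of the expected Hamming distance whenever $q>(2+\sqrt{2})\Delta$. Since one step of the chain can be implemented in $O(1)$ rounds of \LOCAL{} communication---each vertex needs only its neighbors' current colors and proposals---the mixing-time bound translates directly into the stated $O(\log(n/\epsilon))$ round complexity.

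For proper $q$-colorings the chain specializes as follows: in each round every vertex $v$ independently samples a uniform proposal $c_v\in[q]$, and the proposal is accepted (so that $X_v\leftarrow c_v$) exactly when $c_v$ is compatible with every neighbor---both with the neighbor's \emph{current} color and with the neighbor's \emph{proposed} color. The symmetric form of this filter enforces reversibility with respect to the uniform distribution on proper colorings, and should follow as a specialization of the general reversibility analysis the paper develops for MRFs.

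For the mixing bound I would apply path coupling on the Hamming metric. Take two configurations $X$ and $Y$ differing at a single vertex $v_0$, with $X_{v_0}=a$ and $Y_{v_0}=b$, and couple one step of the chain by using identical proposals $c_v$ at every vertex. The expected Hamming distance after one step decomposes into two contributions. First, a persistence term at $v_0$: the proposal $c_{v_0}$ is accepted in both chains (restoring agreement) whenever it avoids all current and proposed colors at the neighbors of $v_0$, an event of probability at least $1-O(\Delta/q)$. Second, a propagation term at each neighbor $u\sim v_0$: for the filter at $u$ to act differently in the two chains, the proposal $c_u$ must interact with $X_{v_0}$ versus $Y_{v_0}$, which forces $c_u\in\{a,b\}$. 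A distinctive feature of \LocalMetropolis{} is that this propagation splits into two sub-cases, since $c_u$ may survive the test against $u$'s current-color neighbors yet fail the proposal-color test in only one of the two chains. A careful case analysis produces a per-neighbor propagation bound whose dominant term is $2/q$ plus an $O(\Delta/q^2)$ correction.

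The main obstacle, and the source of the $2+\sqrt{2}$ threshold, is balancing persistence against propagation. The expected Hamming distance after one step has the form $(1-p_0)+\Delta\cdot p_1$, where both $p_0$ and $p_1$ contain terms in $\Delta/q$; requiring this sum to be strictly less than $1$ reduces to a quadratic inequality in $\Delta/q$ whose critical value is exactly $1/(2+\sqrt{2})$, equivalently $(\alpha-2)^2>2$. The side hypothesis $\Delta\ge 9$ is needed to absorb lower-order error terms arising from rare events in which several conflicts occur simultaneously around a common vertex; for smaller $\Delta$ these terms are not negligible against the leading bound. Once a one-step contraction factor $\rho<1$ has been established, the standard path-coupling theorem of Bubley and Dyer yields total-variation mixing in $O((1-\rho)^{-1}\log(n/\epsilon))=O(\log(n/\epsilon))$ rounds, which is the claim of the theorem.
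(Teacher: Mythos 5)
Your overall architecture (reversibility of \LocalMetropolis{} for colorings, path coupling, $O(1)$ \LOCAL{} rounds per step) matches the paper, but the heart of your argument does not reach the stated threshold. The coupling you propose --- identical proposals $c_v$ at every vertex in both chains --- is exactly the paper's ``easy local coupling,'' and its persistence/propagation trade-off does not reduce to a quadratic with critical ratio $1/(2+\sqrt{2})$. Concretely, with identical proposals the coalescence probability at the disagreeing vertex $v_0$ is about $\left(1-\frac{\Delta}{q}\right)\left(1-\frac{3}{q}\right)^{\Delta}$ (each neighbor must avoid the three colors $X_{v_0},Y_{v_0},c_{v_0}$), while each neighbor $u$ disagrees with probability about $\frac{2}{q}\left(1-\frac{2}{q}\right)^{d_u}$; summing over up to $\Delta$ neighbors, the contraction condition becomes, for $q=\alpha\Delta$ and large $\Delta$, $\left(1-\frac{1}{\alpha}\right)\mathrm{e}^{-3/\alpha}>\frac{2}{\alpha}\mathrm{e}^{-2/\alpha}$, i.e.\ $\alpha>2\mathrm{e}^{1/\alpha}+1\approx 3.634$. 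The exponential factors do not cancel, and no bookkeeping of $O(\Delta/q^2)$ corrections or appeal to $\Delta\ge 9$ turns this into $\alpha^2-4\alpha+2>0$. So your coupling proves the theorem only for $\alpha>3.634\ldots$, not for all $\alpha>2+\sqrt{2}$.

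What is missing is the paper's second, global coupling, which is where the $2+\sqrt{2}$ threshold actually comes from. There, proposals at (unblocked) neighbors of $v_0$ are coupled through a color permutation swapping $X_{v_0}$ and $Y_{v_0}$ rather than identically, and the resulting disagreement is allowed to percolate: a vertex $u$ can only end up disagreeing if a ``path of disagreement'' (a strongly self-avoiding walk through unblocked vertices) reaches it, each step costing probability $2/q$. Summing the geometric series over such walks contributes a term of order $\frac{\Delta}{q-2\Delta}$, and the contraction condition becomes roughly $1-\frac{1}{\alpha}-\frac{1}{\alpha-2}>0$, which is the quadratic $\alpha^2-4\alpha+2>0$ giving $\alpha>2+\sqrt{2}$. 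Handling blocked vertices, the branching/recursive construction of the coupling, and a fixed-point bound on the walk-generating sums (this is also where $\Delta\ge 9$ and an upper bound like $q\le 3.7\Delta+3$ enter, with the easy coupling covering larger $q$) are the substantive content of the paper's proof; your proposal, as written, would fail precisely in the regime $(2+\sqrt{2})\Delta<q\lesssim 3.6\Delta$ that the theorem is claimed for.
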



Neither of the algorithms abuses the power of the \LOCAL{} model: each message is of $O(\log n)$ bits if the domain size $q=\mathrm{poly}(n)$.

Due to the exponential correlation between variables in Gibbs distributions, the $O\left(\log\left(\frac{n}{\epsilon}\right)\right)$ time bound achieved in Theorem~\ref{main-theorem-LocalMetropolis} is optimal.

After the submission of this paper, two independent works~\cite{feng2018distributed, fischer2018simple} give the same distributed algorithm for sampling random $q$-coloring, which improves the \LocalMetropolis{} algorithm by introducing a step of laziness as distributed symmetry breaking. 
This new algorithm achieves an $O(\log n)$ mixing time under the Dobrushin's condition $q \geq (2+\delta)\Delta$. Furthermore, for graphs with sufficiently large maximum degree and girth at least 9, it achieves an $O(\log n)$ mixing time when $q \geq (\alpha^* + \delta)\Delta$, where $\alpha^* \approx 1.763$ is the positive root of equation $x = \mathrm{e}^{1/x}$. 
Another non-MCMC algorithm named distributed JVV sampler is given in~\cite{feng2018local}. For many locally definable joint distributions, this algorithm successfully samples a configuration within $\mathrm{polylog}(n)$ rounds in the \LOCAL{} model with high probability. In particular, this algorithm samples random $q$-coloring of triangle-free graphs within $O(\log^3 n)$ rounds in the \LOCAL{} model as long as $q \geq (\alpha^* + \delta)\Delta$. This non-MCMC sampling algorithm abuses the power of the \LOCAL{} model by assuming unlimited message-size and local computations.

It is a well known phenomenon that sampling may become computationally intractable 
when the model exhibits the non-uniqueness phase-transition property, e.g.~independent sets in graphs of maximum degree bounded by a $\Delta\ge 6$
\cite{sly2010computational, sly2014counting, galanis2015inapproximability, galanis2016inapproximability}.
For the same class of distributions, we show the following unconditional $\Omega(\Diam)$ lower bound for sampling in the \LOCAL{} model.

\begin{theorem}\label{main-theorem-diameter-lower-bound}
For $\Delta\ge 6$, there exist infinitely many graphs $G(V,E)$ with maximum degree $\Delta$ and diameter $\Diam(G)=|V|^{\Omega(1)}$ such that
any algorithm that samples uniform independent set in $G$ within sufficiently small constant total variation distance $\epsilon$ requires at least $\Omega(\Diam(G))$ rounds of communications,
even assuming the vertices $v\in V$ to be aware of $G$.
\end{theorem}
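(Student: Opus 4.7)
The plan is to exhibit a family of bounded-degree graphs in which the uniform independent set distribution has two macroscopically distinguishable phases whose selection cannot be communicated across the graph faster than its diameter. The key input is the non-uniqueness of the hardcore Gibbs measure at fugacity one on $\Delta$-regular bipartite graphs for $\Delta\ge 6$ (Galvin--Tetali, Sly--Sun): a random $\Delta$-regular bipartite gadget $H$ admits two phase events $E_+,E_-$ of nearly equal mass $\approx 1/2$ which together carry almost all of the uniform-independent-set measure, and such that, conditional on $E_\pm$, the marginal of a typical vertex is shifted by a constant amount from its unconditional value.

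To turn this logarithmic-diameter gadget into a graph of diameter $\widetilde{\Omega}(\sqrt{n})$ I would chain $k=\widetilde{\Theta}(\sqrt{n})$ copies $H_1,\ldots,H_k$, each of size $\widetilde{\Theta}(\sqrt{n})$, along a path, gluing each consecutive pair through a small bipartite connector chosen so that the joint distribution of the per-copy phase labels $(\phi_1,\ldots,\phi_k)\in\{+,-\}^k$ under the uniform independent-set measure on the resulting graph $G_n$ is dominated by the two globally aligned configurations $(+,\ldots,+)$ and $(-,\ldots,-)$, each with probability bounded away from $0$. Fixing vertices $u\in H_1$ and $v\in H_k$ then gives $\dist_G(u,v)=\Omega(\Diam(G_n))$ and
\[
\left|\Prbf[X_v=1\mid X_u=1]-\Prbf[X_v=1]\right|\ge c
\]
for a constant $c=c(\Delta)>0$.

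The lower bound itself is a standard random-bits argument. Since the graph is known to every vertex, the only source of nondeterminism in the \LOCAL{} model is the private random tapes, and the output at any vertex $w$ after $t$ rounds is a deterministic function of $G$ and of the tapes inside the $t$-neighborhood of $w$. For $t<\dist_G(u,v)/2$ the outputs $\widehat X_u$ and $\widehat X_v$ therefore use disjoint random tapes and are stochastically independent under the algorithm, while under the target distribution they are correlated by a constant as above. Projecting the total variation distance onto the pair $(X_u,X_v)$ yields $\DTV{\widehat\mu}{\mu}=\Omega(c)$, forcing $t=\Omega(\Diam(G))$ for any algorithm sampling within sufficiently small constant total variation distance.

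The main obstacle is the chaining construction: proving that the local phase alignment forced at each gluing propagates through the whole chain so that the two endpoints end up in the same phase with constant probability. This amounts to a Peierls/contour or coupling argument showing that ``domain walls'' between opposite phases along the chain carry exponentially small weight, provided each gluing supplies a strong enough effective field favoring phase agreement. Designing a gluing that is simultaneously degree-preserving, keeps the diameter $\widetilde{\Omega}(\sqrt{n})$, and amplifies rather than attenuates phase correlation is the technical heart of the argument; once this is in place, the random-bits step and the conversion of correlation into total variation distance are routine.
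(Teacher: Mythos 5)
Your overall strategy coincides with the paper's: a one-dimensional chain of $\widetilde{\Theta}(\sqrt{N})$ copies of the Sly--Sun non-uniqueness gadget, a global two-fold phase degeneracy for the uniform independent-set measure, and the locality-of-randomness observation that for $t<\dist(u,v)/2$ the outputs near the two ends of the chain are functions of disjoint random tapes and hence independent, which contradicts the constant end-to-end phase correlation after projecting the total variation distance onto the two ends. The paper realizes the chain as an even cycle $H$ of length $m=\Theta(\sqrt{N}/\log N)$ with antiferromagnetic gluing, so that the coherent phase vectors are exactly the two maximum cuts of $H$ (Theorem~\ref{thm:maxcut-hardcore}); your ferromagnetic path is the same construction up to flipping the roles of $W^+$ and $W^-$ in alternate copies, and your final independence-plus-TV-projection step is the paper's (the paper distinguishes the two ends by the phase of the whole end copy rather than a single vertex marginal, which is harmless since each copy has diameter $O(\log n)$ and all cross distances are $(1-o(1))\Diam(\mathcal{G})$).

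The genuine gap is precisely the step you yourself defer: proving that the two globally coherent phase vectors carry probability $1-o(1)$, each at least a constant. That is the content of the theorem, not a routine add-on, and your proposal supplies neither the gluing nor the Peierls estimate. The paper closes it quantitatively (Proposition~\ref{prop:hardcore-non-unique}, Lemma~\ref{lem:cut-hardcore}): consecutive copies are joined by $k=\Theta(m\log m)$ parallel terminal-to-terminal edges, and the gadget's phase-conditional near-product structure (terminal spins within $(1\pm\delta)$ of i.i.d.\ Bernoulli $q^{\pm}$ given the phase) lets one evaluate the weight of any phase vector $\mathcal{Y}$ as $\Gamma^{k|E(H)|}(\Theta/\Gamma)^{k\,Cut(\mathcal{Y})}$ up to $(1\pm\delta)^{O(m)}$ factors, with $\Theta=(1-q^+q^-)^2>\Gamma=\bigl(1-(q^+)^2\bigr)\bigl(1-(q^-)^2\bigr)$ exactly because $\lambda=1>\lambda_c(\Delta)$ for $\Delta\ge 6$. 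Choosing $k$ so that $(\Theta/\Gamma)^{k}\ge 4^{m}$ makes each ``domain wall'' cost a factor that beats the $2^{m}$ union bound over phase vectors, and the symmetry of the construction gives each coherent vector probability $\ge \frac{1}{2}-o(1)$. Two points in your sketch would fail without this: a gluing with only $O(1)$ connector edges per junction does not make domain walls exponentially unlikely against the $2^{m}$ entropy of phase vectors, and the per-wall weight computation is not justified without the phase-conditional almost-independence of the terminals (long-range correlations inside a copy could otherwise couple the junctions). Supplying these two ingredients turns your outline into the paper's proof; as written, the theorem's heart is assumed rather than proved.
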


The lower bound is proved by a now fairly well-understood reduction from maximum cut to sampling independent sets when $\Delta\ge 6$~\cite{sly2010computational, sly2014counting, galanis2016inapproximability}. Specifically, we show that when $\Delta\ge 6$ there are infinitely many graphs $G(V,E)$ such that if one can sample a nearly uniform independent set in $G(V,E)$, then one can also sample an almost uniform maximum cut in an even cycle of size $|V|^{\Omega(1)}$, which is necessarily a global task because of the long-range correlation.

Theorem~\ref{main-theorem-diameter-lower-bound} strongly separates sampling from labeling problems for distributed computing:
\begin{itemize}
\item In the \LOCAL{} model it is trivial to construct an independent set (because $\emptyset$ is an independent set). In contrast, Theorem~\ref{main-theorem-diameter-lower-bound} says that sampling a uniform independent set is very much a global task for graphs with maximum degree $\Delta\ge 6$.
\item In the \LOCAL{} model any labeling problem would be trivial once the network structure $G$ is known to each vertex. In contrast, the sampling lower bound in Theorem~\ref{main-theorem-diameter-lower-bound} still holds even when each vertex is aware of $G$. Unlike labeling whose hardness is due to the locality of information, for sampling the hardness is solely due to the locality of randomness.
\item A breakthrough of Ghaffari, Kuhn and Maus~\cite{ghaffari2016complexity} shows that any labeling problem that can be solved sequentially with local information  admits a randomized protocol within $O(\mathrm{polylog}(n))$ rounds in the \LOCAL{} model. In contrast, for sampling we have an $\Omega(\Diam)$ randomized lower bound for graphs with $n^{\Omega(1)}$ diameter.
\end{itemize}

\subsection{Related work}
The topic of sequential MCMC (Markov chain Monte Carlo) sampling is extensively studied. The study of sampling proper $q$-colorings was initiated by the seminal works of Jerrum~\cite{jerrum1995very} and independently of Salas and Sokal~\cite{salas1997absence}. So far the best rapid mixing condition for general bounded-degree graphs is $q \ge \frac{11}{6}\Delta$ due to Vigoda~\cite{vigoda2000improved}. See~\cite{frieze2007survey} for an excellent survey.

The chromatic-scheduler-based parallelization of the Glauber dynamics chain was studied in~\cite{gonzalez2011parallel}. This parallel chain is in fact a special case of systematic scan for Glauber dynamics~\cite{dyer2006dobrushin,dyer2006systematic,hayes2006simple}, in which the variables are updated according to a fixed order. 

Empirical studies showed that sometimes an \emph{ad hoc} ``Hogwild!'' parallelization of sequential sampler might work well in practice~\cite{recht2011hogwild} and  the mixing results assuming bounded  asynchrony were given in~\cite{sa2016ensuring, johnson2013analyzing}.

A sampling algorithm based on the Lov\'{a}sz local lemma is given in~\cite{guo2016uniform}. When sampling from the hardcore model with $\lambda<\frac{1}{2\sqrt{\mathrm{e}}\Delta-1}$ on a graph of maximum degree $\Delta$, this sampling algorithm can be implemented in the \LOCAL{} model which runs in $O(\log n)$ rounds.

A problem related to the local sampling is the \concept{finitary coloring}~\cite{holroyd2014finitary}, in which a random feasible solution is sampled according to an unconstrained distribution as long as the distribution is over feasible solutions, rather than a specific distribution such as the Gibbs distribution.
Therefore, the nature of this problem is still labeling rather than sampling.

Our algorithms are Markov chains which randomly walk over the solution space. A related notion is the \concept{distributed random walks}~\cite{das2013distributed}, which walk over the network.

Our \LocalMetropolis{} chain should be distinguished from the parallel Metropolis-Hastings algorithm~\cite{calderhead2014general} or the parallel tempering~\cite{swendsen1986replica}, in which the sampling algorithms makes $N$ proposals or runs $N$ copies of the system in parallel for a suitably large $N$, in order to improve the dynamic properties of the Monte Carlo simulation.

\paragraph{Organization of the paper}
The models and preliminaries are introduced in Section~\ref{sec:model}. The \LubyGlauber{} algorithm is introduced in Section~\ref{sec:LubyGlauber}. The \LocalMetropolis{} algorithm is introduced in Section~\ref{sec:LocalMetropolis}. And the lower bounds are proved in Section~\ref{sec:lower-bound}.

\section{Models and Preliminaries}\label{sec:model}
\subsection{The \LOCAL{} model}

We assume Linial's \LOCAL{} model~\cite{naor1995can, peleg2000distributed} for distributed computation, which is as described in Section~\ref{sec:intro}. 
We further allow each node in the network $G(V,E)$ to be aware of upper bounds of $\Delta$ and $\log n$, where $n=|V|$ is the number of nodes.
This information is accessed only because the running time of the Monte Carlo algorithms may depend on them.

\subsection{Markov random field and local CSP}\label{sec:model-MRF}
The \concept{Markov random field (MRF)}, or \concept{spin system}, is a well studied stochastic model in probability theory and statistical physics.
Given a graph $G(V,E)$ and a set of \concept{spin states} $[q]=\{1,2,\ldots,q\}$ for a finite $q\ge 2$, a \concept{configuration} $\sigma\in[q]^V$ assigns each vertex one of the $q$ {spin states}. For each edge $e\in E$ there is a non-negative $q\times q$ symmetric matrix $A_e\in\mathbb{R}_{\ge 0}^{q\times q}$ associated with $e$, called the \concept{edge activity}; and for each vertex $v\in V$ there is a non-negative $q$-dimensional vector $b_v \in \mathbb{R}_{\geq 0}^q$ associated with $v$, called the \concept{vertex activity}. 
Then each {configuration} $\sigma\in[q]^V$ is assigned a weight $w(\sigma)$ which is as defined in~\eqref{eq:MRF-weight}.

This gives rise to a natural probability distribution $\mu$, called the \concept{Gibbs distribution},  over all configurations in the sample space $\Omega=[q]^V$ proportional to their weights, such that
$\mu(\sigma) = {w(\sigma)}/{Z}$
for each $\sigma\in\Omega$, where $Z=\sum_{\sigma\in\Omega}w(\sigma)$ is the normalizing factor. 
A configuration $\sigma\in\Omega$ is \concept{feasible} if $\mu(\sigma)>0$.

Several natural joint distributions can be expressed as MRFs:
\begin{itemize}
\item \textbf{Independent sets / vertex covers:} When $q=2$, all $A_e=\begin{bmatrix}1 & 1 \\ 1 & 0\end{bmatrix}$ and all $b_v=\begin{bmatrix}1  \\ 1 \end{bmatrix}$, each feasible configuration corresponds to an independent set (or vertex cover, if the other spin state indicates the set) in $G$, and the Gibbs distribution $\mu$ is the uniform distribution over independent sets (or vertex covers) in $G$. When $b_v=\begin{bmatrix}1 \\ \lambda  \end{bmatrix}$ for some parameter $\lambda>0$, this is the \concept{hardcore model} from statistical physics.

\item \textbf{Colorings and list colorings:} When every $A_e$ has $A_e(i,i)=0$ and $A_e(i,j)=1$ if $i\neq j$, and every $b_v$ is the all-1 vector, the Gibbs distribution $\mu$ becomes the uniform distribution over proper $q$-colorings of graph $G$. For \concept{list colorings}, each vertex $v\in V$ can only use the colors from its list $L_v\subseteq[q]$ of available colors. Then we can let each $b_v$ be the indicator vector for the list $L_v$ and $A_e$'s are the same as for proper $q$-colorings, so that the Gibbs distribution is the uniform distribution over proper list colorings.

\item\textbf{Physical model:} The proper $q$-coloring is a special case of the \concept{Potts model} in statistical physics, in which each $A_e$ has $A_e(i,i)=\beta$ for some parameter $\beta>0$ and $A_e(i,j)=1$ if $i\neq j$. When further $q=2$, the model becomes the \concept{Ising model}.
\end{itemize}

The model of MRF can be further generalized to allow multivariate asymmetric constraints, by which gives us the  \concept{weighted CSPs}, also known as the \concept{factor graphs}. 
In this model, we have a collection $\mathcal{C}$ of \concept{constraints} $c=(f_c, S_c)$ where each $f_c:[q]^{|S_c|}\to\mathbb{R}_{\ge0}$ is a \concept{constraint function} with \concept{scope} $S_c\subseteq V$. Each configuration $\sigma\in[q]^V$ is assigned a weight:
\[
w(\sigma) = \prod_{c=(f_c,S_c)\in\mathcal{C}}f_c(\sigma|_{S_c}),
\]
where $\sigma|_{S_c}$ represents the restriction of $\sigma$ on $S_c$. And the Gibbs distribution $\mu$ over all configurations in $\Omega=[q]^V$ is defined in the same way proportional to the weights. In particular, when $f_c$'s are Boolean-valued functions, the Gibbs distribution $\mu$ is the uniform distribution over CSP solutions.

A constraint $c=(f_c,S_c)$ is said to be \concept{local} with respect to network $G$ if the diameter of the scope $S_c$ in network $G$ is bounded by a constant. Local CSPs are expressive, for example:
\begin{itemize}
\item \textbf{Dominating sets:} They can be expressed by having a ``cover'' constraint on each inclusive neighborhood $\Gamma^+(v)$ which constrains that at least one vertex from $\Gamma^+(v)$ is chosen.
\item \textbf{Maximal independent sets (MISs):} An MIS is a dominating independent set.
\end{itemize}

Clearly, the MRF is a special class of weighted local CSPs, defined by unary and binary symmetric local constraints with respect to $G$.

\subsection{Local Sampling}
The local sampling problem is defined as follows. Let $G(V,E)$ be a network. Given an MRF defined on $G$ (or more generally a weighted CSP that is local with respect to $G$), where the specifications of the local constraints are given as private inputs to the involved processors,
for any $\epsilon>0$ upon termination each processor $v\in V$ outputs a random variable $X_v$ such that the total variation distance between the distribution $\nu$ of the random vector $X=(X_v)_{v\in V}$ and the Gibbs distribution $\mu$ is bounded as $\DTV{\mu}{\nu}\le \epsilon$, where the total variation distance between two distributions $\mu,\nu$ over $\Omega=[q]^V$ is defined as
\[
\DTV{\mu}{\nu}=\sum_{\sigma\in\Omega} \frac{1}{2}|\mu(\sigma)-\nu(\sigma)|=\max_{A\subseteq\Omega}|\mu(A)-\nu(A)|.
\]

\subsection{Mixing rate}
Our algorithms are given as Markov chains. Given an \concept{irreducible} and \concept{aperiodic} Markov chain $X^{(0)},X^{(1)},\ldots\in\Omega$, for any $\sigma\in\Omega$  let $\pi^{(t)}_{\sigma}$ denote the distribution of $X^{(t)}$ conditioning on that $X^{(0)}=\sigma$. For $\epsilon>0$ the \concept{mixing rate} $\tau(\epsilon)$ is defined as
\begin{align*}
\tau(\epsilon)=\max\limits_{\sigma\in\Omega}{\min{\left\{t: \DTV{\pi^{(t)}_\sigma}{\pi}\le\epsilon\right\}}},
\end{align*}

where $\pi$ is the \concept{stationary distribution} for the chain. For formal definitions of these notions for Markov chain, we refer to a standard textbook of the subject~\cite{levin2009markov}. Informally, irreducibility and aperiodicity guarantees that $X^{(t)}$ converges to the unique stationary distribution $\pi$ as $t\to\infty$, and the mixing rate $\tau(\epsilon)$ tells us how fast it converges.

\paragraph{Notations.}
Given a graph $G(V,E)$, we denote by $d_v=\deg(v)$ the degree of $v$ in $G$, $\Delta=\Delta_G$ the maximum degree of $G$, $\Diam=\Diam(G)$ the diameter of $G$, and $\dist(u,v)=\dist_G(u,v)$ the shortest path distance between vertices $u$ and $v$ in $G$.

We also denote by $\Gamma(v)=\{u\mid uv\in E\}$ the neighborhood of $v$, and $\Gamma^+(v)=\Gamma(v)\cup\{v\}$ the inclusive neighborhood. Finally we write $B_r(v)=\{u\mid \dist(u,v)\le r\}$ for the $r$-ball centered at $v$.

\section{The \LubyGlauber{} Algorithm}\label{sec:LubyGlauber}
In this section, we analyze a generic scheme for parallelizing Glauber dynamics, a classic sequential Markov chain for sampling from Gibbs distributions.

We assume a Markov random field (MRF) defined on the network $G(V,E)$, with edge activities $\boldsymbol{A}=\{A_e\}_{e\in E}$ and vertex activities $\boldsymbol{b}=\{b_v\}_{v\in V}$, which specifies a Gibbs distribution $\mu$ over $\Omega=[q]^V$.
The \concept{single-site heat-bath Glauber dynamics}, or simply the \concept{Glauber dynamics}, is a well known Markov chain for sampling from the Gibbs distribution $\mu$. Starting from an arbitrary initial configuration $X\in[q]^V$, at each step the chain does the followings:
\begin{itemize}
\item sample a vertex $v\in V$ uniformly at random;
\item resample the value of $X_v$ according to the marginal distribution induced by $\mu$ at vertex $v$ conditioning on the current spin states of $v$'s neighborhood.
\end{itemize}
It is well known (see~\cite{levin2009markov}) that the Glauber dynamics is a reversible Markov chain whose stationary distribution is the Gibbs distribution $\mu$.

Formally, supposed that $\sigma\in[q]^V$ is sampled from $\mu$, for any $v\in V$, $S\subseteq V$ and $\tau_S\in[q]^S$,  the \concept{marginal distribution} at vertex $v$ conditioning on  $\tau_S$, denoted as ${\mu}_v(\cdot\mid \tau_S)$, is defined as
\begin{align*}
\forall c\in[q],\quad {\mu}_v(c\mid \tau_S)=\Pr[\sigma_v=c\mid \sigma_S=\tau_S].
\end{align*}
In the Glauber dynamics, $X_v$ is resampled according to the marginal distribution $\mu_v(\cdot\mid X_{\Gamma(v)})$. Here $X_{\Gamma(v)}$ represents the current spin states of $v$'s neighborhood $\Gamma(v)$. For Markov random field, this marginal distribution can be computed as
\begin{align}\label{eq:marginal-distribution}
\forall c\in[q],\quad{\mu}_v(c\mid X_{\Gamma(v)})=\frac{b_v(c)\prod_{u\in\Gamma(v)}A_{uv}(c,X_u)}{\sum_{a\in[q]}b_v(a)\prod_{u\in\Gamma(v)}A_{uv}(a,X_u)}.
\end{align}

For example, when the MRF is the proper $q$-coloring, this is just the uniform distribution over available colors in $[q]$ which are not used by $v$'s neighbors.
For the Glauber dynamics to work, it is common to assume that the sum $\sum_{a\in[q]}b_v(a)\prod_{u\in\Gamma(v)}A_{uv}(a,X_u)$ is always positive, so that the marginal distributions are well-defined.\footnote{This property holds automatically for feasible configurations $X$ with $\mu(X)>0$, and is only needed when the Glauber dynamics is allowed to start from an infeasible configuration. For specific MRF, such as proper $q$-coloring, this property is guaranteed by the ``uniqueness condition'' $q\ge \Delta+1$.}

A generic scheme for parallelizing the Glauber dynamics is that at each step, instead of updating one vertex, the chain updates a group of ``non-interfering'' vertices in parallel, as follows:
\begin{itemize}
\item independently sample a random independent set $I$ in $G$;
\item for each $v\in I$, resample $X_v$  in parallel according to the marginal distribution $\mu_v(\cdot\mid X_{\Gamma(v)})$.
\end{itemize}
This can be seen as a relaxation of the chromatic-based scheduler~\cite{gonzalez2011parallel} and systematic scans~\cite{dyer2006systematic}.

A convenient way for generating a random independent set in a distributed fashion is the ``Luby step''  in Luby's algorithm for distributed MIS~\cite{alon1986fast,luby1986simple}: each vertex samples a uniform and independent ID from the interval $[0,1]$ (which can be discretized with $O(\log n)$ bits) and the vertices $v$ who are locally maximal among the inclusive neighborhood $\Gamma^+(v)$ are selected into the independent set~$I$.

The resulting algorithm is called \LubyGlauber{}, whose pseudocode is given in Algorithm~\ref{LubyGlauber}.

\begin{algorithm}
\SetKwInOut{Input}{Input}
\Input{Vertex $v\in V$ receives $\{A_{uv}\}_{u\in\Gamma(v)}$ and $b_v$ as input.}
initialize $X_v$ to an arbitrary value in $[q]$\;
\For{$t$ = 1 through $T$}{
    sample a real $\beta_v \in [0,1]$ uniformly and independently\;
    \If{$\beta_v > \max\{\beta_u\mid u \in \Gamma(v)\}$}{
      resample $X_v$ according to marginal distribution ${\mu}_v(\cdot\mid X_{\Gamma(v)})$\;
    }
}
\Return{$X_v$\;}
\caption{Pseudocode for vertex $v\in V$ in \LubyGlauber{} algorithm}\label{LubyGlauber}
\end{algorithm}
According to the definition of marginal distribution~\eqref{eq:marginal-distribution}, resampling $X_v$ can be done locally by exchanging neighbors' current spin states.
After $T$ iterations, where $T$ is a threshold determined for specific Markov random field, the algorithm terminates and outputs the current $\boldsymbol{X}=(X_v)_{v\in V}$.  

\begin{remark}
The \LubyGlauber{} algorithm can be easily extended to sample from weighted CSPs defined by local constraints $c=(f_c,S_c)\in\mathcal{C}$, by simply overriding the definition of neighborhood as $\Gamma(v)=\{u\neq v\mid \exists c\in\mathcal{C}, \{u,v\}\subseteq S_c\}$, thus $\Gamma(v)$ is the neighborhood of $v$ in the hypergraph where $S_c$'s are the hyperedges and now $I$ is the strongly independent set of this hypergraph.
\end{remark}

\subsection{Mixing of \LubyGlauber{}}

Let $\mu_{\LG}$ denote the distribution of $\boldsymbol{X}$ returned by the algorithm upon termination. As in the case of single-site Glauber dynamics, we assume that the marginal distribution~\eqref{eq:marginal-distribution} is always well-defined, and the single-site Glauber dynamics is irreducible among all feasible configurations. The following proposition is easy to obtain.

\begin{proposition}\label{prop:LubyGlauber-convergence}
The Markov chain \LubyGlauber{} is reversible and has stationary distribution $\mu$. Furthermore, under the above assumption, $\DTV{\mu_{\LG}}{\mu}$ converges to 0 as $T\to\infty$.
\end{proposition}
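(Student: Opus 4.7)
The plan is to invoke the standard convergence theorem for finite Markov chains: if the transition kernel $P$ of \LubyGlauber{} is reversible with respect to $\mu$, irreducible on the set $\Omega^\ast = \{\sigma : \mu(\sigma) > 0\}$ of feasible configurations, and aperiodic, then $\DTV{\pi^{(T)}_X}{\mu} \to 0$ as $T \to \infty$ from any feasible starting state $X$. Hence the task splits into verifying these three properties; reversibility also yields the stationary-distribution claim.

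The key structural observation is that, conditioned on the random independent set $I$ produced by the Luby step, the update in \LubyGlauber{} is exactly a block heat-bath on $I$. Since $\Gamma(v) \subseteq V \setminus I$ for every $v \in I$, the weight factorization~\eqref{eq:MRF-weight} gives
\[
\mu(\sigma_I \mid \sigma_{V \setminus I}) \;\propto\; \prod_{v \in I} b_v(\sigma_v) \prod_{u \in \Gamma(v)} A_{uv}(\sigma_u, \sigma_v),
\]
and the factors for distinct $v \in I$ involve disjoint unknowns, so this product further factorizes as $\prod_{v \in I} \mu_v(\sigma_v \mid \sigma_{\Gamma(v)})$ after normalization, by~\eqref{eq:marginal-distribution}. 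Therefore the conditional kernel $P_I$ keeps $X_{V \setminus I}$ fixed and resamples $X_I$ jointly from $\mu(\cdot \mid X_{V \setminus I})$. Detailed balance for $P_I$ then follows immediately from the factorization $\mu(X) = \mu(X_{V \setminus I})\, \mu(X_I \mid X_{V \setminus I})$ together with the equality $X_{V \setminus I} = Y_{V \setminus I}$ forced by any transition $X \to Y$ under $P_I$. Because the law of $I$ is independent of the current state, $P = \sum_I \Pr[I]\,P_I$ is a convex combination of kernels reversible with respect to $\mu$, and hence itself reversible with respect to $\mu$.

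Aperiodicity is clear: for any feasible $X$, $P(X, X) > 0$ since each marginal $\mu_v(X_v \mid X_{\Gamma(v)})$ is strictly positive at the current value (it is a normalized distribution with support containing $X_v$ whenever the configuration is feasible). For irreducibility, I would reduce to the assumed irreducibility of the single-site Glauber dynamics: given feasible $X, Y$, connect them by a single-site Glauber trajectory $X = X^0 \to \cdots \to X^k = Y$, and simulate each single-site resampling at vertex $v_i$ by a \LubyGlauber{} transition in which $I = \{v_i\}$. The main (and only nontrivial) step is thus to verify $\Pr[I = \{v\}] > 0$ under the Luby step for every $v \in V$; I would establish this by an explicit positive-measure event, letting the $\beta$-values strictly decrease along a BFS tree rooted at $v$, so that $v$ is the unique local maximum because every other vertex has a parent neighbor with a strictly larger $\beta$. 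Combining reversibility, aperiodicity, and irreducibility via the convergence theorem then yields $\DTV{\mu_{\LG}}{\mu} \to 0$ as $T \to \infty$.
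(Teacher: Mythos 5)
Your reversibility argument is correct and takes a genuinely different (and arguably cleaner) route than the paper: you observe that, conditioned on $I$, the update is a heat-bath move on the block $I$ (valid because $I$ is independent, so the conditional law of $\sigma_I$ given $\sigma_{V\setminus I}$ factorizes into the single-vertex marginals~\eqref{eq:marginal-distribution}), and then average over the state-independent law of $I$ to get a mixture of $\mu$-reversible kernels; the paper instead verifies the detailed balance equation directly by computing the ratio of transition probabilities conditioned on each $I$ containing the disagreement set $D$ and summing. Your aperiodicity argument matches the paper. For irreducibility you diverge again: the paper only needs $\Pr[v\in I]>0$ and simulates a single-site Glauber move by letting $v$ update to the new value while every other resampled vertex of $I$ re-proposes its current value (which has positive probability at a feasible configuration), whereas you insist on the stronger event $I=\{v\}$. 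Your BFS construction does give $\Pr[I=\{v\}]>0$, but only when $G$ is connected — in a disconnected graph every component always contributes a local maximum to $I$, so $I=\{v\}$ has probability zero — and it is also tied to the specific Luby step, so the paper's weaker requirement is the more robust one (it is what allows the paper's remark that any independent-set sampler with $\Pr[v\in I]>0$ works).

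There is, however, a genuine gap: you prove convergence only ``from any feasible starting state'', but Algorithm~\ref{LubyGlauber} initializes $X$ to an \emph{arbitrary} configuration in $[q]^V$, possibly infeasible, and $\mu_{\LG}$ is the law of the output from such a start; the proposition therefore asserts convergence from infeasible initial configurations as well. This is exactly where the standing assumption that the marginal distribution~\eqref{eq:marginal-distribution} is always well-defined is needed (at feasible configurations it holds automatically, as the paper's footnote notes), and your proof never uses it. The missing step is an absorption argument: the chain never moves from a feasible to an infeasible configuration; once a vertex is resampled it satisfies all its incident constraints and continues to do so; and since every vertex has positive probability of being resampled in each round, almost surely all vertices are eventually resampled, after which the chain sits in the feasible part of $\Omega$ and your feasible-start convergence argument takes over. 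Without this step the proposition as stated is not established; adding it completes your proof.
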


\begin{proof}
We prove this for a more general family of Markov chains, where the ``Luby step'' is replaced by an arbitrary way of independently sampling a random independent set $I$, as long as $\Pr[v\in I]>0$ for every vertex $v\in V$.

Let $\Omega=[q]^V$ and $P\in\mathbb{R}^{|\Omega|\times|\Omega|}_{\ge0}$ denote the transition matrix for the \LubyGlauber{} chain. We first show that the chain is reversible and $\mu$ is stationary. Specifically, this means to verify the \concept{detailed balance equation}:
\[
\mu(X)P(X,Y)=\mu(Y)P(Y,X),
\]
for all configurations $X,Y\in\Omega=[q]^V$. 

If both $X$ and $Y$ are infeasible, then $\mu(X)=\mu(Y)=0$ and the detailed balance equation holds trivially. If $X$ is feasible and $Y$ is not then $\mu(Y)=0$ and meanwhile since the chain never moves from a feasible configuration to an infeasible one, we have $P(X,Y)=0$ so the detailed balance equation is also satisfied.

It remains to verify the detailed balance equation when both $X$ and $Y$ are feasible.
Let $D=\{v\in V\mid X_v\ne Y_v\}$ be the set of disagreeing vertices. If $D$ is not an independent set, then $P(X,Y)=P(Y,X)=0$ and the detailed balance equation holds. 
Suppose that $D$ is an independent set. For any independent set $I\supseteq D$, we denote by $\Pr[X\rightarrow Y\mid I]$ the probability that within an iteration the chain moves from $X$ to $Y$ conditioning on $I$ being the independent set sampled in the first step. Therefore,
\begin{align*}
\frac{\Pr[X\rightarrow Y\mid I]}{\Pr[Y\rightarrow X\mid I]}=\frac{\prod_{v\in D}{b_v(Y_v)\prod_{u\in\Gamma(v)}{A_{uv}(Y_u,Y_v)}}}{\prod_{v\in D}{b_v(X_v)\prod_{u\in\Gamma(v)}{A_{uv}(X_u,X_v)}}}=\frac{\mu(Y)}{\mu(X)}.
\end{align*}
By the law of total probability,
\begin{align*}
\frac{P(X,Y)}{P(Y,X)}=\frac{\sum_{I\supseteq D}{\Pr(I)\Pr[X\rightarrow Y\mid I]}}{\sum_{I\supseteq D}{\Pr(I)\Pr[Y\rightarrow X\mid I]}}=\frac{\prod_{v\in D}{b_v(Y_v)\prod_{u\in\Gamma(v)}{A_{uv}(Y_u,Y_v)}}}{\prod_{v\in D}{b_v(X_v)\prod_{u\in\Gamma(v)}{A_{uv}(X_u,X_v)}}}=\frac{\mu(Y)}{\mu(X)}.
\end{align*}
Thus, the chain is reversible with respect to the Gibbs distribution $\mu$.

Next, observe that the chain will never move from a feasible configuration to an infeasible one. 
Moreover, due to the assumption that the marginal distribution~\eqref{eq:marginal-distribution} is always well-defined, once a vertex $v$ has been resampled, it will satisfy all local constraints. Therefore, the chain will be feasible once every vertex has been resampled. Since every vertex $v$ has positive probability $\Pr[v\in I]$ to be resampled, the chain is absorbing to feasible configurations.

It is easy to observe that every feasible configuration is aperiodic, since it has self-loop transition, i.e. $P(X,X) > 0$ for all feasible $X$. 
And any move  $X \rightarrow Y$ between feasible configurations $X,Y\in\Omega$ in the single-site Glauber dynamics with vertex $v$ being updated, can be simulated by a move in the \LubyGlauber{} chain by first sampling an independent set $I\ni v$ (which is always possible since $\Pr[v\in I]>0$) and then updating $v$ according to $X\to Y$ and meanwhile keeping all $v\in I\setminus\{v\}$ unchanged (which is always possible for feasible $X$). Provided the irreducibility of the single-site Glauber dynamics among all feasible configurations, the \LubyGlauber{} chain is also irreducible among all feasible configurations. Combining with the absorption towards feasible configurations and their aperiodicity,
due to the Markov chain convergence theorem~\cite{levin2009markov}, the total variation distance $\DTV{\mu_{\LG}}{\mu}$ converges to $0$ as $T \rightarrow \infty$.
\end{proof}

We then apply a standard coupling argument from~\cite{hayes2006simple, dyer2006dobrushin} to analyze the mixing rate of the \LubyGlauber{} chain.
The following notions are essential to the mixing of Glauber dynamics.
\begin{definition}[influence matrix]\label{Def:influence-matrix}
For  $v\in V$ and $\sigma\in[q]^V$, we write $\mu_v^\sigma=\mu_v(\cdot\mid \sigma_{\Gamma(v)})$ for the marginal distribution of the value of $v$, for configurations sampled from $\mu$ conditioning on agreeing with $\sigma$ at all neighbors of $v$.
For vertices $i, j\in V$, the \concept{influence} of $j$ on $i$ is defined as
\[
\rho_{i,j}:=\max\limits_{(\sigma,\tau)\in S_j}{d_{\text{TV}}(\mu_i^\sigma,\mu_i^\tau),}
\]
where $S_j$ denotes the set of all pairs of feasible configurations $\sigma,\tau\in[q]^V$ such that $\sigma$ and $\tau$ agree on all vertices except $j$. Let $R=(\rho_{i,j})_{i,j\in V}$ be the $n\times n$ \concept{influence matrix}.
\end{definition}
\begin{definition}[Dobrushin's condition]
Let $\boldsymbol{\alpha}$ be the \concept{total influence} {on} a vertex, defined by 
\begin{align*}
\boldsymbol{\alpha}:=\max_{i\in V}{\sum_{j\in V}{\rho_{i,j}}}.
\end{align*}
 We say that the Dobrushin's condition is satisfied if $\boldsymbol{\alpha}<1$.
\end{definition}
It is a fundamental result that the Dobrushin's condition is sufficient for the rapid mixing of Glauber dynamics~\cite{dobrushin1970prescribing,salas1997absence, hayes2006simple}, with a mixing rate of $\tau(\epsilon)=O\left(\frac{n}{1-\boldsymbol{\alpha}}\log\left(\frac{n}{\epsilon}\right)\right)$. Here we show that the \LubyGlauber{} chain is essentially a parallel speed up of the Glauber dynamics by a factor of $\Theta(\frac{n}{\Delta})$.

\begin{theorem}\label{thm:LubyGlauber-mixing-rate}
Under the same assumption as Proposition~\ref{prop:LubyGlauber-convergence}, if the total influence $\boldsymbol{\alpha}<1$, then the mixing rate of the  \LubyGlauber{} chain is $\tau(\epsilon)= O\left(\frac{\Delta}{1-\boldsymbol{\alpha}}\log\left(\frac{n}{\epsilon}\right)\right)$.
\end{theorem}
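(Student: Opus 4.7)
The plan is a path-coupling argument lifting the classical Dobrushin-condition analysis of single-site Glauber~\cite{hayes2006simple,dyer2006dobrushin} to the parallel updates of the \LubyGlauber{} chain. I would couple two copies $(X^{t}, Y^{t})$ of the chain by sharing the random reals $\{\beta_v\}_{v \in V}$ across both chains, so in every step they realize the very same independent set $I$, and then for each $v \in I$ use the optimal coupling of the two conditional marginals $\mu_v^{X^{t}}$ and $\mu_v^{Y^{t}}$.

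By the path coupling theorem it suffices to bound the one-step contraction for pairs $(X, Y)$ at Hamming distance $1$; say they disagree only at the vertex $v_0$. A case analysis of one coupled step yields three possibilities: (a) if $v_0 \in I$, then $X_{\Gamma(v_0)} = Y_{\Gamma(v_0)}$, so $\mu_{v_0}^X = \mu_{v_0}^Y$ and the coupling eliminates the disagreement at $v_0$; (b) for any $u \in \Gamma(v_0) \cap I$, a new disagreement at $u$ is created with probability at most $\DTV{\mu_u^X}{\mu_u^Y} \leq \rho_{u, v_0}$; and (c) every other $v \in I$ has $X_{\Gamma(v)} = Y_{\Gamma(v)}$, so it couples perfectly. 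By the independence of $I$, events (a) and (b) are mutually exclusive.

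The quantitative ingredient is the marginal $\Pr[v \in I] = 1/(d_v+1)$ from the Luby step (the maximum among $d_v+1$ iid uniforms sits at $v$ with this probability). Using the plain Hamming distance $d(\cdot,\cdot)$ then yields the cumbersome bound
\[
\mathbf{E}[d(X',Y') - d(X,Y)] \leq -\frac{1}{d_{v_0}+1} + \sum_{u \in \Gamma(v_0)} \frac{\rho_{u, v_0}}{d_u + 1},
\]
whose degree-dependent coefficients resist a clean comparison to $\boldsymbol{\alpha}$. The remedy is the weighted Hamming metric
\[
\phi(X, Y) = \sum_{v \in V} (d_v + 1)\, \mathbf{1}[X_v \neq Y_v],
\]
chosen precisely so that the weights cancel the Luby-step marginals. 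A direct calculation then gives
\[
\mathbf{E}[\phi(X', Y')] - \phi(X, Y) \leq -1 + \sum_{u \in \Gamma(v_0)} \rho_{u, v_0} \leq -(1 - \boldsymbol{\alpha})
\]
by Dobrushin's condition, and since $\phi(X, Y) = d_{v_0} + 1 \leq \Delta + 1$ this is a multiplicative contraction by a factor of at least $1 - (1 - \boldsymbol{\alpha})/(\Delta + 1)$.

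Finally, applying the path-coupling theorem lifts the contraction to all pairs, and combined with the weighted diameter bound $\mathrm{diam}_\phi(\Omega) \leq n(\Delta+1)$, the coupling time estimate gives $\tau(\epsilon) = O(\Delta \log(n/\epsilon)/(1-\boldsymbol{\alpha}))$ as claimed (the extra $\log(\Delta+1)$ from the diameter is absorbed into the $\log n$ factor). The main subtlety is the choice of the weighted metric --- it is exactly the reciprocal of the Luby-step marginals, which makes the contribution of each $v \in I$ to $\mathbf{E}[\phi']$ uniform across degrees, and this is what turns the Dobrushin bound $\sum_u \rho_{u, v_0} \leq \boldsymbol{\alpha}$ into the contraction constant. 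I do not expect any further obstacle; the mutual exclusivity between (a) and (b) follows automatically from the independence of $I$, and the rest is routine path-coupling bookkeeping.
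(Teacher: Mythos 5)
Your coupling (sharing the Luby randomness so both chains update the same independent set, then optimally coupling each updated marginal) and the weighted metric are reasonable, but the step where you invoke Dobrushin's condition does not go through as written. In your one-step bound, a new disagreement at $u\in\Gamma(v_0)$ is charged $\rho_{u,v_0}$, so after the weights $d_v+1$ cancel the Luby marginals you need $\sum_{u\in\Gamma(v_0)}\rho_{u,v_0}\le\boldsymbol{\alpha}$. That is a \emph{column} sum of the influence matrix $R$ (the total influence \emph{of} the disagreeing vertex $v_0$ on others), whereas the hypothesis of the theorem, $\boldsymbol{\alpha}=\max_i\sum_j\rho_{i,j}<1$, bounds the \emph{row} sums (the total influence \emph{on} a vertex). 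These need not coincide for general MRFs (the influence matrix is not symmetric in general, e.g.\ for list colorings or irregular degrees), so your path-coupling contraction is actually established under the transpose condition $\max_j\sum_i\rho_{i,j}<1$, not the stated one. The paper sidesteps exactly this issue by not using path coupling: it runs a maximal one-step coupling from arbitrary initial pairs, tracks the whole vector $p^{(t)}_j=\Pr[X^{(t)}_j\neq Y^{(t)}_j]$, derives a componentwise inequality $\mathbf{p}^{(t+1)}\le M\mathbf{p}^{(t)}$ with $M$ built from $R$ and the update probabilities, and bounds $\lVert M\rVert_{\infty}\le 1-(1-\boldsymbol{\alpha})\gamma$ with $\gamma=\frac{1}{\Delta+1}$; the $\infty$-norm is precisely a maximum row sum, which is what matches the hypothesis.

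A second, more technical gap: the influences $\rho_{i,j}$ are defined as maxima over pairs of \emph{feasible} configurations, but path coupling over $\Omega=[q]^V$ forces you to bound $\DTV{\mu_u^X}{\mu_u^Y}$ for adjacent pairs that may be infeasible --- both because the chain is allowed to start from an infeasible configuration and because the geodesics in your pre-metric between feasible configurations pass through infeasible ones. For such pairs the bound $\DTV{\mu_u^X}{\mu_u^Y}\le\rho_{u,v_0}$ is unjustified. The paper handles this with a burn-in phase: after $T_1=O\left(\frac{1}{\gamma}\log\frac{n}{\epsilon}\right)$ rounds every vertex has been resampled with high probability, after which both chains are feasible forever, and only then is the contraction estimate applied for the remaining $T_2$ rounds. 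Your argument needs an analogous device (or an extension of the influence bounds to infeasible boundary conditions) before the contraction step can legitimately be invoked.
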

Consequently,
for any $\epsilon>0$ the \LubyGlauber{} algorithm can terminate within $O\left(\frac{\Delta}{1-\boldsymbol{\alpha}}\log\left(\frac{n}{\epsilon}\right)\right)$ rounds in the \LOCAL{} model and return an $\boldsymbol{X}\in[q]^V$ whose distribution $\mu_\LG$ is $\epsilon$-close to the Gibbs distribution $\mu$ in total variation distance.

\begin{remark}
In fact, Proposition~\ref{prop:LubyGlauber-convergence} and Theorem~\ref{thm:LubyGlauber-mixing-rate} hold for a more general family of Markov chains, where the ``Luby step'' could be  any subroutine which independently generates a random independent set $I$, as long as every vertex has positive probability to be selected into $I$.
In general, the mixing rate in Theorem~\ref{thm:LubyGlauber-mixing-rate} is in fact $\tau(\epsilon)= O\left(\frac{1}{(1-\boldsymbol{\alpha})\gamma}\log\left(\frac{n}{\epsilon}\right)\right)$ where $\gamma$ is a lower bound for the probability $\Pr[v\in I]$ for all $v\in V$.
\end{remark}

The following lemma is crucial for relating the mixing rate to the influence matrix. The lemma has been proved in various places~\cite{hayes2006simple, dyer2006dobrushin, sa2016ensuring}.

\begin{lemma}\label{boundexpectation}
Let $X$ and $Y$ be two random variables that take values over the feasible configurations in $\Omega=[q]^V$, then for any $i\in V$,
\begin{align*}
\mathop{\mathbf{E}}_{(X,Y)}\left[\DTV{\mu_i^X}{\mu_i^Y}\right]\le\sum_{k\in V}{\rho_{i,k}\Pr[X_k\ne Y_k]}.
\end{align*}
\end{lemma}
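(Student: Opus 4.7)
The plan is a path-style triangle inequality argument: for each realization $(x,y)$ of $(X,Y)$, I will bound the pointwise TV distance $\DTV{\mu_i^x}{\mu_i^y}$ by a sum of single-vertex influences $\rho_{i,k}$ over the disagreement set, and then average over the coupling.

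Concretely, fix a feasible pair $x,y\in[q]^V$ and let $D=\{k\in V: x_k\neq y_k\}$. I would enumerate $D=\{k_1,\ldots,k_m\}$ in an arbitrary order and define the interpolating sequence $z^{(0)}=x,z^{(1)},\ldots,z^{(m)}=y$ where $z^{(j)}$ agrees with $y$ on $\{k_1,\ldots,k_j\}$ and with $x$ elsewhere, so that consecutive $z^{(j-1)}$ and $z^{(j)}$ differ only at the single vertex $k_j$. The triangle inequality for total variation distance then gives
\[
\DTV{\mu_i^x}{\mu_i^y}\;\le\;\sum_{j=1}^{m}\DTV{\mu_i^{z^{(j-1)}}}{\mu_i^{z^{(j)}}}\;\le\;\sum_{k\in V}\rho_{i,k}\,\mathbf{1}\{x_k\neq y_k\},
\]
where the second inequality uses that $\mu_i^\sigma=\mu_i(\cdot\mid\sigma_{\Gamma(i)})$ depends only on $\sigma_{\Gamma(i)}$ (so the summand vanishes for $k_j\notin\Gamma(i)$) and that each remaining summand is bounded by $\rho_{i,k_j}$ via Definition~\ref{Def:influence-matrix}. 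Taking expectation over $(X,Y)$ and using $\mathbf{E}[\mathbf{1}\{X_k\neq Y_k\}]=\Pr[X_k\neq Y_k]$ yields the lemma by linearity.

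The one subtlety worth flagging, and the part that I expect to take the most care in the write-up, is that the intermediate configurations $z^{(j)}$ need not be globally feasible, whereas Definition~\ref{Def:influence-matrix} takes the supremum defining $\rho_{i,k}$ only over \emph{feasible} pairs in $S_{k}$. The resolution is to exploit the locality of $\mu_i^\sigma$: since it depends only on $\sigma_{\Gamma(i)}$, for each step $j$ with $k_j\in\Gamma(i)$ it suffices to exhibit a feasible pair $(\sigma^*,\tau^*)\in S_{k_j}$ with $\sigma^*_{\Gamma(i)}=z^{(j-1)}_{\Gamma(i)}$ and $\tau^*_{\Gamma(i)}=z^{(j)}_{\Gamma(i)}$, which then gives $\DTV{\mu_i^{z^{(j-1)}}}{\mu_i^{z^{(j)}}}=\DTV{\mu_i^{\sigma^*}}{\mu_i^{\tau^*}}\le\rho_{i,k_j}$. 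Such feasible completions exist in the models of interest in this paper (notably for proper $q$-coloring under $q\ge\Delta+1$, where any partial coloring on $\Gamma^+(i)$ extends to a proper global coloring); and the standing assumption that the denominator in~\eqref{eq:marginal-distribution} is always positive guarantees that every $\mu_i^{z^{(j)}}$ is a well-defined distribution so the path argument never stalls.
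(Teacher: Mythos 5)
Your proof is correct and follows essentially the same route as the paper's: interpolate from $X$ to $Y$ one disagreeing vertex at a time, apply the triangle inequality for total variation distance, bound each single-vertex step by $\rho_{i,k}$ via Definition~\ref{Def:influence-matrix}, and conclude by linearity of expectation. The feasibility subtlety you flag is genuine but is in fact glossed over in the paper's own proof, which simply asserts that the intermediate pair lies in $S_k$ without verifying feasibility of the interpolating configurations, so your locality-based patch (matching the restriction to $\Gamma(i)$ with a feasible pair) is a reasonable, if model-dependent, repair rather than a deviation in approach.
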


\begin{proof}
We enumerate $V$ as $V=\{1,2,\ldots, n\}$. For $0\le k\le n$, define $Z^{(k)}$ as that for each $j\in V$, $Z^{(k)}_{j}=X_j$ if $j>k$ and $Z^{(k)}_{j}=Y_j$ if $j\le k$.
In particular, $Z^{(0)}=X$ and $Z^{(n)}=Y$. 
Now, by triangle inequality,
\begin{align*}
\DTV{\mu_i^X}{\mu_i^Y} &=\DTV{\mu_i^{Z^{(0)}}}{\mu_i^{Z^{(n)}}}\le\sum\limits_{k=1}^{n}\DTV{\mu_i^{Z^{(k-1)}}}{\mu_i^{Z^{(k)}}}.
\end{align*}
Next, we note that $Z^{(k-1)}=Z^{(k)}$ if and only if $X_k=Y_k$. Therefore,
\begin{align*}
\DTV{\mu_i^X}{\mu_i^Y} 
&\le\sum\limits_{k=1}^{n}\mathbf{1}\{X_k\ne Y_k\}\DTV{\mu_i^{Z^{(k-1)}}}{\mu_i^{Z^{(k)}}}.
\end{align*}
Since $Z^{(k-1)}$ and $Z^{(k)}$ can only differ at vertex $k$, it follows that $(Z^{(k-1)},Z^{(k)})\in S_k$, and hence,
\begin{align*}
\DTV{\mu_i^X}{\mu_i^Y} 
&\le\sum\limits_{k=1}^{n}\mathbf{1}\{X_k\ne Y_k\}\max\limits_{(\sigma,\tau)\in S_k}\DTV{\mu_i^\sigma}{\mu_i^\tau}=\sum\limits_{k=1}^{n}{\rho_{i,k}\mathbf{1}\{X_k\ne Y_k\}}.
\end{align*}
By linearity of expectation,
\begin{align*}
\mathop{\mathbf{E}}_{(X,Y)}\left[\DTV{\mu_i^X}{\mu_i^Y}\right]\le \sum_{k\in V}{\rho_{i,k}\Pr[X_k\ne Y_k]}. &\qedhere
\end{align*}
\end{proof}

\begin{proof}[Proof of Theorem~\ref{thm:LubyGlauber-mixing-rate}:]
We are actually going to prove a stronger result: Denoted by $I$ the random independent set on which the resampling is executed, we write $\gamma_v=\Pr[v\in I]$ for each $v\in V$, and assume that for all $v\in V$, $\gamma_v\ge\gamma$ for some $\gamma>0$. Clearly, when $I$ is generated by the ``Luby step'', this holds for $\gamma=\frac{1}{\Delta+1}$. 
We are going to prove that $\tau(\epsilon)=O\left(\frac{1}{(1-\boldsymbol{\alpha})\gamma}\log\left(\frac{n}{\epsilon}\right)\right)$.

The proof follows the framework of Hayes~\cite{hayes2006simple}. 
We construct a coupling of the Markov chain $(X^{(t)}, Y^{(t)})$ such that the transition rules for $X^{(t)}\to X^{(t+1)}$ and $Y^{(t)}\to Y^{(t+1)}$ are the same as the \LubyGlauber{} chain. If $\Pr[X^{(T)}\neq Y^{(T)}\mid X^{(0)}=\sigma \wedge Y^{(0)}=\tau]\le\epsilon$ for any initial configurations $\sigma,\tau\in\Omega$, then by the coupling lemma for Markov chain~\cite{levin2009markov}, we have the mixing rate $\tau(\epsilon)\le T$. 

The coupling we are going to use is the maximal one-step coupling of the \LubyGlauber{} chain, which for every vertex $i\in V$ achieves that
\[
\Pr\left[X^{(t+1)}_i\ne Y^{(t+1)}_i\mid X^{(t)}, Y^{(t)}\right]
=\DTV{\mu_i^{X^{(t)}}}{\mu_i^{Y^{(t)}}},
\]
where $\mu_i^{X^{(t)}}$ and $\mu_i^{Y^{(t)}}$ are the marginal distributions as defined in Definition~\ref{Def:influence-matrix}. The existence of such coupling is guaranteed by the  coupling lemma.

Arbitrarily fix $\sigma,\tau\in\Omega=[q]^V$. For $t\ge 0$, define $(X^{(t)},Y^{(t)})\in\Omega^2$ by iterating a maximal one-step coupling of the \LubyGlauber{} chain, starting from initial condition $X^{(0)}=\sigma,Y^{(0)}=\tau$. Due to the well-defined-ness of marginal distribution~\eqref{eq:marginal-distribution}, we know that once all vertices have been resampled, the configuration will be feasible and will remain to be feasible in future.

Let $T_1$ be a positive integer and $\mathcal{F}$ denote the event all vertices have been resampled in chain $X$ and $Y$ in the first $T_1$ steps. By union bound, we have
\begin{align}
\Pr\left[\neg{\mathcal{F}}\right]\le2\sum\limits_{v\in V}{(1-\gamma_v)^{T_1}}\le2n(1-\gamma)^{T_1},\label{eq:T1}
\end{align}
Next, we assume that $X^{(t)},Y^{(t)}$ are both feasible for $t\ge T_1$. We define the vector $\mathbf{p}^{(t)}\in[0,1]^V$ as
\begin{align*}
\forall j\in V, \quad p^{(t)}_j:=\Pr\left[X^{(t)}_j\ne Y^{(t)}_j\right].
\end{align*}
By the definition of the \LubyGlauber{} chain, it holds for every $j\in V$ that
\begin{align}
p^{(t+1)}_j=(1-\gamma_j)p^{(t)}_j+\gamma_j\cdot \Pr\left[X^{(t+1)}_j\ne Y^{(t+1)}_j\mid j\in I\right].\label{eq:trans-LubyGlauber}
\end{align}
By the definition of maximal one-step coupling and Lemma~\ref{boundexpectation}, for $t\ge T_1$, for any $i\in V$,

\begin{align*}
&\Pr\left[X^{(t+1)}_i\ne Y^{(t+1)}_i\mid i\in I\right]\\
 =\, &\sum_{\sigma,\tau\in\Omega\atop \mu(\sigma),\mu(\tau)>0} \bigg( \Pr\left[X^{(t+1)}_i\ne Y^{(t+1)}_i\mid X^{(t)}=\sigma, Y^{(t)}=\tau\right]\cdot\Pr\left[X^{(t)}=\sigma\wedge Y^{(t)}=\tau\right]\bigg)\\
=\, & \sum\limits_{\sigma,\tau\in\Omega\atop\mu(\sigma),\mu(\tau)>0}{\DTV{\mu_i^{\sigma}}{\mu_i^{\tau}}\cdot\Pr\left[X^{(t)}=\sigma\wedge Y^{(t)}=\tau\right]}\\
=\, & \mathbf{E}\left[\DTV{\mu_i^{X^{(t)}}}{\mu_i^{Y^{(t)}}}\right] \le  \sum\limits_{k\in V}{\rho_{i,k}\cdot \Pr\left[X^{(t)}_k\ne Y^{(t)}_k\right]}.
\end{align*}

Combined with equality~\eqref{eq:trans-LubyGlauber}, for $t\ge T_1$ we have
\begin{align*}
\mathbf{p}^{(t+1)}\le M\mathbf{p}^{(t)},
\end{align*}
where matrix $M=(J-\Gamma)J+\Gamma R$, where $\Gamma$ is the $n\times n$ diagonal matrix with $\Gamma_{i,i}=\gamma_i$; $J$ is the $n\times n$ identity matrix; and $R=(\rho_{ij})$ is the influence matrix.
The $\infty$-norm of $M$ is bounded as
\begin{align*}
\lVert M\rVert_{\infty}&=\max\limits_{i\in V}{\sum\limits_{j\in V}{|M_{i,j}|}}\le\max\limits_{i\in V}{\left\{1-(1-\boldsymbol{\alpha})\gamma_i\right\}} \le1-(1-\boldsymbol{\alpha})\gamma.
\end{align*}
Let $T=T_1+T_2$. By induction, we obtain the component-wise inequality
\begin{align*}
\mathbf{p}^{(T)}\le M^{T_2}\mathbf{p}^{(T_1)}.
\end{align*}
Conditioning on that $X^{(T_1)}$ and $Y^{(T_1)}$ are both feasible, we have
\begin{align}
\Pr\left[X^{(T)}\ne Y^{(T)}\right]&\le\lVert\mathbf{p}^{(T)}\rVert_1 \qquad\text{by union bound}\notag\\
&\le n\lVert\mathbf{p}^{(T)}\rVert_{\infty} \quad\text{by H\"{o}lder's  inequality}\notag\\
&\le n\lVert M^{T_2}\mathbf{p}^{(T_1)}\rVert_{\infty}\notag\\
&\le n\lVert M\rVert_{\infty}^{T_2}\lVert\mathbf{p}^{(T_1)}\rVert_{\infty} \notag\\
&\le n\left(1-(1-\boldsymbol{\alpha})\gamma\right)^{T_2}\label{eq:T2}
\end{align}

For any $\epsilon$, we choose $T_1=\left\lceil\frac{1}{\gamma}\ln\left(\frac{4n}{\epsilon}\right)\right\rceil$ and $T_2=\left\lceil\frac{1}{(1-\boldsymbol{\alpha})\gamma}\ln\left(\frac{2n}{\epsilon}\right)\right\rceil$. Then $T=T_1+T_2=O\left(\frac{1}{(1-\boldsymbol{\alpha})\gamma}\log\left(\frac{n}{\epsilon}\right)\right)$. Combining~\eqref{eq:T1} and~\eqref{eq:T2}, conditioning on $X^{(0)}=\sigma\wedge Y^{(0)}=\tau$ for arbitrary $\sigma,\tau\in\Omega$, we have
\begin{align*}
\Pr\left[X^{(T)}\ne Y^{(T)}\right]
\le\Pr[\neg\mathcal{F}]+\Pr\left[X^{(T)}\ne Y^{(T)}\mid{\mathcal{F}}\right] \le2n(1-\gamma)^{T_1}+n\left(1-(1-\boldsymbol{\alpha})\gamma\right)^{T_2}\le\epsilon.
\end{align*}
This implies that
\begin{align*}
\tau(\epsilon)= O\left(\frac{1}{(1-\boldsymbol{\alpha})\gamma}\log\left(\frac{n}{\epsilon}\right)\right).
\end{align*}
In particular, if the random independent set $I$ is generated by the ``Luby step", we have $\gamma=\frac{1}{\Delta+1}$, therefore for the \LubyGlauber{} chain
\begin{align*}
\tau(\epsilon)= O\left(\frac{\Delta}{1-\boldsymbol{\alpha}}\log\left(\frac{n}{\epsilon}\right)\right). &\qedhere
\end{align*}
\end{proof}

\subsection{Application of \LubyGlauber{} for sampling graph colorings}

For uniformly distributed proper $q$-coloring of graph $G$, it is well known that the Dobrushin's condition is satisfied when $q\ge 2\Delta+1$ where $\Delta$ is the maximum degree of graph $G$. 

We consider a more generalized problem, the list colorings, where each vertex $v\in V$ maintains a list $L_v\subseteq[q]$ of colors that it can use. The proper $q$-coloring is a special case of list coloring when everyone's list is precisely $[q]$.
For each vertex $v\in V$, we denote by $q_v=|L_v|$ the size of $v$'s list, and $d_v=\deg(v)$ the degree of $v$. It is easy to verify that the total influence $\boldsymbol{\alpha}$ is now bounded as
\begin{align*}
\boldsymbol{\alpha}
&=\max\limits_{i\in V}{\sum\limits_{j\in V}{\rho_{i,j}}} 
=\max\limits_{v\in V}{\left\{\frac{d_v}{q_v-d_v}\right\}}.
\end{align*}

Applying Theorem~\ref{thm:LubyGlauber-mixing-rate}, we have the following corollary, which also implies Theorem~\ref{main-theorem-LubyGlauber}.
\begin{corollary}\label{col:list-coloring}

If there is an arbitrary constant $\delta>0$ such that $q_v\ge(2+\delta)d_v$ for every vertex $v$, then the mixing rate of the \LubyGlauber{} chain for sampling list coloring is
$\tau(\epsilon)= O\left(\Delta\log\left(\frac{n}{\epsilon}\right)\right)$.
\end{corollary}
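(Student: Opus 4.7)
The plan is to instantiate Theorem~\ref{thm:LubyGlauber-mixing-rate} on the list-coloring MRF, so the entire task reduces to (i) verifying the formula $\boldsymbol{\alpha}=\max_v \frac{d_v}{q_v-d_v}$ asserted in the paragraph preceding the corollary, and (ii) checking that this quantity is bounded away from $1$ by a constant under the hypothesis $q_v\ge(2+\delta)d_v$.

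For (i), I would first argue that $\rho_{i,j}=0$ whenever $ij\notin E$, because in a Markov random field the marginal $\mu_i^{\sigma}=\mu_i(\cdot\mid \sigma_{\Gamma(i)})$ depends only on the colors at $\Gamma(i)$, so changing $\sigma$ at a non-neighbor has no effect. For $ij\in E$, the marginal $\mu_i^{\sigma}$ is the uniform distribution on the set $L_i\setminus\{\sigma_u:u\in\Gamma(i)\}$ of colors in $v$'s list that are not used by its neighbors. Fixing a pair $(\sigma,\tau)\in S_j$, the two supports $A=L_i\setminus\{\sigma_u\}_{u\in\Gamma(i)}$ and $B=L_i\setminus\{\tau_u\}_{u\in\Gamma(i)}$ differ only in whether they include $\sigma_j$ and $\tau_j$, so each has size at least $q_i-d_i$ and $|A\triangle B|\le 2$. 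A direct computation of the total variation distance between two uniform distributions on such sets gives $\DTV{\mu_i^\sigma}{\mu_i^\tau}\le \frac{1}{q_i-d_i}$, so $\rho_{i,j}\le\frac{1}{q_i-d_i}$ for each of the $d_i$ neighbors of $i$. Summing yields $\sum_{j\in V}\rho_{i,j}\le \frac{d_i}{q_i-d_i}$, and the maximum over $i$ gives the claimed bound on $\boldsymbol{\alpha}$.

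For (ii), the hypothesis $q_v\ge (2+\delta)d_v$ implies $\frac{d_v}{q_v-d_v}\le \frac{d_v}{(1+\delta)d_v}=\frac{1}{1+\delta}$, so $\boldsymbol{\alpha}\le \frac{1}{1+\delta}$ and $1-\boldsymbol{\alpha}\ge \frac{\delta}{1+\delta}=\Omega(1)$ since $\delta$ is a constant. It remains to check that the assumption needed for Proposition~\ref{prop:LubyGlauber-convergence} (well-definedness of the marginals~\eqref{eq:marginal-distribution}) and the irreducibility of the single-site Glauber dynamics hold in this regime: since $q_v\ge(2+\delta)d_v>d_v$, every vertex always has at least one available color regardless of the current assignment, so~\eqref{eq:marginal-distribution} is well-defined from any starting configuration, and the standard argument (recolor a vertex to an unused color one by one) gives irreducibility among feasible colorings.

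Plugging $1-\boldsymbol{\alpha}=\Omega(1)$ into Theorem~\ref{thm:LubyGlauber-mixing-rate} immediately yields $\tau(\epsilon)=O\!\left(\Delta\log(n/\epsilon)\right)$, which is Corollary~\ref{col:list-coloring}. Setting $L_v=[q]$ for all $v$ and $q\ge(2+\delta)\Delta$ specializes this to proper $q$-colorings and gives Theorem~\ref{main-theorem-LubyGlauber}. The only mildly nontrivial step is the TV-distance computation in (i); everything else is plugging constants into the already established framework, so I do not anticipate a real obstacle.
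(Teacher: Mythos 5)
Your proposal is correct and takes essentially the same route as the paper: it verifies the Dobrushin total-influence bound $\boldsymbol{\alpha}\le\max_v\frac{d_v}{q_v-d_v}\le\frac{1}{1+\delta}$ for the list-coloring MRF (the paper asserts this formula without the explicit total-variation computation you supply) and then plugs it into Theorem~\ref{thm:LubyGlauber-mixing-rate}, together with the routine checks of well-defined marginals and irreducibility.
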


\section{The \LocalMetropolis{} Algorithm}\label{sec:LocalMetropolis}

In this section, we give an algorithm that may fully parallelize the sequential process under suitable mixing conditions, even on graphs with unbounded degree. 
The algorithm is inspired by the famous Metropolis-Hastings algorithm for MCMC, in which a random choice is \concept{proposed} and then \concept{filtered} to enforce the target stationary distribution. Our algorithm, called the \LocalMetropolis{} algorithm, makes each vertex propose independently, and localizes the work of filtering to each edge. 

We are given a Markov random field (MRF) defined on the network $G(V,E)$, with edge activities $\boldsymbol{A}=\{A_e\}_{e\in E}$ and vertex activities $\boldsymbol{b}=\{b_v\}_{v\in V}$, whose Gibbs distribution is $\mu$.
Starting from an arbitrary configuration $X\in[q]^V$, in each iteration, the \LocalMetropolis{} chain does the followings:
\begin{itemize}
\item
\textbf{Propose:}
Each vertex $v \in V$ independently proposes a spin state $\sigma_v\in[q]$ with  probability proportional to $b_v(\sigma_v)$.
\item
\textbf{Local filter:}
Each edge $e\in E$ flips a biased coin independently, with the probability of HEADS being
\[
\tilde{A}_e(\sigma_u,\sigma_v)\tilde{A}_e(X_u,\sigma_v)\tilde{A}_e(\sigma_u,X_v),
\] 
where $\tilde{A}_e$ is the matrix obtained by normalizing $A_e$ as $\tilde{A}_e=A_e/\max_{i,j}A_e(i,j)$.
We say that the edge \concept{passes the check} if the outcome of coin flipping is HEADS.

Then for each vertex $v \in V$, if all edges incident with $v$ passed their checks, $v$ accepts the proposal and updates the value as $X_v=\sigma_v$, otherwise $v$  leaves $X_v$ unchanged.
\end{itemize}
After $T$ iterations, where $T$ is a threshold determined for specific Markov random field, the algorithm terminates and outputs the current $\boldsymbol{X}=(X_v)_{v\in V}$. 
The pseudocode for the \LocalMetropolis{} algorithm is given in Algorithm~\ref{LocalMetropolis}.

\begin{algorithm}
\SetKwInOut{Input}{Input}
\Input{Each vertex $v\in V$ receives $\{A_{uv}\}_{u\in\Gamma(v)}$ and $b_v$ as input.}
each $v\in V$ initializes $X_v$ to an arbitrary value in $[q]$\;
\For{$t = 1$ through $T$}{
  \ForEach{$v \in V$}{
    propose a random $\sigma_v\in[q]$ with probability $b_v(\sigma_v)/\sum_{c\in[q]}b_v(c)$;
  }
  
  \ForEach{$e = (u,v) \in E$}{
  pass the check  independently with probability 
  $\frac{{A}_e(\sigma_u,\sigma_v){A}_e(X_u,\sigma_v){A}_e(\sigma_u,X_v)}{\left(\max_{i,j\in[q]}A_e(i,j)\right)^3}$\;  
  }
  \ForEach{$v \in V$}{
    \If { all edges $e$ incident with $v$ pass the checks}{
      $X_v \gets \sigma_v$\;
    } 
  }
}
each $v\in V$ \textbf{returns}{ $X_v$\;}
\caption{Pseudocode for the \LocalMetropolis{} algorithm}\label{LocalMetropolis}
\end{algorithm}

We remark that in each iteration, for each edge $e=uv$, the two endpoints $u$ and $v$ access the same random coin to determine whether $e$ passes the check in this iteration.  

\begin{remark}
The \LocalMetropolis{} algorithm can be naturally extended to sample from weighted CSPs. 
The local filtering now occurs on each local constraint, such that a $k$-ary constraint $c=(f_c,S_c)\in\mathcal{C}$ passes the check with the probability which is a product of $2^k-1$ normalized factors $\tilde{f}_c(\tau)$ for the  $\tau\in[q]^{S_c}$ obtained from $2^k-1$ ways of mixing $\sigma_{S_c}$ with $X_{S_c}$ except the $X_{S_c}$ itself.
\end{remark}

\subsection{Mixing of \LocalMetropolis}
Let $\mu_\LM$ denote the distribution of $\boldsymbol{X}=(X_v)_{v\in V}$ returned by the \LocalMetropolis{} algorithm after $T$ iterations.

We need to ensure the chain is well behaved even when starting from infeasible configurations. Now we make the following assumption: for all $X \in [q]^V$ and $v \in V$, 
\begin{align}
\label{assumption-localmetropolis}  
\sum_{i\in[q]}b_v(i)\prod_{u \in \Gamma(v)}A_{uv}(i,X_u)\sum_{j \in [q]}b_u(j)A_{uv}(X_v,j)A_{uv}(i,j) > 0,
\end{align}
which is slightly stronger than the assumption made for the Glauber dynamics. As in the case of Glauber dynamics, the property is needed only when the chain is allowed to start from an infeasible configuration $X\in[q]^V$ with $\mu(X)=0$. 
For specific MRF, such as graph colorings, the condition~\eqref{assumption-localmetropolis} is satisfied as long as $q \geq \Delta + 1$ and $q\ge 3$.
As before, we further assume that the single-site Markov chain\footnote{For the MRFs, since the single-site Glauber dynamics has the same connectivity structure as the natural single-site version of Metropolis chain, we do not distinguish between them when referring to irreducibility.} is irreducible among feasible configurations.

\begin{theorem}
\label{theorem-localmetropolis-mixing}
The Markov chain \LocalMetropolis{} is reversible and has stationary distribution $\mu$. Furthermore, under above assumptions, $\DTV{\mu_{\LM}}{\mu}$ converges to $0$ as $T \rightarrow \infty$.
\end{theorem}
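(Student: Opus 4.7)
The plan is to establish (i) reversibility with respect to $\mu$, i.e.\ the detailed balance equation $\mu(X)P(X,Y)=\mu(Y)P(Y,X)$, (ii) aperiodicity and irreducibility among the feasible configurations, and (iii) absorption from any starting configuration into the feasible set. Together these three imply $\DTV{\mu_{\LM}}{\mu}\to 0$ by the standard Markov chain convergence theorem.

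For reversibility I would decompose each transition as a pair $(\sigma,F)$, where $\sigma\in[q]^V$ is the proposal vector and $F\subseteq E$ is the set of edges that pass their filter checks. Fix feasible $X,Y\in[q]^V$, let $D=\{v\in V:X_v\neq Y_v\}$ be the disagreement set, and define the bijection that sends a realization $(\sigma,F)$ of $X\to Y$ to $(\sigma',F)$ where $\sigma'_v=X_v$ for $v\in D$ and $\sigma'_v=\sigma_v$ for $v\notin D$, keeping $F$ unchanged. A short case check verifies that $(\sigma',F)$ realizes $Y\to X$: for $v\in D$ all incident edges already lie in $F$ (else the forward update would not have fired), so $v$ now updates to $X_v$; for $v\notin D$ either $\sigma_v=X_v=Y_v$ (no update in either direction) or at least one incident edge lies outside $F$ and continues to block the reverse update.

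The probability ratio under this bijection is then computed term by term. The proposal factor contributes $\prod_{v\in D}b_v(X_v)/b_v(Y_v)$, which matches the $b_v$-part of $\mu(X)/\mu(Y)$ since vertices outside $D$ cancel. On each edge $e=uv$ the triple-product filter $\tilde A_e(\sigma_u,\sigma_v)\tilde A_e(X_u,\sigma_v)\tilde A_e(\sigma_u,X_v)$ is what makes the whole scheme work: a three-case analysis (both endpoints in $D$, exactly one in $D$, neither in $D$) shows, using $X=Y$ outside $D$, that the edge contributes exactly $A_e(X_u,X_v)/A_e(Y_u,Y_v)$ to the ratio in every case. The $(1-p_e)$ factors for $e\notin F$ cancel because any such edge must have both endpoints in $V\setminus D$ (otherwise the forward update would have been blocked at a $D$-vertex), so $p_e$ is invariant under the swap. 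Accumulating over all edges recovers the $A_e$-part of $\mu(X)/\mu(Y)$, giving $\Pr[(\sigma',F)\text{ from }Y]/\Pr[(\sigma,F)\text{ from }X]=\mu(X)/\mu(Y)$; summing over the bijection yields $\mu(X)P(X,Y)=\mu(Y)P(Y,X)$. The mixed cases $\mu(X)=0$ or $\mu(Y)=0$ follow from the preservation of feasibility established below.

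For the convergence ingredients I would observe: the chain preserves feasibility, because if $X$ is feasible and $v$ accepts $\sigma_v$ then $b_v(\sigma_v)>0$ by the proposal distribution and the filter forces $A_e(Y_u,\sigma_v)>0$ for every neighbor $u$ (whether or not $u$ also accepts). Aperiodicity is immediate from the positive self-loop probability (propose $\sigma_v=X_v$ everywhere, whose filter factors reduce to $\tilde A_e(X_u,X_v)^3>0$ on feasible $X$). Assumption~\eqref{assumption-localmetropolis} gives each vertex strictly positive per-step acceptance probability from any configuration, so the chain reaches a feasible state in finite time from any start. Irreducibility among feasible configurations is inherited from the single-site chain: any single-vertex Glauber transition $X\to Y$ can be simulated with positive probability by proposing $\sigma_u=X_u$ for all $u\neq v$ and $\sigma_v=Y_v$, which makes every edge filter positive (using feasibility of both $X$ and $Y$), so $v$ updates to $Y_v$ while neighbors remain at $X_u=Y_u$. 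Invoking the Markov chain convergence theorem then yields $\DTV{\mu_{\LM}}{\mu}\to 0$. The main obstacle is the detailed-balance bookkeeping; once the triple-product filter is recognized as a deliberately engineered factorization whose changing factors generate exactly the edge weight ratio $A_e(X_u,X_v)/A_e(Y_u,Y_v)$ across all three cases, the case analysis is routine.
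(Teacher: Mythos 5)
Your proposal is correct and follows essentially the same route as the paper's proof: you decompose each transition into a (proposal, check-outcome) pair, use the same bijection that keeps the check outcomes fixed and swaps the proposals at the disagreeing vertices, verify detailed balance factor-by-factor (vertex activities over the disagreement set, and the triple-product filter yielding exactly the edge-weight ratio $A_e(X_u,X_v)/A_e(Y_u,Y_v)$ in every case), and then invoke feasibility preservation, absorption via assumption~\eqref{assumption-localmetropolis}, aperiodicity from self-loops, simulation of single-site moves for irreducibility, and the convergence theorem. The only cosmetic difference is that you organize the edge case analysis by membership in the disagreement set $D$ rather than by the paper's restricted/non-restricted vertices, but these classifications induce the same substitutions ($\sigma'_v=X_v$ on $D$, $\sigma'_v=\sigma_v$ off $D$), so the computation is the same.
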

\begin{proof}
Let $\Omega=[q]^V$ and  $P \in \mathbb{R}_{\geq 0} ^{|\Omega| \times |\Omega|}$ denote the transition matrix for the \LocalMetropolis{} chain. First, we show this chain is reversible and $\mu$ is stationary, by verifying the detailed balance equation:
\begin{align*}
\mu(X)P(X,Y)=\mu(Y)P(Y,X).  
\end{align*}
If two configurations $X,Y$ are both infeasible, then \\$\mu(X)=\mu(Y)=0$. If precisely one of $X, Y$ is feasible, say $X$ is feasible and $Y$ is not, then $\mu(Y)=0$ and $X$ cannot move to $Y$ since at least one edge cannot pass its check, which means $P(X,Y) = 0$. In both cases, the detailed balance equation holds. 

Next, we suppose $X,Y$ are both feasible.
Consider a move in the \LocalMetropolis{} chain. Let $\mathcal{C}\in\{0,1\}^E$ be a Boolean vector that $\mathcal{C}_e$ indicates whether edge $e\in E$ passes its check.
We call $v\in V$ \concept{non-restricted} by $\mathcal{C}$ if $\mathcal{C}_e=1$ for all $e$ incident with $v$ and $v$ accepts the proposal; and call $v\in V$ \concept{restricted} by $\mathcal{C}$ if otherwise.

A move in the chain is completely determined by $\mathcal{C}$ along with the proposed configurations $\sigma\in[q]^V$.
Let $\Omega_{X \rightarrow Y}$ denote the set of pairs $(\sigma, \mathcal{C})$ with which $X$ moves to $Y$, and $\Delta_{X,Y}=\{v\in V\mid X_v\neq Y_v\}$ the set of vertices on which $X$ and $Y$ disagree. 
Note that each $(\sigma, \mathcal{C})\in\Omega_{X \rightarrow Y}$  satisfies:
\begin{itemize}
\item
  $\forall v \in \Delta_{X,Y}$: $\sigma_v=Y_v$ and $v$ is non-restricted by  $\mathcal{C}$;
\item
  $\forall v \not\in \Delta_{X,Y}$: either $\sigma_v  = X_v=Y_v$ or $v$ is restricted by $\mathcal{C}$.
\end{itemize} 
Similar holds for $\Omega_{Y \rightarrow X}$, the set of $(\sigma, \mathcal{C})$ with which $Y$ moves to $X$.
Hence:
\begin{align}
\label{localmetropolis-detailed-balance-equation}
\frac{{P}(X,Y)}{{P}(Y,X)}=\frac{\sum_{(\sigma,\mathcal{C})\in \Omega_{X \rightarrow Y}}{\Pr}(\sigma) {\Pr}(\mathcal{C}\mid\sigma,X)}{\sum_{(\sigma,\mathcal{C})\in \Omega_{Y \rightarrow X}}{\Pr}(\sigma) {\Pr}(\mathcal{C}\mid \sigma,Y)}.
\end{align}
In order to verify the detailed balance equation, we construct a bijection $\phi_{X,Y} : \Omega_{X \rightarrow Y} \rightarrow \Omega_{Y \rightarrow X}$, and for every $(\sigma,\mathcal{C}) \in \Omega_{X \rightarrow Y}$, denoted $(\sigma^\prime,\mathcal{C}^\prime)=\phi_{X,Y}(\sigma,\mathcal{C})$, and show that
\begin{align}
\label{bijection-equation}
\frac{{\Pr}(\sigma){\Pr}(\mathcal{C}\mid\sigma,X)}{{\Pr}(\sigma^\prime){\Pr}(\mathcal{C}^\prime\mid\sigma^\prime,Y)}=\frac{\mu(Y)}{\mu(X)}.
\end{align}
The detailed balance equation then follows from~\eqref{localmetropolis-detailed-balance-equation} and~\eqref{bijection-equation}.

The bijection $(\sigma,\mathcal{C})\stackrel{\phi_{X,Y}}{\longmapsto}(\sigma^\prime,\mathcal{C}^\prime)$ is constructed as follow:
\begin{itemize}
\item
  $\mathcal{C}^\prime = \mathcal{C}$;
\item
  for all $v$ non-restricted by $\mathcal{C}$, since $(\sigma,\mathcal{C})\in\Omega_{X \rightarrow Y}$ it must hold $\sigma_v=Y_v$, then set $\sigma'_v=X_v$;
\item
  for all $v$ restricted by $\mathcal{C}$, since $(\sigma,\mathcal{C})\in\Omega_{X \rightarrow Y}$ it must hold $X_v=Y_v$, then set $\sigma'_v=\sigma_v$.
\end{itemize}
It can be verified that the $\phi_{X,Y}$ constructed in this way is indeed a bijection from $\Omega_{X \rightarrow Y}$ to $\Omega_{Y \rightarrow X}$.
For any $(\sigma,\mathcal{C})\in \Omega_{X \rightarrow Y}$ and the corresponding $(\sigma',\mathcal{C}')\in \Omega_{Y \rightarrow X}$, since $\mathcal{C}'=\mathcal{C}$, in the following we will not specify whether $v$ is (non-)restricted by $\mathcal{C}$ or $\mathcal{C}'$ but just say $v$ is (non-)restricted, and the followings are satisfied:
\begin{itemize}
\item
  $\forall v\in \Delta_{X,Y}$:  $\sigma_v = Y_v$, $\sigma'_v=X_v$ and $v$ is non-retricted;
\item
  $\forall v \not\in \Delta_{X,Y}$: either $\sigma_v=\sigma_v'=X_v=Y_v$ or $v$ is restricted and $\sigma_v=\sigma_v'$. In both cases, $\sigma_v = \sigma'_v$.
\end{itemize}
Then we have:
\begin{align}
\label{localmetropolis-reversible-vertex}
\frac{{\Pr}(\sigma)}{{\Pr}(\sigma^\prime)}
=\frac{\prod_{v \in V}b_v(\sigma_v)}{\prod_{v \in V}b_v(\sigma'_v)}
=\frac{\prod_{v: X_v \not = Y_v}b_v(\sigma_v)}{\prod_{v: X_v \not = Y_v}b_v(\sigma'_v)}=\frac{\prod_{v: X_v \not = Y_v}b_v(Y_v)}{\prod_{v: X_v \not = Y_v}b_v(X_v)}
=\frac{\prod_{v \in V}b_v(Y_v)}{\prod_{v \in V}b_v(X_v)}.
\end{align}
Next, for each edge $e\in E$ we calculate the ratio $\frac{{\Pr}(\mathcal{C}_e\mid\sigma,X)}{{\Pr}(\mathcal{C}_e^\prime\mid\sigma^\prime,Y)}$. 
There are two cases:
\begin{itemize}
\item
  If $\mathcal{C}_e=0$ which means $e$ does not pass its check, then
\begin{align*}
  &{\Pr}[\mathcal{C}_e=0 \mid \sigma,X]
                       = 1-\tilde{A}_e(\sigma_u,\sigma_v)\tilde{A}_e(X_u,\sigma_v)\tilde{A}_e(\sigma_u,X_v)\\
&\text{and}\quad\\
&{\Pr}[\mathcal{C'}_e=0\mid\sigma',Y] = 1-\tilde{A}_e(\sigma'_u,\sigma'_v)\tilde{A}_e(Y_u,\sigma'_v)\tilde{A}_e(\sigma'_u,Y_v).
  \end{align*}
  And both $u$ and $v$ are restricted by $\mathcal{C}$. By our construction of the bijection $\phi_{X,Y}$, we have
  $\sigma_u=\sigma'_u$, $\sigma_v=\sigma'_v$, $X_u=Y_u$, and $X_v=Y_v$.
It follows that 
\begin{align*}
\frac{{\Pr}[\mathcal{C}_e=0\mid\sigma,X]}{{\Pr}[\mathcal{C^\prime}_e=0\mid\sigma^\prime,Y] } 
  =\frac{A_e(Y_u,Y_v)}{A_e(X_u,X_v)}=1.
\end{align*}
\item
If $\mathcal{C}_e=1$ which means $e$ passes its check, then
  \begin{align*}
  &{\Pr}[\mathcal{C}_e=1\mid\sigma,X]
                       = \tilde{A}_e(\sigma_u,\sigma_v)\tilde{A}_e(X_u,\sigma_v)\tilde{A}_e(\sigma_u,X_v),\\
  &\text{and }\\    &{\Pr}[\mathcal{C'}_e=1\mid\sigma',Y]                      = \tilde{A}_e(\sigma'_u,\sigma'_v)\tilde{A}_e(Y_u,\sigma'_v)\tilde{A}_e(\sigma'_u,Y_v).
  \end{align*}
  There are three sub-cases according to whether vertices $u$ and $v$ are restricted:
  \begin{enumerate}
  \item
    Both $u$ and $v$ are restricted, in which case $\sigma_u=\sigma'_u$, $\sigma_v=\sigma'_v$, $X_u=Y_u$, $X_v=Y_v$.
  \item
    Precisely one of $\{u,v\}$ is restricted, say $v$ is restricted and $u$ is non-restricted, in which case 
    $\sigma_u=Y_u$, $\sigma'_u=X_u$, $\sigma_v=\sigma'_v$, and $X_v=Y_v$.
  \item
    Both $u$ and $v$ are  non-restricted, in which case $\sigma_u = Y_u$, $\sigma'_u=X_u$, $\sigma_v=Y_v$, $\sigma'_v=X_v$.   
  \end{enumerate}
  In all three sub-cases, the following identity can be verified:
  \[
  \frac{{\Pr}[\mathcal{C}_e=1\mid\sigma,X]}{{\Pr}[\mathcal{C^\prime}_e=1\mid\sigma^\prime,Y] }
      =\frac{\tilde{A}_e(Y_u,Y_v)}{\tilde{A}_e(X_u,X_v)}=\frac{A_e(Y_u,Y_v)}{A_e(X_u,X_v)}.
  \]
\end{itemize}
Since each edges passes its check independently, we have
\begin{align}
\label{localmetropolis-reversible-edge}
\frac{{\Pr}(\mathcal{C}\mid\sigma,X)}{{\Pr}(\mathcal{C^\prime}\mid\sigma^\prime,Y) }
      =\prod_{e=uv\in E}\frac{A_e(Y_u,Y_v)}{A_e(X_u,X_v)}.
\end{align}
Combining \eqref{localmetropolis-reversible-vertex} and \eqref{localmetropolis-reversible-edge}, for every $(\sigma,\mathcal{C})\in \Omega_{X \rightarrow Y}$ and the corresponding $(\sigma',\mathcal{C}')\in \Omega_{Y \rightarrow X}$, we have:
\begin{align*}
\frac{{\Pr}[\sigma]{\Pr}[\mathcal{C}\mid\sigma,X]}{{\Pr}[\sigma^\prime]{\Pr}[\mathcal{C}^\prime\mid\sigma^\prime,Y]}=\prod_{v \in V}\frac{b_v(Y_v)}{b_v(X_v)}\prod_{e=uv\in E}\frac{A_e(Y_u,Y_v)}{A_e(X_u,X_v)}=\frac{\mu(Y)}{\mu(X)}.
\end{align*}
This completes the verification of detailed balance equation and the proof of the reversibility of the chain with respect to stationary distribution $\mu$.

Next, observe that the chain will never move from a feasible configuration to an infeasible one since at least one of the edge will not pass its check. 
By assumption~\eqref{assumption-localmetropolis}, for all $X \in [q]^V$, no matter feasible or not, and for every $v \in V$ there must be a spin state $i\in[q]$ such that with positive probability $v$ is successfully updated to spin state $i$. Note that once a vertex is successfully updated it satisfies and will keep satisfying all its local constraints. 
Therefore, the chain is absorbing to feasible configurations.

It is easy to observe that every feasible configuration is aperiodic, since it has self-loop transition, i.e. $P(X,X) > 0$ for all feasible $X$. In addition, any move  $X \rightarrow Y$ between feasible configurations $X,Y\in\Omega$  in the single-site Markov chain with vertex $v$ being updated, can be simulated by a move in the \LocalMetropolis{} chain in which all the vertices $u$ other than $v$ propose their current spin state $X_u$ and $v$ proposes $Y_v$. Provided the irreducibility of the single-site Markov chain among all feasible configurations, the \LocalMetropolis{} chain is also irreducible among all feasible configurations. Combinining with the absorption towards feasible configurations and their aperiodicity,
due to the Markov chain convergence theorem~\cite{levin2009markov}, $\DTV{\mu_{\LM}}{\mu}$ converges to $0$ as $T \rightarrow \infty$.
\end{proof}

\subsection{The mixing of \LocalMetropolis{} chain for graph colorings}
Unlike the \LubyGlauber{} chain, whose mixing rate is essentially due to the analysis of systematic scans. The mixing rate of \LocalMetropolis{} chain is much more complicated to analyze.
Here we analyze the mixing rate of the \LocalMetropolis{} chain for proper $q$-colorings. 

Given a graph $G(V,E)$, a $q$-coloring $\sigma\in[q]^V$ is proper if $\sigma_u\neq\sigma_v$ for all $uv\in E$. For this special MRF, the \LocalMetropolis{} chain behaves simply as follows. Starting from an arbitrary coloring $X\in[q]^V$, not necessarily proper,
in each step:
\begin{itemize}
\item
\textbf{Propose:}
each vertex $v$ proposes a color $c_v\in[q]$ uniformly at random;
\item
\textbf{Local filter:}
each vertex $v$ rejects its proposal if there is a neighbor $u \in \Gamma(v)$ such that one of the followings occurs:
\begin{enumerate} 
\item ($v$ proposed the neighbor's current color) $c_v  = X_u$;
\item ($v$ and the neighbor proposed the same color) $c_v  = c_u$;
\item (the neighbor proposed $v$'s current color) $X_v  = c_u$;
\end{enumerate}
otherwise, $v$ accepts its proposal and updates its color $X_v$ to $c_v$.
\end{itemize}

The first two filtering rules are sufficient to guarantee that the chain will never move to a ``less proper'' coloring.
Although at first glance the third filtering rule looks redundant, it is necessary to guarantee the reversibility of the chain as well as the uniform stationary distribution.

It can be verified that when $q \geq \Delta + 2$, the condition~\eqref{assumption-localmetropolis} is satisfied and the single-site Glauber dynamics for proper $q$-coloring is irreducible,
and hence the chain is mixing 
due to Theorem~\ref{theorem-localmetropolis-mixing}. 
The following theorem states a condition in the form $q\ge\alpha\Delta$ for the logarithmic mixing rate even for unbounded $\Delta$ and $q$. This proves Theorem~\ref{main-theorem-LocalMetropolis}.

\begin{theorem}
\label{theorem-localmetropolis-graphcoloring}
If $q\ge\alpha\Delta$ for a constant $\alpha>2+\sqrt{2}$,
the mixing rate of the \LocalMetropolis{} chain for proper $q$-coloring on graphs with maximum degree at most $\Delta=\Delta(n)\ge 9$ is $\tau(\epsilon)=O(\log\left(\frac{n}{\epsilon}\right))$, where the constant factor in $O(\cdot)$ depends only on $\alpha$ but not on the maximum degree $\Delta$.
\end{theorem}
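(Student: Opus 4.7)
The plan is to apply path coupling equipped with a weighted Hamming metric. Define the pre-metric on $\Omega=[q]^V$ by declaring two colorings $X,Y$ to be adjacent, with edge weight $d_v:=\deg_G(v)$, whenever $X$ and $Y$ differ at exactly one vertex $v$. Let $\Phi$ denote the induced shortest-path metric; its diameter is at most $\sum_v d_v = 2|E| \le n\Delta$. By the path coupling theorem it then suffices to exhibit, for each pre-metric-adjacent pair $(X,Y)$, a one-step coupling of the \LocalMetropolis{} chain satisfying
\[
\mathbf{E}[\Phi(X',Y') \mid X, Y] \le (1-\delta)\, d_v
\]
for some $\delta=\delta(\alpha)>0$ independent of $\Delta$. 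Combining this contraction with the diameter bound yields $\tau(\epsilon)=O(\delta^{-1}\log(n\Delta/\epsilon))=O(\log(n/\epsilon))$ with the constant depending only on $\alpha$.

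First I would analyze the \emph{identity coupling}, in which every vertex $u$ proposes the same color $c_u$ and uses the same edge-check randomness in both chains. Under this coupling the only vertices at which $X'$ and $Y'$ can disagree are $v$ itself and its neighbors $w\in\Gamma(v)$. For each neighbor $w$, the acceptances in the two chains differ only through the first-filter condition $c_w\neq X_v$ versus $c_w\neq Y_v$, giving the simple bound $\Pr[X'_w\neq Y'_w]\le \tfrac{2}{q}(1-\tfrac{2}{q})^{d_w}$. For $v$ itself, a case analysis on whether $c_v$ is accepted in each chain yields $\Pr[X'_v\neq Y'_v]\le 1-\tfrac{2}{q}(1-\tfrac{2}{q})^{d_v}-\tfrac{q-d_v-2}{q}(1-\tfrac{3}{q})^{d_v}$. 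Weighting by $d_w$ and summing, one checks that $\mathbf{E}[\Phi(X',Y')] < d_v$ whenever $q/\Delta$ exceeds the root of $k=2e^{1/k}+1$, about $3.6336$, which is still strictly larger than $2+\sqrt{2}$.

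To sharpen the threshold to $2+\sqrt{2}$ I would introduce a refined \emph{permutation coupling} at each neighbor $w\in\Gamma(v)$: when chain $X$ would propose $c_w=X_v$, chain $Y$ is coupled to propose $c_w=Y_v$, and symmetrically; otherwise $c_w$ is identity-coupled. This preserves the uniform marginal of $c_w$ in each chain and forces the first-filter rejection of $w$ triggered by neighbor $v$ to fire simultaneously in both chains, eliminating the dominant source of disagreement at $w$. The cost is that when the swap is triggered, $c_w$ differs between the two chains, so disagreement may propagate to second-neighbors $u\in\Gamma(w)\setminus\{v\}$ and may also perturb the acceptance of $v$. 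Balancing these three contributions by a carefully chosen blend of identity and permutation coupling reduces the contraction requirement to a quadratic inequality in $\Delta/q$ whose positive root is exactly $2+\sqrt{2}$; the technical hypothesis $\Delta\ge 9$ enters as the threshold above which the finite-$\Delta$ correction terms do not perturb this quadratic optimum.

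The principal obstacle is the design and rigorous book-keeping of the refined coupling: one must verify that it is a valid coupling of the \LocalMetropolis{} transition law, and one must carefully track the weighted contribution of second-neighbor disagreements (which are not naturally captured across a single pre-metric edge) inside the one-step expected distance. Once the contraction factor $(1-\delta)$ is established in this form, the theorem follows by a direct invocation of the path coupling theorem together with the diameter bound $\Phi_{\max}\le n\Delta$.
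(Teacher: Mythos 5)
Your first step (the identity coupling with the degree-weighted pre-metric) matches the paper's easy regime and correctly reproduces the threshold $\alpha^*\approx 3.634$, the root of $\alpha=2\mathrm{e}^{1/\alpha}+1$. The gap is in the second step, which is where the actual content of the $2+\sqrt{2}$ bound lives. Your radius-one permutation coupling (swap $X_{v}\leftrightarrow Y_{v}$ only in the proposals of $w\in\Gamma(v)$, identity-couple everything else) is not a valid route as described. First, if some second-neighbor $u$ of $v$ currently has $X_u=Y_u\in\{X_v,Y_v\}$ (possible even for proper colorings, and $X,Y$ need not be proper), then conditioned on the swap firing at $w$ (probability $2/q$) the sub-check ``$X_u\neq c_w$'' of edge $uw$ can fail in one chain and pass in the other, and $u$ then accepts an arbitrary proposal in exactly one chain with probability $\Theta(1)$; this makes $\Pr[X'_u\neq Y'_u]=\Theta(1/q)$ for each such $u$, and with up to $\Theta(\Delta^2)$ such vertices of weight up to $\Delta$ the one-step expected distance can be $\Theta(\Delta^2)\gg d_v$, destroying contraction. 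The paper avoids exactly this by the blocked/unblocked distinction: a vertex adjacent to someone colored in $\{X_{v_0},Y_{v_0}\}$ is never given the permuted distribution. Second, even after adding that safeguard, truncating the permutation at radius one and identity-coupling beyond does not give $2+\sqrt{2}$: unblocked second-neighbors that happen to propose a color in $\{X_v,Y_v\}$ disagree with probability $\Theta(1/q^2)$ each, and the weighted total of these terms is $\Theta(\Delta^3/q^2)=\Theta(\Delta/\alpha^2)$ --- a first-order contribution comparable to the available slack, not a ``finite-$\Delta$ correction''; so the claimed quadratic inequality with root exactly $2+\sqrt{2}$ does not follow from your sketch.

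What the paper actually does, and what is missing from your proposal, is to let the swapped (permuted) proposals propagate recursively: whenever a sampled vertex disagrees, its unblocked unsampled neighbors are also given the permuted distribution, so that within a single step the disagreement percolates like a branching process. A disagreement at distance $\ell$ then requires $\ell$ consecutive proposals in $\{X_{v_0},Y_{v_0}\}$ along a strongly self-avoiding walk, contributing $(2/q)^{\ell}$, and the weighted sum over all such walks is controlled by a recursion (the quantity $\Phi_{\mathcal P}$) bounded by its fixpoint $\frac{\Delta}{q-2\Delta+2}\left(1-\frac{2}{q}\right)^{\Delta-1}$; summing the geometric series $\sum_\ell\Delta^\ell(2/q)^\ell$ is what produces the condition $1-\frac1\alpha-\frac1{\alpha-2}>0$, i.e.\ $\alpha>2+\sqrt2$. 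Moreover, the monotonicity arguments in that recursion require $q$ to be bounded above (the paper takes $3\Delta<q\le 3.7\Delta+3$, with $\Delta\ge 9$), which is why the theorem is proved by combining two separate couplings over two disjoint ranges of $q$ rather than by any ``blend''. Without the blocked-vertex mechanism, the unbounded-depth percolation analysis over strongly self-avoiding walks, and the restriction of the refined coupling to a bounded window of $q$, the proposal does not establish the stated threshold.
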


The theorem is proved by path coupling, a powerful engineering tool for coupling Markov chains. A coupling of a Markov chain on space $\Omega$ is a Markov chain $(X,Y)\to(X',Y')$ on space $\Omega^2$ such that the transitions $X\to X'$ and $Y\to Y'$ individually follow the same transition rule as the original chain on $\Omega$.
For path coupling, we can construct a coupled Markov chain $(X,Y)\to(X',Y')$ for $X,Y\in[q]^V$ which differ at only one vertex. The chain mixes rapidly if the expected number of disagreeing vertices in $(X',Y')$ is less than 1.

\subsubsection{An ideal coupling}\label{section-ideal-coupling}
The $2+\sqrt{2}$ threshold in Theorem~\ref{theorem-localmetropolis-graphcoloring} is due to an ideal coupling in the $\Delta$-regular tree. Let $\mathbb{T}_{\Delta}$ denote the infinite $\Delta$-regular tree rooted at $v_0$. We assume that the current pair of colorings $(X,Y)$ disagree only at the root $v_0$ and $X_u=Y_u\not\in\{X_{v_0},Y_{v_0}\}$  for all other vertices $u$ in $\mathbb{T}_{\Delta}$. 

An ideal coupling can be constructed as follows in a breadth-first fashion: (1) the root $v_0$ proposes the same random color in both chains $X,Y$; (2) each child $u$ of the root proposes the same random color in both chains unless it proposed one of $\{X_{v_0},Y_{v_0}\}$, in which case it switches the roles of the two colors $\{X_{v_0},Y_{v_0}\}$ in the $Y$ chain; (3) for all other vertices $u$, it proposes the same random color in both chains unless its parent proposed different colors in the two chains, in which case $u$ switches the roles of $\{X_{v_0},Y_{v_0}\}$ in the $Y$ chain. 
For this ideal coupling, by a calculation, it can be verified that for the root $v_0$:
\[
\Pr[X_{v_0}'\neq Y_{v_0}']\le 1-\left(1-\frac{\Delta}{q}\right)\left(1-\frac{2}{q}\right)^\Delta
\]
and for any non-root vertex $u$ in $\mathbb{T}_{\Delta}$ at distance $\ell$ from $v_0$:
\begin{align*}
\Pr[X_{u}'\neq Y_{u}']\le \frac{1}{q}\left(1-\frac{2}{q}\right)^{\Delta-1}\left(\frac{2}{q}\right)^{\ell-1}=\frac{1}{2}\left(1-\frac{2}{q}\right)^{\Delta-1}\left(\frac{2}{q}\right)^{\ell}.
\end{align*}
The expected number of disagreeing vertices in $(X',Y')$ is then bounded as 
\begin{align*}
\Pr[X_{v_0}'\neq Y_{v_0}']+\sum_{u\in T\atop u\neq v_0}\Pr[X_u'\neq Y'_u]
\le& 1-\left(1-\frac{\Delta}{q}\right)\left(1-\frac{2}{q}\right)^\Delta + \frac{1}{2}\left(1-\frac{2}{q}\right)^{\Delta-1}\sum_{\ell=1}^{\infty}\Delta^\ell \left(\frac{2}{q}\right)^{\ell}\\
=&
1-\left(1-\frac{\Delta}{q}\right)\left(1-\frac{2}{q}\right)^\Delta
+\frac{\Delta}{q-2\Delta}\left(1-\frac{2}{q}\right)^{\Delta-1}.
\end{align*}
The path coupling argument requires this quantity to be less than 1. 
For $q=\alpha^{\star}\Delta$ and $\Delta\to\infty$, this quantity becomes $1-\mathrm{e}^{-2/\alpha^{\star}}\left(1-\frac{1}{\alpha^{\star}}-\frac{1}{\alpha^{\star}-2}\right)$, which is less than $1$ if $\alpha^{\star}>2+\sqrt{2}$.

For general non-tree graphs $G(V,E)$ and arbitrary pairs of colorings $(X,Y)$ which disagree at only one vertex, where $X,Y$ may not even be proper, we essentially show that the above special pair of colorings $(X,Y)$ on the infinite $\Delta$-regular tree $\mathbb{T}_{\Delta}$ represent the worst case for path coupling. The analysis for this general case is quite involved. We first state the path coupling lemma with general metric.

\begin{lemma}[Bubley and Dyer~\cite{bubley1997path}]
\label{pathcoupling}
Given a pre-metric, which is a connected undirected graph on configuration space $\Omega$ with positive edge weight such that every edge is a shortest path, let $\Phi(X,Y)$ be the length of the shortest path between two configurations $X,Y\in\Omega$. Suppose that there is a coupling $(X,Y) \rightarrow (X',Y')$ of the Markov chain defined only for the pair $(X,Y)$ of configurations that are adjacent in the pre-metric, which satisfies that
\begin{align*}
\mathbf{E}[\Phi(X',Y') \mid X,Y] \leq (1-\delta)\Phi(X,Y),  
\end{align*}
for some $0 < \delta < 1$. Then the mixing rate of the Markov chain is bounded by
\begin{align*}
\tau(\epsilon)\le\frac{\log(\Diam(\Omega)/\epsilon)}{\delta},
\end{align*}
where $\Diam(\Omega)$ denotes the diameter of $\Omega$ in the pre-metric.
\end{lemma}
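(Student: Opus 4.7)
The plan is to follow the standard argument of Bubley and Dyer: extend the assumed single-edge coupling into a global coupling of the Markov chain by telescoping along a shortest path in the pre-metric, and then convert expected $\Phi$-contraction into a total variation bound via the coupling lemma. Concretely, for arbitrary $X, Y \in \Omega$, I would pick a shortest path $X = Z_0, Z_1, \ldots, Z_k = Y$ in the pre-metric, so that each $(Z_i, Z_{i+1})$ is adjacent and $\Phi(X, Y) = \sum_{i=0}^{k-1} \Phi(Z_i, Z_{i+1})$. The hypothesized coupling provides a joint one-step transition on each adjacent pair $(Z_i, Z_{i+1}) \to (Z_i', Z_{i+1}')$ in which both marginals follow the Markov chain from $Z_i$ and $Z_{i+1}$ respectively. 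Because all the pairwise couplings assign the same marginal to $Z_i'$ (namely, the chain's transition from $Z_i$), they can be glued into a single joint distribution on $(Z_0', Z_1', \ldots, Z_k')$; the cleanest way is to couple the random bits that drive the chain at each $Z_i$ consistently across the pairwise couplings, so the marginal of $Z_i'$ is sampled only once.

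Once the global coupling exists, the contraction is an immediate telescoping calculation. By the triangle inequality for the shortest-path metric $\Phi$ and linearity of expectation,
\[
\mathbf{E}[\Phi(X', Y')] \;=\; \mathbf{E}[\Phi(Z_0', Z_k')] \;\le\; \sum_{i=0}^{k-1} \mathbf{E}[\Phi(Z_i', Z_{i+1}')] \;\le\; (1-\delta)\sum_{i=0}^{k-1} \Phi(Z_i, Z_{i+1}) \;=\; (1-\delta)\,\Phi(X,Y),
\]
using the assumed pairwise contraction in the last step. Iterating $t$ times yields $\mathbf{E}[\Phi(X^{(t)}, Y^{(t)})] \le (1-\delta)^t \Phi(X, Y) \le (1-\delta)^t \, \Diam(\Omega)$.

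To finish, I would note that $\Phi(X, Y) \ge 1$ whenever $X \ne Y$ (edges carry positive weight, and $\Phi$ is the shortest-path length), so Markov's inequality gives $\Pr[X^{(t)} \ne Y^{(t)}] \le \mathbf{E}[\Phi(X^{(t)}, Y^{(t)})] \le (1-\delta)^t \, \Diam(\Omega)$. The standard coupling lemma for Markov chains translates this into an upper bound on the total variation distance from stationarity, and requiring $(1-\delta)^t \, \Diam(\Omega) \le \epsilon$ yields $\tau(\epsilon) \le \log(\Diam(\Omega)/\epsilon)/\delta$ after using $-\log(1-\delta) \ge \delta$. The only technically subtle step is the chaining in the first paragraph: one must argue that the pairwise couplings, each a bona fide joint distribution with the correct marginal on $Z_i'$, assemble into a single joint distribution on the whole tuple $(Z_0', \ldots, Z_k')$. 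I expect this to be the main (and really the only) obstacle; it is resolved by the observation that the common marginal on each $Z_i'$ lets us sample it once and reuse it in both adjacent couplings, producing a Markovian joint distribution whose pairwise laws match those specified by the assumption.
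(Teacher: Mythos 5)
Your argument is correct and is the standard Bubley--Dyer path-coupling proof; the paper itself does not prove this lemma (it imports it by citation), so there is nothing to compare against beyond the original source, whose proof yours reproduces: glue the edge couplings along a shortest path $X=Z_0,\ldots,Z_k=Y$ by sampling each $Z_i'$ only once (equivalently, sample $(Z_0',Z_1')$ from the given coupling and then each $Z_{i+1}'$ from the conditional law of the $(Z_i,Z_{i+1})$-coupling given $Z_i'$, which is legitimate because the first-coordinate marginals agree), telescope with the triangle inequality for the shortest-path metric, iterate the resulting global contraction, and finish with Markov's inequality plus the coupling lemma. The one point to tighten is the claim that $\Phi(X,Y)\ge 1$ whenever $X\ne Y$: positive edge weights alone do not imply this (all weights could be $\tfrac12$), and the bound $\tau(\epsilon)\le\log(\Diam(\Omega)/\epsilon)/\delta$ as stated really needs weights bounded below by $1$ --- e.g.\ the integer-valued metrics in Bubley and Dyer's formulation --- which is what the paper's application provides (edge weights $\deg(v)\ge 1$); with general positive weights one would incur an extra $\log(1/\Phi_{\min})$ term in the mixing bound.
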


We use the following slightly modified pre-metric:
A pair $(X,Y)\in\Omega=[q]^V$ is connected by an edge in the pre-metric if and only if $X$ and $Y$ differ at only one vertex, say $v$, and the edge-weight is given by $\deg(v)$. This leads us to the following definition.

\begin{definition}

For any $X', Y'\in\Omega$, for $u\in V$, we define $\phi_u(X',Y')=\deg(u)$ if $X'_u\ne Y'_u$ and $\phi_u(X',Y')=0$ if otherwise; and for $S\subseteq V$, we define the distance between $X'$ and $Y'$ on $S$ as
\begin{align*}
\Phi_S(X',Y'):=\sum\limits_{u\in S:X'_u\ne Y'_u}{\phi_u(X',Y')}.
\end{align*}
In addition, we denote $\Phi(X',Y')=\Phi_V(X',Y')$.
\end{definition}
Clearly, the diameter of $\Omega$ in distance $\Phi$ has $\Diam(\Omega)\le n\Delta$.

We prove the mixing rate in Theorem~\ref{theorem-localmetropolis-graphcoloring} for two separate regimes for $q$ by using two different couplings.
We define $\alpha^*\approx3.634\ldots$ to be the positive root of $\alpha=2\mathrm{e}^{1/\alpha}+1$.

\begin{lemma}
\label{lemma-coloring-local-coupling}
If $q \ge \alpha\Delta +3$ for a constant $\alpha>\alpha^*$, then $\tau(\epsilon)=O(\log\left(\frac{n}{\epsilon}\right))$.
\end{lemma}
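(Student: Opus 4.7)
The plan is to apply the Bubley--Dyer path coupling lemma (Lemma~\ref{pathcoupling}) with the pre-metric $\Phi$ just defined: two configurations in $\Omega=[q]^V$ are adjacent iff they differ at a single vertex $v_0$, with edge-weight $\deg(v_0)$, so $\Diam(\Omega)\le n\Delta$. It therefore suffices to construct a coupling of the \LocalMetropolis{} chain, defined on adjacent pairs $(X,Y)$ with $X_{v_0}=a\ne b=Y_{v_0}$, that contracts in expectation: $\mathbf{E}[\Phi(X',Y')\mid X,Y]\le(1-\delta)\deg(v_0)$ for some constant $\delta=\delta(\alpha)>0$ independent of $\Delta$.

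I will use the \emph{identity coupling}: both chains draw the same uniformly random proposal vector $\sigma=(\sigma_v)_{v\in V}\in[q]^V$, and each chain independently applies its own local filter against its own current configuration. The key observation is that the filter at any vertex $u$ depends only on $X$ and $\sigma$ restricted to $\Gamma^+(u)$, so since $X$ and $Y$ agree outside $v_0$, every $u\notin\Gamma^+(v_0)$ automatically satisfies $X'_u=Y'_u$. Thus only $v_0$ and its neighbors contribute:
\[
\mathbf{E}[\Phi(X',Y')\mid X,Y]=\deg(v_0)\Pr[X'_{v_0}\ne Y'_{v_0}]+\sum_{u\in\Gamma(v_0)}\deg(u)\Pr[X'_u\ne Y'_u].
\]

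The first term is controlled by noting that a disagreement at $v_0$ can be resolved only if both chains accept the common proposal $\sigma_{v_0}$. Unpacking the three filter rules at $v_0$, joint acceptance requires (i) $\sigma_{v_0}\notin X(\Gamma(v_0))$ (Rule~1, identical in both chains since $X=Y$ on $\Gamma(v_0)$); (ii) $\sigma_{v_0}\ne\sigma_w$ for every $w\in\Gamma(v_0)$ (Rule~2); and crucially (iii) $\sigma_w\notin\{a,b\}$ for every $w\in\Gamma(v_0)$, because Rule~3 in the $X$-chain bans $\sigma_w=a$ while the $Y$-chain bans $\sigma_w=b$. Using the mutual independence of the $\sigma_v$'s and worst-casing over whether $\{a,b\}$ intersects $X(\Gamma(v_0))$, a direct enumeration yields
\[
\Pr[X'_{v_0}=Y'_{v_0}]\;\ge\;\left(1-\tfrac{d}{q}\right)\!\left(1-\tfrac{3}{q}\right)^{\!d},\qquad d=\deg(v_0).
\]
For the second term, the filters at a neighbor $u\in\Gamma(v_0)$ differ between the two chains only in the single clause at $w=v_0$, which reads $\sigma_u\ne a$ in the $X$-chain and $\sigma_u\ne b$ in the $Y$-chain (the Rule~3 clause $X_u\ne\sigma_{v_0}$ agrees since $X_u=Y_u$). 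Hence $X'_u\ne Y'_u$ forces $\sigma_u\in\{a,b\}$, and then the non-rejecting chain must still satisfy all its remaining $d_u-1$ neighbor clauses, giving
\[
\Pr[X'_u\ne Y'_u]\;\le\;\tfrac{2}{q}\!\left(1-\tfrac{2}{q}\right)^{\!d_u-1}.
\]

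Plugging both bounds into $\mathbf{E}[\Phi(X',Y')]$, dividing through by $d$, and applying an elementary monotonicity estimate (of the form $x(1-2/q)^{x-1}\le\Delta(1-2/q)^{\Delta-1}$, valid for $x\le\Delta$ once $q\ge 3\Delta$) to upper-bound each neighbor term by $\Delta\cdot\tfrac{2}{q}(1-\tfrac{2}{q})^{\Delta-1}$, the contraction condition reduces to
\[
\left(1-\tfrac{\Delta}{q}\right)\!\left(1-\tfrac{3}{q}\right)^{\!\Delta}\;>\;\Delta\cdot\tfrac{2}{q}\!\left(1-\tfrac{2}{q}\right)^{\!\Delta-1}.
\]
With $q=\alpha\Delta+3$ and $\Delta\to\infty$, the two sides tend to $(1-\tfrac{1}{\alpha})\mathrm{e}^{-3/\alpha}$ and $\tfrac{2}{\alpha}\mathrm{e}^{-2/\alpha}$, and the inequality rearranges to $(\alpha-1)\mathrm{e}^{-1/\alpha}>2$, i.e.\ $\alpha>2\mathrm{e}^{1/\alpha}+1=\alpha^*$; the ``$+3$'' slack then absorbs the sub-leading finite-$\Delta$ corrections and yields a uniform $\delta=\delta(\alpha)>0$, after which Lemma~\ref{pathcoupling} produces $\tau(\epsilon)\le\log(\Diam(\Omega)/\epsilon)/\delta=O(\log(n/\epsilon))$. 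The main obstacle will be the rigorous case analysis supporting the neighbor bound $\tfrac{2}{q}(1-\tfrac{2}{q})^{d_u-1}$ over arbitrary---possibly improper---adjacent pairs $(X,Y)$, since the acceptance probability at $u$ picks up worse exponents when $X_u\in\{a,b\}$ or when some $Y_w$ with $w\in\Gamma(u)\setminus\{v_0\}$ already equals $a$ or $b$, and one must verify that the clean bound above survives the worst such arrangement; the $\alpha^*$ threshold is exactly the regime where the ``Rule~3 interaction'' at $v_0$ (forcing $\sigma_w$ to miss $a$ and $b$ simultaneously) produces the extra factor needed to dominate this neighbor loss.
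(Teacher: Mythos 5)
Your proposal is correct and follows essentially the same route as the paper's proof: the identical-proposal (local) coupling, the degree-weighted path-coupling metric, the lower bound $\bigl(1-\frac{d}{q}\bigr)\bigl(1-\frac{3}{q}\bigr)^{d}$ at $v_0$ and the $\frac{2}{q}\bigl(1-\frac{2}{q}\bigr)^{d_u-1}$-type bound at neighbors, followed by the monotonicity estimate and the limiting computation giving the root of $\alpha=2\mathrm{e}^{1/\alpha}+1$. The only deviation is the exponent $d_u-1$ in place of the paper's $d_u$ (the paper also charges the clause at $v_0$ itself), which is a slightly weaker but still sufficient bound, and the case worries you raise at the end (e.g.\ $X_u\in\{a,b\}$, $\sigma_u=X_u$) only make the disagreement probability smaller, exactly as the paper's one-line remark handles them.
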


\begin{lemma}
\label{lemma-coloring-global-coupling}
If $\alpha\Delta\le q\le 3.7\Delta +3$ for $2+\sqrt{2}<\alpha\le3.7$ and $\Delta\ge 9$, then $\tau(\epsilon)=O(\log\left(\frac{n}{\epsilon}\right))$.
\end{lemma}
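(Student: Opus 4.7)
The plan is to apply path coupling (Lemma~\ref{pathcoupling}) with the same degree-weighted pre-metric used in Lemma~\ref{lemma-coloring-local-coupling}: a pair $(X,Y)\in\Omega=[q]^V$ is adjacent if and only if it differs at a single vertex $v_0$, with edge-weight $\deg(v_0)$. Fix such a pair and write $X_{v_0}=a$, $Y_{v_0}=b$. The local coupling (independent identical proposals at each vertex) used for Lemma~\ref{lemma-coloring-local-coupling} requires $q>\alpha^{*}\Delta$ with $\alpha^{*}\approx3.634$, so to push the threshold down to $2+\sqrt{2}$ in the present regime I would use a ``global'' coupling that propagates the single-vertex disagreement outward from $v_0$ in a breadth-first manner, following the template of the ideal tree coupling of Section~\ref{section-ideal-coupling}.

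Concretely, fix a BFS tree $T$ of $G$ rooted at $v_0$. At $v_0$, both chains propose the same uniform color. At any other vertex $u$ with $T$-parent $p$, draw the $X$-chain proposal $c^X_u$ uniformly from $[q]$, and set the coupled $Y$-chain proposal $c^Y_u$ equal to $c^X_u$ unless the parent $p$ ``switched'' in the coupling (i.e.\ $c^X_p\ne c^Y_p$), in which case $c^Y_u$ is obtained from $c^X_u$ by swapping $a\leftrightarrow b$. The set of switched vertices then forms a connected subtree of $T$ containing $v_0$, and the probability that a vertex $u$ at BFS-depth $\ell$ switches is at most $(2/q)^{\ell}$, since each step of propagation requires the proposal to land in $\{a,b\}$.

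Once the coupling is specified, the next step is to bound $p_u:=\Pr[X'_u\ne Y'_u]$ for every vertex $u$. The root contributes $p_{v_0}\le 1-(1-\Delta/q)(1-2/q)^{\Delta}$ as in the ideal tree, since $v_0$ accepts only when all its filter conditions pass. For a non-root $u$ at depth $\ell$, $X'_u\ne Y'_u$ requires that a switch has propagated along the BFS-path from $v_0$ to $u$, together with compatible filter acceptances in the two chains; in the ideal tree this gives $p_u\le \tfrac12(1-2/q)^{\Delta-1}(2/q)^{\ell}$, and for a general graph there are correction terms from (i) non-tree neighbors of $u$ in $G$ whose proposals may also have switched, (ii) BFS-siblings of $u$ that switch simultaneously, and (iii) the possibility that $X$ or $Y$ is improper at $u$ or its neighbors, which alters filter acceptances. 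Summing
\begin{align*}
\mathbf{E}[\Phi(X',Y')\mid X,Y]\le\sum_{u\in V}\deg(u)\,p_u
\end{align*}
and using $|\{u:\mathrm{depth}(u)=\ell\}|\le\Delta^{\ell}$ together with the geometric series $\sum_{\ell\ge 1}\Delta^{\ell}(2/q)^{\ell}=\Delta/(q-2\Delta)$ (which converges because $q>2\Delta$), the leading contribution matches the ideal tree bound, which becomes strictly less than $\deg(v_0)=\Phi(X,Y)$ exactly when $q/\Delta>2+\sqrt{2}$.

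The main obstacle is controlling those correction terms in the narrow window $2+\sqrt{2}<\alpha\le 3.7$, where no slack is left in the ideal analysis. This is precisely where the hypothesis $\Delta\ge 9$ enters: for very small $\Delta$, replacing $(1-c/q)^{\Delta}$ by its asymptotic $\mathrm{e}^{-c/\alpha}$ is too lossy, whereas $\Delta\ge 9$ provides enough slack that the finite-$\Delta$ error terms can be absorbed. After a careful case analysis for (i)--(iii) above, showing that each bad event at depth $\ell$ contributes an $O((2/q)^{\ell})$ factor with a summable prefactor, one obtains $\mathbf{E}[\Phi(X',Y')\mid X,Y]\le (1-\delta)\Phi(X,Y)$ for some $\delta=\delta(\alpha)>0$. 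Since $\Diam(\Omega)\le n\Delta$ in the pre-metric, Lemma~\ref{pathcoupling} then yields $\tau(\epsilon)=O(\delta^{-1}\log(n\Delta/\epsilon))=O(\log(n/\epsilon))$, as required.
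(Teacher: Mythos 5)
Your high-level strategy matches the paper's: path coupling with the degree-weighted pre-metric, and a global coupling in which the two disagreement colors $\{X_{v_0},Y_{v_0}\}$ are swapped along a disagreement that percolates outward from $v_0$, with the ideal $\Delta$-regular tree giving the $2+\sqrt{2}$ threshold. But the proposal stops exactly where the actual difficulty begins. Your ``correction terms (i)--(iii)'' are not side issues to be absorbed; they are the content of the proof, and your coupling as defined does not set them up to be controllable. You couple along a \emph{fixed BFS tree}, switching a child only when its tree-parent switched; the paper instead permutes the proposal at \emph{every} currently unswitched, ``unblocked'' neighbor of the growing disagreement set (a Galton--Watson-type percolation), and it treats separately the \emph{blocked} vertices, i.e.\ those adjacent to (or equal to) a vertex currently colored $X_{v_0}$ or $Y_{v_0}$, which are forced to propose consistently. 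This choice is what makes the two key structural facts true: disagreement in proposals can only travel along \emph{strongly self-avoiding walks} through unblocked vertices, each walk of length $\ell$ occurring with probability at most $(2/q)^{\ell}$ (Proposition~\ref{prop:SSAW}), and $X'_u\neq Y'_u$ for $u\neq v_0$ forces $c_u^X,c_u^Y\in\{X_{v_0},Y_{v_0}\}$ (Proposition~\ref{prop:global-coupling-differ}), which supplies the extra $1/q$ factor at the endpoint. With a fixed-tree rule, a consistent-proposing vertex adjacent to a switched non-tree neighbor can disagree in the filter step, and your sketch gives no mechanism (beyond $\Delta^{\ell}$ counting of depth classes, which ignores multiplicities of paths to a vertex and the interaction of simultaneous switches) to keep such events at the $O((2/q)^{\ell})$ scale with a prefactor small enough to survive the zero-slack window $2+\sqrt 2<\alpha\le 3.7$.

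Relatedly, your account of where the hypotheses enter is not accurate. In the paper, $\Delta\ge 9$ and the \emph{upper} bound $q\le 3.7\Delta+3$ are not used to absorb finite-$\Delta$ approximation error; they are used in two monotonicity arguments showing that blocked neighbors can only help, namely the induction bounding the walk-sum $\Phi_{\mathcal{P}}$ by its fixpoint $\frac{\Delta}{q-2\Delta+2}\QQ{\Delta-1}$ (Lemma~\ref{lemma-PHI}, which needs $3\Delta<q\le 3.7\Delta+3$, $\Delta\ge5$) and the inequality~\eqref{eq:global-coupling-b-v0-monotone} controlling $b_{v_0}$ at the root (which needs $q\le 3.7\Delta+3$, $\Delta\ge 9$). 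These are the steps that certify the ideal tree with $b_u=0$ as the worst case; without them, asserting that the general-graph bound ``matches the ideal tree'' is precisely the gap. So the proposal is a correct plan at the level of strategy, but the missing recursive walk-sum analysis and blocked-vertex case analysis constitute the bulk of the proof rather than routine bookkeeping.
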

Theorem~\ref{theorem-localmetropolis-graphcoloring} follows by combining the two lemmas.

\subsubsection{An easy local coupling for $q > 3.634\Delta +3$}
We first prove Lemma~\ref{lemma-coloring-local-coupling} by constructing a local coupling where the disagreement will not percolate outside its neighborhood.
Let $X,Y\in[q]^V$ two $q$-colorings, not necessarily proper. Assume that $X$ and $Y$ disagree only at vertex  $v_0\in V$.
The coupling $(X,Y)\to(X',Y')$ is constructed as follows:
\begin{itemize}
\item Each vertex $v\in V$ proposes the same random color in the two chains $X$ and $Y$. 
Then $(X',Y')$ is determined due to the transition rule of  \LocalMetropolis{} chain.  
\end{itemize}
Next we show the path coupling condition:
\[
\mathbf{E}[\Phi(X',Y') \mid X,Y] \le (1-\delta)\Phi(X,Y)=(1-\delta)\deg(v_0).
\]

The following technical lemma is frequently applied in the analysis of this and next couplings.
\begin{lemma}
\label{lemma-inequality}
If $q \geq a \Delta$, then for any integer $0\le d\le \Delta$, $d\left(1-\frac{a}{q}\right)^d\le \Delta\left(1-\frac{a}{q}\right)^{\Delta}$. 
\end{lemma}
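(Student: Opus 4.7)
The plan is to prove this as a simple discrete monotonicity statement for the one-variable function $f(d) := d(1-a/q)^d$ over the integer range $d \in \{0,1,\ldots,\Delta\}$. It suffices to show $f(d) \le f(d+1)$ for every $0 \le d \le \Delta - 1$, since then chaining these inequalities yields $f(d) \le f(\Delta)$ for all $d$ in the range, which is the conclusion.

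Set $c := 1 - a/q$. The hypothesis $q \ge a\Delta$ gives $a/q \le 1/\Delta$ and hence $c \ge (\Delta-1)/\Delta$. The base case $d = 0$ is trivial since $f(0) = 0$ and $c > 0$. For $d \ge 1$, I would simply compute the ratio $f(d+1)/f(d) = c(d+1)/d$, so the desired inequality $f(d+1) \ge f(d)$ is equivalent to $c(d+1) \ge d$. Substituting the lower bound $c \ge (\Delta-1)/\Delta$ reduces this to $(\Delta-1)(d+1) \ge \Delta d$, which simplifies to $\Delta \ge d+1$, precisely the assumed range $d \le \Delta - 1$.

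There is no real obstacle: the lemma is a discrete unimodality fact, and the calibration $q \ge a\Delta$ is exactly the borderline case where the continuous maximizer $d^{\star} = -1/\ln c$ lies at or beyond $\Delta - 1$, so the integer-valued map $f$ stays non-decreasing on $\{0,1,\ldots,\Delta\}$. One subtle point worth highlighting in the writeup is that the argument is tight — the maximum of $f$ over the integers can actually be attained at both $d = \Delta - 1$ and $d = \Delta$ when equality $c = (\Delta-1)/\Delta$ holds — but this does not affect the (non-strict) inequality asserted in the lemma.
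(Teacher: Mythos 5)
Your proof is correct and follows essentially the same route as the paper: both establish that $d\left(1-\frac{a}{q}\right)^d$ is non-decreasing over consecutive integers $0\le d\le\Delta$, the paper via the difference $d\left(1-\frac{a}{q}\right)^d-(d-1)\left(1-\frac{a}{q}\right)^{d-1}=\left(1-\frac{a}{q}\right)^{d-1}\left(1-\frac{ad}{q}\right)\ge 0$ for $q\ge ad$, and you via the equivalent ratio condition $c(d+1)\ge d$ with $c\ge(\Delta-1)/\Delta$. The differences are purely cosmetic, so no changes are needed.
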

\begin{proof}
It is sufficient to show the function $d\left(1-\frac{a}{q}\right)^d$ is monotone for integer $1\le d\le \Delta$:
\begin{align*}
d\left(1-\frac{a}{q}\right)^d - (d-1)\left(1-\frac{a}{q}\right)^{d-1}=\left(1-\frac{a}{q}\right)^{d-1}\left(1-\frac{ad}{q}\right),
\end{align*}
which is nonnegative when $q\ge ad$.
\end{proof}

\paragraph{Proof of Lemma \ref{lemma-coloring-local-coupling}.}
First, observe that if $v \not \in \Gamma^+(v_0)$, where $v_0$ is the vertex at which $X$ and $Y$ disagree, then it always holds that $X'_v = Y'_v$, because all vertices in $\Gamma^+(v)$ are colored the same in $X$ and $Y$ and will propose the same random color in the two chains due to the coupling. Therefore, it is sufficient to consider the difference between $X'$ and $Y'$ in  $\Gamma^+(v_0)$ and we have
\[
\Phi(X',Y')=\Phi_{\Gamma^+(v_0)}(X',Y').
\]

For each $v$, let $c_v\in[q]$ be the uniform random color proposed independently by $v$, which is identical in both chains by the coupling.

For the disagreeing vertex $v_0$, it holds that $X'_{v_0} =Y'_{v_0}$ if $v_0$ accepts the proposal in both chains, which occurs when $c_{v_0} \not \in \{X_u, Y_u: {u\in\Gamma(v_0)}\}$ and $\forall u \in \Gamma(v_0), c_u \not \in \{X_{v_0},Y_{v_0}, c_{v_0}\}$. 
Since $X$ and $Y$ disagree only at $v_0$, we have
\begin{align}
\Pr[X'_{v_0}  = Y'_{v_0}\mid X,Y]\geq \left(1-\frac{d_{v_0}}{q}\right)\left(1-\frac{3}{q}\right)^{d_{v_0}}.\label{eq:local-coupling-center}
\end{align}

For each $u \in \Gamma(v_0)$, since $X_{u}=Y_{u}$, the event $X'_u\neq Y'_u$ occurs only when  $c_u \in \{X_{v_0}, Y_{v_0}\}$ and $\forall w \in \Gamma(u)$, $c_w \not \in \{X_u, c_u\}$. Note that to guarantee $X_u'\neq Y'_u$ one must have $c_u\neq X_u$, thus

\begin{align}
\forall u\in\Gamma(v_0): \quad \Pr[X'_u  \not = Y'_u\mid X,Y]  & \leq \frac{2}{q}\left(1-\frac{2}{q}\right)^{d_u}.\label{eq:local-coupling-neighbor}
\end{align}

Combining \eqref{eq:local-coupling-center} and \eqref{eq:local-coupling-neighbor} together and due to linearity of expectation, we have
\begin{align*}
\mathbf{E}[\Phi(X',Y') \mid X,Y] =&\sum\limits_{u\in V}{\mathbf{E}[\phi_u(X',Y')\mid X,Y]}
=\sum_{u \in \Gamma^+(v_0)} d_u \Pr[X'_u \not = Y'_u \mid X,Y]
\\\leq& d_{v_0} \left[1- \left(1-\frac{d_{v_0}}{q}\right)\left(1-\frac{3}{q}\right)^{d_{v_0}}\right] + \frac{2}{q}\sum_{u \in \Gamma({v_0})}d_u\left(1-\frac{2}{q}\right)^{d_u}
\\\le& d_{v_0}\left[1-\left(1-\frac{\Delta}{q}\right)\left(1-\frac{3}{q}\right)^\Delta+\frac{2\Delta}{q}\left(1-\frac{2}{q}\right)^\Delta\right], 
\end{align*}
where the last inequality is due to the monotonicity stated in Lemma~\ref{lemma-inequality}.

The path coupling condition
is satisfied when
\begin{align}
\label{ieq-3.6}
\left(1-\frac{\Delta}{q}\right)\left(1-\frac{3}{q}\right)^\Delta-\frac{2\Delta}{q}\left(1-\frac{2}{q}\right)^\Delta \ge \delta.
\end{align}

For $q=\alpha^*\Delta$ and $\Delta\to\infty$,  then the LHS becomes $\left(1-\frac{1}{\alpha^*}\right)\mathrm{e}^{-{3}/{\alpha^*}}-\frac{2}{\alpha}\mathrm{e}^{-{2}/{\alpha^*}}$, which is 0 when $\alpha^*$ is the positive root of $\alpha^* = 2\mathrm{e}^{{1}/{\alpha^*}} + 1$.

Furthermore, for $\Delta\ge 1$ and $q\ge\alpha\Delta+3$, the LHS become:
\begin{align*}
\left(1-\frac{3}{q}\right)^\Delta\left[1-\frac{\Delta}{q}-\frac{2\Delta}{q}\left(1+\frac{1}{q-3}\right)^\Delta\right]
\ge&\left(1-\frac{3}{\alpha \Delta+3}\right)^\Delta\left[1-\frac{1}{\alpha}-\frac{2}{\alpha}\left(1+\frac{1}{\alpha\Delta}\right)^\Delta\right]
\\\ge& \frac{\mathrm{e}^{-3/\alpha}}{\alpha}(\alpha-2\mathrm{e}^{1/\alpha}-1),
\end{align*}
which is a positive constant independent of $\Delta$ when $\alpha>\alpha^*$.

Therefore, when $\alpha>\alpha^*$, there is a constant $\delta>0$ which depends only on $\alpha$, such that for all $\Delta\ge 1$ and $q\ge\alpha\Delta+3$, the inequality~\eqref{ieq-3.6} is satisfied, which by Lemma~\ref{pathcoupling},  gives us $\tau(\epsilon) = O\left(\log\left(\frac{n}{\epsilon}\right)\right)$.

\subsubsection{A global coupling for $(2+\sqrt{2})\Delta<q\le3.7\Delta+3$}
Next, we prove Lemma~\ref{lemma-coloring-global-coupling} and bound the mixing rate when $(2+\sqrt{2})\Delta<q\le3.7\Delta+3$. This is done by a global coupling where the disagreement may percolate to the entire graph, whose construction and analysis is substantially more sophisticated than the previous local coupling. Although this sophistication only improves the threshold for $q$ in Lemma~\ref{lemma-coloring-local-coupling} by a small constant factor, the effort is worthwhile because it helps us to approache the threshold  of the ideal coupling discussed in Section~\ref{section-ideal-coupling} and shows that the infinite $\Delta$-regular tree $\mathbb{T}_{\Delta}$ represents the worst case for path coupling. And curiously, the extremity of this worst case only holds when $q$ is also properly upper bounded, say~$q\le 3.7\Delta+3$, whereas the mixing rate for larger $q$ was guaranteed by Lemma~\ref{lemma-coloring-local-coupling}.

Let $v_0 \in V$ be a vertex and $X,Y \in [q]^V$ any two $q$-colorings (not necessarily proper) which disagree only at $v_0$. 
The coupling $(X,Y)\rightarrow(X',Y')$ of the \LocalMetropolis{} chain is constructed by coupling $(\boldsymbol{c}^X,\boldsymbol{c}^Y)$, where $\boldsymbol{c}^X,\boldsymbol{c}^Y\in[q]^V$ are the respective vector of proposed colors in the two chains $X$ and $Y$. For each $v\in V$, the $(c_v^X,c_v^Y)$ is sampled from one of the two following joint distributions:
\begin{itemize}
\item \textbf{consistent:} $c_v^X=c_v^Y$ and is uniformly distributed over $[q]$;
\item \textbf{permuted:} $c_v^X$ is uniform in $[q]$ and $c_v^Y=\phi(c_v^X)$ where $\phi:[q]\to[q]$ is a  bijection defined as that $\phi(X_{v_0})=Y_{v_0}$, $\phi(Y_{v_0})=X_{v_0}$, and $\phi(x)=x$ for all $x\not\in\{X_{v_0},Y_{v_0}\}$.
\end{itemize}
Note that for all $u\neq v_0$ we have $X_u=Y_u$, and if further $X_u\in\{X_{v_0},Y_{v_0}\}$, we say the vertices $w\in\Gamma^+(u)\setminus\{v_0\}$ are \concept{blocked} by $u$, and all other $u\neq v_0$ is \concept{unblocked}. The special vertex $v_0$ is neither blocked nor unblocked. 
We denote by $\Gamma^B(v)$ and $\Gamma^U(v)$ the respective sets of blocked and unblocked neighbors of vertex $v$ and let $b_v = |\Gamma^B(v)|$.

The coupling $(\boldsymbol{c}^X,\boldsymbol{c}^Y)$ of proposed colors is constructed by the following recursive procedure:
\begin{itemize}
\item Initially, for the disagreeing vertex $v_0$, $(c_{v_0}^X,c_{v_0}^Y)$ is sampled consistently in the two chains. 
\item For each unblocked $u\in \Gamma(v_0)$,  the $(c_{u}^X,c_{u}^Y)$ is sampled independently (of other vertices) from the permuted distribution. 
\item Let $\mathcal{S}\subseteq V$ denote the current set of vertices $v$ such that $(c_v^X,c_v^Y)$ has been sampled, and $\mathcal{S}^{\neq}\subseteq\mathcal{S}$ the set of vertices $v$ with $(c_v^X,c_v^Y)$ sampled inconsistently as $c_v^X\neq c_v^Y$. We abuse the notation and use $\partial \mathcal{S}^{\neq}=\{\text{unblocked }u\not\in\mathcal{S}\mid \exists uv\in E, \text{ s.t. }v\in\mathcal{S}^{\neq} \}$ to denote the unblocked un-sampled vertex boundary of $\mathcal{S}^{\neq}$. 
If such $\partial \mathcal{S}^{\neq}$ is non-empty, then all $u\in \partial \mathcal{S}^{\neq}$ sample the respective  $(c_{u}^X,c_{u}^Y)$ independently from the permuted distribution and join the $\mathcal{S}$ simultaneously. Grow $\mathcal{S}^{\neq}$  according to the results of sampling. Repeat this step until the current $\partial \mathcal{S}^{\neq}$ is empty and thus $\mathcal{S}$ is stabilized.
\item For all remaining vertices $v$, $(c_v^X,c_v^Y)$ is sampled independently and consistently.
\end{itemize}
This procedure is in fact a Galton-Watson branching process starting from root $v_0$. The blocked-ness of each vertex is determined by the current $X$ and $Y$. The $\mathcal{S}$ grows from the root by a percolation of disagreement $c_v^X\neq c_v^Y$ added in a breadth-first order. 

It is easy to see that each individual $c_v^X$ or $c_v^Y$ is uniformly distributed over $[q]$ and is independent of $c_u^X$ or $c_u^Y$ for all other $u\neq v$ (although the joint distributions $(c_v^X,c_v^Y)$ may be dependent of each other). Therefore, the $(\boldsymbol{c}^X,\boldsymbol{c}^Y)$ is a valid coupling of proposed colors. 

A walk $\mathcal{P}=(v_0,v_1,\ldots,v_\ell)$ in $G(V,E)$ is called a \concept{strongly self-avoiding walk (SSAW)} if $\mathcal{P}$ is a simple path in $G$ and $v_iv_j$ is not an edge in $G$ for any $0 < i+1<j \le \ell$. An SSAW $\mathcal{P}=(v_0,v_1,\ldots,v_\ell)$ is said to be a \concept{path of disagreement} with respect to $(\boldsymbol{c}^X,\boldsymbol{c}^Y)$ if $(c_{v_i}^X,c_{v_i}^Y), v_i\in\mathcal{P}$ are sampled in the order along the path $\mathcal{P}$ from $i=0$ to $\ell$, and $c_{v_i}^X\neq c_{v_i}^Y$ for all $1\le i\le \ell$.
For any specific SSAW $\mathcal{P}=(v_0,v_1,\ldots,v_\ell)$ through unblocked vertices $v_1,v_2,\ldots,v_{\ell}$,
by the chain rule
\begin{align}
&\Pr[\,\mathcal{P}\text{ is a path of disagreement }]\le \prod_{i=1}^\ell \Pr\left[\,c_{v_i}^X\in \{X_{v_0},Y_{v_0}\}\,\right]=\left(\frac{2}{q}\right)^\ell.\label{eq:disagreemnet-path-probability}
\end{align}

\begin{proposition}\label{prop:SSAW}
For any vertex $u\neq v_0$, the event $c_u^X\neq c_u^Y$ occurs only if there is a strongly self-avoiding walk (SSAW) $\mathcal{P}=(v_0,v_1,\ldots,v_\ell)$ from $v_0$ to $v_\ell=u$ through unblocked vertices $v_1,v_2,\ldots,v_{\ell}$ such that $\mathcal{P}$ is a path of disagreement.
\end{proposition}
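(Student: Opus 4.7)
The plan is to argue that the BFS-like branching procedure defining the coupling leaves a tell-tale ``ancestry trail'' for each disagreeing vertex, and that trail is exactly the required SSAW path of disagreement. First, I would observe that a vertex $v$ can only be sampled from the permuted distribution when it is either an unblocked neighbor of $v_0$ (joined in round $1$) or is added to $\mathcal{S}$ through some $\partial\mathcal{S}^{\neq}$ in a later iteration; every blocked vertex, the root $v_0$ itself, and every vertex sampled in the final ``remaining'' step are sampled consistently, so $c_v^X = c_v^Y$ for them. Therefore, $c_u^X \neq c_u^Y$ already forces $u \neq v_0$, $u$ unblocked, and $u$ added to $\mathcal{S}$ at some round $\ell \ge 1$ of the branching process.

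Next, I would assign each $v \in \mathcal{S}$ its round number $r(v)$ (the iteration at which it was added, with $r(v_0) = 0$) and build $\mathcal{P} = (v_0, v_1, \ldots, v_\ell)$ backwards from $v_\ell = u$. The inductive step is: given $v_i$ with $r(v_i) = i \ge 1$ and $v_i \in \mathcal{S}^{\neq}$, there must exist a neighbor $w$ of $v_i$ with $w \in \mathcal{S}^{\neq}$ and $r(w) = i-1$, because if every disagreeing neighbor of $v_i$ had round at most $i-2$, then $v_i$ itself would have been placed in $\partial \mathcal{S}^{\neq}$ and sampled by round $i-1$, contradicting $r(v_i) = i$; set $v_{i-1} = w$. (For $i = 1$ we simply take $v_0$, using that $v_1$ is an unblocked neighbor of $v_0$.) This yields $\mathcal{P}$ with $r(v_i) = i$ for all $i$, every $v_1, \ldots, v_\ell$ unblocked, every $c_{v_i}^X \neq c_{v_i}^Y$ for $i \ge 1$, and the samples drawn in the order $v_0, v_1, \ldots, v_\ell$.

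Finally, I would verify the three items in the statement: simplicity is immediate from pairwise-distinct rounds; the samples being drawn in path order follows from the round-by-round execution of the procedure; and ``path of disagreement'' is immediate from $v_i \in \mathcal{S}^{\neq}$. I expect the only non-routine step to be verifying strong self-avoidance: if $v_i v_j \in E$ with $j \ge i + 2$, then $v_j$ is unblocked and adjacent either to $v_0$ (when $i = 0$, forcing $r(v_j) = 1 < j$) or to the disagreeing vertex $v_i$ with $r(v_i) = i \ge 1$ (forcing $v_j$ to enter $\partial \mathcal{S}^{\neq}$ no later than round $i + 1$, hence $r(v_j) \le i + 1 < j$); in both cases we contradict $r(v_j) = j$. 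This ``minimality of BFS'' argument is the crux, and handling the degenerate $i = 0$ case cleanly alongside the general case is where I would focus the most care.
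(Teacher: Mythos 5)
Your proposal is correct and takes essentially the same route as the paper's proof: trace the disagreement at $u$ backwards through the branching process to $v_0$ (disagreement forces permuted sampling, hence unblockedness, hence a disagreeing predecessor sampled earlier), and rule out chords $v_iv_j$ using the fact that every unblocked, un-sampled neighbor of a newly disagreeing vertex is sampled in the very next round. Your explicit round-number bookkeeping $r(v_i)=i$ is just a slightly more formal phrasing of the paper's ``sampled simultaneously'' contradiction.
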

\begin{proof}
By the coupling, $c^X_u \neq c^Y_u$ only when $(c^X_u, c^Y_u)$ is sampled from the permuted distribution and it must hold that $\{c^X_u,c^Y_u\} = \{X_{v_0},Y_{v_0}\}$. This means that $u$ itself must be unblocked.

At the time when $(c^X_u, c^Y_u)$ is being sampled, there must exist a neighbor $w \in \Gamma(u)$ such that either (1) $w=v_0$ or (2) $w \in \mathcal{S}^{\neq}$, which means that $c^X_w\neq c^Y_w$, $\{c^X_w,c^Y_w\} = \{X_{v_0},Y_{v_0}\}$ was sampled before $(c^X_u, c^Y_u)$, and vertex $w$ is unblocked. 
If it is the latter case, we repeat this argument for $w$ recursively until $v_0$ is reached. This will give us a path $\mathcal{P}=(v_0,v_1,\ldots,v_{\ell})$ from $v_0$ to $u=v_{\ell}$ through unblocked vertices $v_1,\ldots,v_{\ell}$ such that for all $1\le i\le \ell$, $(c^X_{v_i}, c^Y_{v_i})$ are sampled in that order, $c^X_{v_i} \neq c^Y_{v_i}$ and $\{c^X_{v_i},c^Y_{v_i}\} = \{X_{v_0},Y_{v_0}\}$.
Thus, $\mathcal{P}$ is a path of disagreement through unblocked vertices. 
Note that this path $\mathcal{P}=(v_0,v_1,\ldots,v_{\ell})$ must be a strongly self-avoiding. To the contrary assume that $\mathcal{P}$ is not strongly self-avoiding and there exist $0\le i,j\le \ell$ such that $i<j-1$ and $v_{i}v_{j}$ is an edge. In this case, right after $c^X_{v_i}\neq c^Y_{v_i}$ being sampled and $v_i$ joining $\mathcal{S}^{\neq}$,  $v_{i+1}$ and $v_{j}$ must be both in $\partial \mathcal{S}^{\neq}$ because they are both unblocked un-sampled neighbors of $v_i$ then. And due to our construction of coupling, the $(c^X_{v_{i+1}}, c^Y_{v_{i+1}})$ and $(c^X_{v_{j}}, c^Y_{v_{j}})$ are sampled and $v_{i+1}, v_{j}$ join $\mathcal{S}$ simultaneously, which contradict that $(c^X_{v_{j}}, c^Y_{v_{j}})$ is sampled after $(c^X_{v_{i+1}}, c^Y_{v_{i+1}})$ along the path.
Therefore, $\mathcal{P}$ is an SSAW through unblocked vertices and is also a path of disagreement.
\end{proof}

The coupled next step $(X',Y')$ is determined by the current $(X,Y)$ and the coupled proposed colors $(\boldsymbol{c}^X,\boldsymbol{c}^Y)$.

\begin{proposition}\label{prop:global-coupling-differ}
 For any  vertex $u\neq v_0$, the event $X'_u\neq Y'_u$ occurs only if $c^X_u,c^Y_u \in \{X_{v_0},Y_{v_0}\}$. Furthermore, for any unblocked vertex $u\neq v_0$, the event $X_u'\neq Y_u'$ occurs only if $c_u^X\neq c_u^Y$.
 \end{proposition}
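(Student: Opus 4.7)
\medskip
\noindent\textbf{Proof plan for Proposition~\ref{prop:global-coupling-differ}.}
The plan is to exploit the fact that $X_u=Y_u$ for every $u\neq v_0$, so $X'_u\in\{X_u,c^X_u\}$ and $Y'_u\in\{Y_u,c^Y_u\}=\{X_u,c^Y_u\}$. Hence $X'_u\neq Y'_u$ forces at least one of the chains to accept while proposing a color different from $X_u$, and I will rule out this possibility under the appropriate hypotheses by a case analysis on the three filtering conditions that make up the rejection rule.

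For the first statement I would dispose of the case $c^X_u\neq c^Y_u$ immediately: by the construction of the coupling, if $c^X_u\neq c^Y_u$ then $(c^X_u,c^Y_u)$ was sampled from the permuted distribution, which by definition only produces pairs in $\{X_{v_0},Y_{v_0}\}^2$ whose two coordinates disagree, so both $c^X_u,c^Y_u\in\{X_{v_0},Y_{v_0}\}$ automatically. It remains to assume $c^X_u=c^Y_u=c$ with $c\notin\{X_{v_0},Y_{v_0}\}$ and to show that $Y$ rejects iff $X$ does. Compare the three filter conditions for $u$ at any neighbor $w$: (i) $c=X_w$ vs $c=Y_w$ can differ only at $w=v_0$, which forces $c\in\{X_{v_0},Y_{v_0}\}$ and contradicts the hypothesis; (ii) $c=c^X_w$ vs $c=c^Y_w$ can differ only when $c^X_w\neq c^Y_w$, but then the permuted sampling at $w$ again forces $c\in\{X_{v_0},Y_{v_0}\}$; (iii) $X_u=c^X_w$ vs $X_u=c^Y_w$ is analyzed identically, using that $X_u=Y_u$ together with the disagreement of the proposals at $w$.

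For the second statement I would additionally use the characterization of unblocked $u$: $X_u\notin\{X_{v_0},Y_{v_0}\}$, and every neighbor $w\in\Gamma(u)\setminus\{v_0\}$ has $X_w\notin\{X_{v_0},Y_{v_0}\}$. Under the hypothesis $c^X_u=c^Y_u=c$ I want $X'_u=Y'_u$, and it suffices to show that no rejection reason in $Y$ can hold without the corresponding reason in $X$ (and vice versa). The same trichotomy (i)--(iii) arises: case (i) forces $c\in\{X_{v_0},Y_{v_0}\}$ via $w=v_0$, but if $u\in\Gamma(v_0)$ then by the breadth-first construction $u$ was sampled from the permuted distribution, so $c^X_u=c^Y_u$ together with $c\in\{X_{v_0},Y_{v_0}\}$ is impossible; cases (ii) and (iii) require $c^X_w\neq c^Y_w$, which by the branching construction places $w\in\mathcal{S}^{\neq}$, so $u\in\partial\mathcal{S}^{\neq}$ at some stage and would have been sampled permuted, again contradicting $c^X_u=c^Y_u$ when $c\in\{X_{v_0},Y_{v_0}\}$, while in case (iii) we additionally use $X_u\in\{X_{v_0},Y_{v_0}\}$ to contradict unblockedness directly.

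The main obstacle I anticipate is the bookkeeping that weaves together the three filtering rules with the recursive construction of the coupling. In particular, one has to carefully exploit the breadth-first ordering: whenever a permuted disagreement $c^X_w\neq c^Y_w$ appears at a neighbor of an unblocked un-sampled vertex $u$, the construction would have placed $u$ into $\partial\mathcal{S}^{\neq}$ and sampled it permuted, which is the precise mechanism that rules out the mixed scenarios above. Getting this interaction straight, rather than the individual inequalities, is where the work lies.
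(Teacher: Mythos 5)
Your overall strategy---fix a vertex $u\neq v_0$ with $X_u=Y_u$, and compare, edge by edge, the three filter conditions across the two chains to show the edge $uw$ passes or fails consistently---is the same route the paper takes, and your treatment of the second statement (unblocked $u$) is essentially complete: each of your three cases is correctly closed using the structure of the permuted distribution, the breadth-first sampling order, and the unblockedness of $u$.

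The first statement, however, has a gap in your case (iii). A disagreement between the conditions $X_u=c^X_w$ and $Y_u=c^Y_w$ requires $c^X_w\neq c^Y_w$, hence $\{c^X_w,c^Y_w\}=\{X_{v_0},Y_{v_0}\}$, and the ``identical analysis'' then yields $X_u\in\{X_{v_0},Y_{v_0}\}$, not $c\in\{X_{v_0},Y_{v_0}\}$. This does not contradict your hypothesis $c^X_u=c^Y_u=c\notin\{X_{v_0},Y_{v_0}\}$, because in the first statement $u$ may be blocked, in particular blocked by itself with $X_u=Y_u\in\{X_{v_0},Y_{v_0}\}$, so nothing you wrote rules this scenario out. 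The missing ingredient, which is exactly what the paper uses at this point, is that if $X_u=Y_u\in\{X_{v_0},Y_{v_0}\}$ then every neighbor $w\in\Gamma(u)\setminus\{v_0\}$ is blocked by $u$, and $v_0$ itself always proposes consistently; hence every $w\in\Gamma(u)$ satisfies $c^X_w=c^Y_w$, so the proposal disagreement that your case (iii) needs cannot occur for such $u$ in the first place. With that one observation inserted (note it is about the blockedness of $w$ forced by $u$'s color, not about $u$'s own status), your case analysis covers all of the first statement and the argument matches the paper's; without it, the self-blocked case is left open.
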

 \begin{proof}
We pick any  $u\neq v_0$. Assume by contradiction that $c^X_u=c^Y_u \not \in \{X_{v_0},Y_{v_0}\}$. Note that this covers all possible contradicting cases to that $c^X_u,c^Y_u \in \{X_{v_0},Y_{v_0}\}$, because $c^X_u\neq c^Y_u$ occurs only when $c^X_u,c^Y_u \in \{X_{v_0},Y_{v_0}\}$.

We then show for every edge $uw$ incident to $u$, the followings hold:
\begin{align}
	&c^X_u = c^X_w \text{ if and only if } c^Y_u = c^Y_w, \label{eq-condition1}\\
	&X_u = c^X_w \text{ if and only if } Y_u = c^Y_w,\label{eq-condition2}\\
	&c^X_u = X_w \text{ if and only if } c^Y_u = Y_w.\label{eq-condition3}
\end{align}
With~\eqref{eq-condition1}, \eqref{eq-condition2} and \eqref{eq-condition3}, each edge $uw$ passes the check in chain $X$ if and only if it passes the check in chain $Y$. 
Combining with the fact that $X_u=Y_u$ for all $u\neq v_0$, this implies $X_u'=Y_u'$, a contradiction.

We then verify~\eqref{eq-condition1}, \eqref{eq-condition2} and \eqref{eq-condition3}:
\begin{itemize}
 \item
If $X_u=Y_u \in \{X_{v_0},Y_{v_0}\}$, then for every neighbor $w \in \Gamma(u)$, either $w$ is blocked or $w=v_0$. In both cases $c_w^X=c_w^Y$ is sampled consistently, this implies~\eqref{eq-condition1} and~\eqref{eq-condition2}, because $c^X_u = c^Y_u$ and $X_u = Y_u$. And  it holds that either $\{X_w,Y_w\} = \{X_{v_0},Y_{v_0}\}$ (in case of $w = v_0$) or $X_w = Y_w$ (in case of $w \neq v_0$), this implies~\eqref{eq-condition3} because $c^X_u = c^Y_u \not \in \{X_{v_0},Y_{v_0}\}$. 
  \item
  If $X_u=Y_u \not \in \{X_{v_0},Y_{v_0}\}$. For each neighbor $w \in \Gamma(u)$, it holds that either $\{c^X_w,c^Y_w\} = \{X_{v_0},Y_{v_0}\}$ or  $c^X_w=c^Y_w$, because the event $c^X_w \neq c^Y_w$ happens if and only if $\{c^X_w,c^Y_w\} = \{X_{v_0},Y_{v_0}\}$ due to the coupling. Recall that $c^X_u = c^Y_u \not \in \{X_{v_0},Y_{v_0}\}$ and $X_u=Y_u \not \in \{X_{v_0},Y_{v_0}\}$, this implies~\eqref{eq-condition1} and~\eqref{eq-condition2}. And  it holds that either $\{X_w,Y_w\} = \{X_{v_0},Y_{v_0}\}$ (in case of $w = v_0$) or $X_w = Y_w$ (in case of $w \neq v_0$), this implies~\eqref{eq-condition3} because $c^X_u = c^Y_u \not \in \{X_{v_0},Y_{v_0}\}$. 
 
\end{itemize}

For an unblocked vertex $u\neq v_0$, assume $X_u'\neq Y_u'$. By above argument, we must have $c_u^X, c_u^Y\in\{X_{v_0},Y_{v_0}\}$. We then show that $c^X_u \neq c^Y_u$. By contradiction, we assume $c^X_u = c^Y_u$, since $c_u^X, c_u^Y\in\{X_{v_0},Y_{v_0}\}$, the $(c_u^X, c_u^Y)$ must be sampled from the consistent distribution. And since $u$ is unblocked and $u\neq v_0$, the $(c_u^X, c_u^Y)$ is sampled from the consistent distribution only when for all neighbors $w\in \Gamma(u)$, $w\neq v_0$ (which means $X_w=Y_w$) and $c^X_w=c^Y_w$. In summary, $X_u=Y_u$, $c_u^X=c_u^Y$, and $X_w=Y_w$, $c_w^X=c_w^Y$ for all neighbors $w\in\Gamma(u)$, which guarantees that $X_u'=Y_u'$, a contradiction.
Therefore, we also show that for any unblocked $u\neq v_0$,  $X_u'\neq Y_u'$ only if $c_u^X\neq c_u^Y$.
 \end{proof}

We then analyze the probability of $X'_u \neq Y'_u$ for each vertex $u \in V$. 

\begin{lemma}
\label{lemma-v0}
For the vertex $v_0$ at which the $q$-colorings $X,Y \in [q]^V$ disagree, 
\begin{align*}
\Pr[X'_{v_0}=Y'_{v_0} \mid X,Y] 
\geq \left(1-\frac{\Delta}{q}\right)\QQ{\Delta}\left(1-\frac{1}{q-2}\right)^{b_{v_0}}.
\end{align*}
\end{lemma}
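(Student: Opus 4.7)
The plan is to exploit the structure of the coupling at $v_0$ and its neighborhood directly. Since $X_{v_0} \neq Y_{v_0}$ and the coupling samples $c := c^X_{v_0} = c^Y_{v_0}$ consistently, the only way to get $X'_{v_0} = Y'_{v_0}$ is for $v_0$ to accept in both chains (rejecting in both would give $X_{v_0} \neq Y_{v_0}$, and rejecting in exactly one gives $c$ on one side and $X_{v_0}$ or $Y_{v_0}$ on the other). So I will lower bound the joint acceptance probability by conditioning on $c$ and treating the $d_{v_0}$ incident edges separately, using independence of the proposed colors at distinct vertices in the coupling.

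Let $C = \{X_u : u \in \Gamma(v_0)\}$, a set of size at most $d_{v_0}$. A necessary condition for $v_0$ to accept in either chain is $c \notin C$, and $\Pr[c \notin C] \geq 1 - d_{v_0}/q$. Conditioned on $c \notin C$, the edge $v_0 u$ passes in chain $X$ iff $c^X_u \notin \{c, X_{v_0}\}$ and passes in chain $Y$ iff $c^Y_u \notin \{c, Y_{v_0}\}$. For an unblocked neighbor $u$, where $(c^X_u, c^Y_u)$ is sampled from the permuted distribution via the swap $\phi$ of $X_{v_0}$ and $Y_{v_0}$, a small case analysis on whether $c \in \{X_{v_0}, Y_{v_0}\}$ shows that in every case the joint condition reduces to $c^X_u$ avoiding exactly two values (namely $\{c, X_{v_0}\}$ if $c \notin \{X_{v_0},Y_{v_0}\}$, and $\{X_{v_0},Y_{v_0}\}$ otherwise), giving probability $(q-2)/q$. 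For a blocked neighbor $u$, where $c^X_u = c^Y_u$ is uniform, the joint condition is $c^X_u \notin \{c, X_{v_0}, Y_{v_0}\}$, which has probability at least $(q-3)/q$ in every case (with equality when $c \notin \{X_{v_0}, Y_{v_0}\}$).

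Combining these, since the proposals at the neighbors are mutually independent in the coupling,
\[
\Pr[X'_{v_0}=Y'_{v_0} \mid X,Y] \;\geq\; \left(1 - \frac{d_{v_0}}{q}\right)\left(1-\frac{2}{q}\right)^{d_{v_0}-b_{v_0}}\left(1-\frac{3}{q}\right)^{b_{v_0}}.
\]
I will then use the identity $(1-3/q) = (1-2/q)(1-1/(q-2))$ to rewrite this as
\[
\left(1 - \frac{d_{v_0}}{q}\right)\left(1-\frac{2}{q}\right)^{d_{v_0}}\left(1-\frac{1}{q-2}\right)^{b_{v_0}},
\]
and finally invoke monotonicity of $f(d) = (1-d/q)(1-2/q)^d$, which is decreasing in $d$ whenever $2(d+1) < 3q$ (certainly satisfied in our regime $q \geq \alpha\Delta$ with $\alpha > 2$), to replace $d_{v_0}$ with $\Delta$ and conclude the stated bound.

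The main obstacle is the case analysis at unblocked neighbors: the joint condition on $(c^X_u, c^Y_u)$ must be checked separately for $c \notin \{X_{v_0}, Y_{v_0}\}$, $c = X_{v_0}$, and $c = Y_{v_0}$, and one must verify that the permutation $\phi$ causes the two chain-specific exclusion conditions to collapse to the same pair of forbidden values for $c^X_u$ in every case. A secondary care point is that the uniform lower bound $(1-3/q)^{b_{v_0}}$ at blocked neighbors is not tight when $c \in \{X_{v_0}, Y_{v_0}\}$, but this looseness is harmless because $(1-2/q) \geq (1-3/q)$, so the bound holds for all $c \notin C$.
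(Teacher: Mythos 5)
Your proof takes essentially the same route as the paper's: lower-bound $\Pr[X'_{v_0}=Y'_{v_0}\mid X,Y]$ by the event that $v_0$ accepts in both chains, factor this (using that $v_0$ and all its neighbors have their coupling mode fixed in advance and sample independently) as $\Pr[c\notin C]\ge 1-\tfrac{d_{v_0}}{q}$ times a per-neighbor pass probability of $1-\tfrac{2}{q}$ for unblocked and $1-\tfrac{3}{q}$ for blocked neighbors — your case analysis of the permuted distribution just makes explicit what the paper asserts — and then conclude via the identity $1-\tfrac3q=(1-\tfrac2q)(1-\tfrac1{q-2})$ and monotonicity in $d_{v_0}$. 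One inessential slip: accepting in both chains is sufficient but not necessary for $X'_{v_0}=Y'_{v_0}$ (e.g.\ $c=Y_{v_0}$ with acceptance only in the $X$-chain also yields equality), but since you only use the event as a lower bound this does not affect the argument.
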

\begin{proof}
The event $X'_{v_0}=Y'_{v_0}$ occurs if $v_0$ accepts the proposal, which happens if the following events occur simultaneously:
\begin{itemize}
\item
$c^X_{v_0} \not \in \{X_u \mid u \in \Gamma(v_0)\}$ (and hence $c^Y_{v_0} \not \in \{Y_u \mid u \in \Gamma(v_0)\}$ by the coupling $c^Y_{v_0}=c^X_{v_0}$ and the fact that $X_u=Y_u$ for $u\neq v_0$). This occurs with probability at least $\frac{q-d_{v_0}}{q}$.
\item
For all unblocked neighbors $u \in \Gamma^U(v_0)$, it must have
$c^X_{u} \not \in \{X_{v_0},c^X_{v_0}\}$ and $c^Y_u \not \in \{Y_{v_0},c^Y_{v_0}\}$. This occurs with probability at least $\QQ{d_{v_0}-b_{v_0}}$ conditioning on any choice of $c^X_{v_0}=c^Y_{v_0}$.
\item
For all blocked neighbors $w \in \Gamma^B(v_0)$, it must have $c^X_w \not \in \{c^X_{v_0},X_{v_0},Y_{v_0}\}$ (and hence $c^Y_w \not \in \{c^Y_{v_0},X_{v_0},Y_{v_0}\}$ due to the coupling $c^Y_{w}=c^X_{w}$). This occurs with probability at least $\left(1-\frac{3}{q}\right)^{b_{v_0}}$ conditioning on any choice of $c^X_{v_0}=c^Y_{v_0}$ and independent of unblocked neighbors $u \in \Gamma^U(v_0)$.
\end{itemize}
Thus the following is obtained by the chain rule:
\begin{align*}
\Pr[X'_{v_0}=Y'_{v_0} \mid X,Y] \geq \frac{q-d_{v_0}}{q}\QQ{d_{v_0}-b_{v_0}}\left(1-\frac{3}{q}\right)^{b_{v_0}}
\ge \left(1-\frac{\Delta}{q}\right)\QQ{\Delta}\left(1-\frac{1}{q-2}\right)^{b_{v_0}},
\end{align*}
where the last inequality is due to the monotonicity stated in Lemma~\ref{lemma-inequality}.
\end{proof}

\begin{lemma}
\label{lemma-unblocked}
For any unblocked vertex $u \neq v_0$, it holds that
\begin{align}
\Pr[X'_u \ne Y'_u \mid X,Y] \leq &\frac{1}{q}\QQ{d_u-1}\left[2-\left(1-\frac{1}{q-2}\right)^{b_u}\right] \times \sum_{\text{unblocked SSAW}\atop \mathcal{P}\text{ from }v_0\text{ to }u}\left(\frac{2}{q}\right)^{\ell(\mathcal{P})-1},\label{eq:lemma-unblocked}
\end{align}
where the sum enumerates all strongly self-avoiding walks (SSAW) $\mathcal{P}=(v_0,v_1,\ldots,v_{\ell})$ from $v_0$ to $v_{\ell}=u$ over unblocked vertices $v_1,v_2,\ldots, v_{\ell}=u$, and $\ell(\mathcal{P})=\ell$ denotes the length of the walk $\mathcal{P}$.
\end{lemma}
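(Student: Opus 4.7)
The plan is to combine Propositions~\ref{prop:global-coupling-differ} and~\ref{prop:SSAW} with a union bound and then carry out a careful coupling-type analysis at the neighbors of $u$. By Proposition~\ref{prop:global-coupling-differ}, the event $X_u'\neq Y_u'$ for an unblocked $u\neq v_0$ forces $c_u^X\neq c_u^Y$, and Proposition~\ref{prop:SSAW} then forces the existence of an unblocked SSAW $\mathcal{P}=(v_0,v_1,\ldots,v_\ell=u)$ that is a path of disagreement. The union bound gives
\begin{align*}
\Pr[X_u'\neq Y_u'\mid X,Y]\le\sum_{\mathcal{P}}\Pr[\mathcal{P}\text{ is a disagreement path and }X_u'\neq Y_u'\mid X,Y],
\end{align*}
so it suffices to bound each summand by $\frac{1}{q}\QQ{d_u-1}\left[2-\left(1-\frac{1}{q-2}\right)^{b_u}\right](2/q)^{\ell-1}$.

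For each fixed $\mathcal{P}$, I will factor the joint event as ``the prefix $(v_0,v_1,\ldots,v_{\ell-1})$ is a disagreement'' times ``$c_u^X\neq c_u^Y$ and at least one chain accepts $u$''. The prefix factor is at most $(2/q)^{\ell-1}$: once $v_{i-1}\in\mathcal{S}^{\neq}$, the vertex $v_i$ lies in $\partial\mathcal{S}^{\neq}$ and is sampled from the permuted distribution, so the marginal probability that $c_{v_i}^X\in\{X_{v_0},Y_{v_0}\}$ equals $2/q$, and these events are independent across $i=1,\ldots,\ell-1$.

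Conditional on the prefix disagreement, $u$ itself is sampled from the permuted distribution, so $c_u^X$ is uniform on $[q]$. Split into the two symmetric cases $c_u^X=X_{v_0}$ and $c_u^X=Y_{v_0}$, each of conditional probability $1/q$, and let $E_X,E_Y$ denote the events that $u$ accepts its proposal in chain $X$ and chain $Y$, respectively. Since $c_u^X\neq c_u^Y$ here, $X_u'\neq Y_u'$ iff $E_X\cup E_Y$, and inclusion-exclusion gives $\Pr[E_X\cup E_Y]=\Pr[E_X]+\Pr[E_Y]-\Pr[E_X\cap E_Y]$. Each probability factors as a product over $w\in\Gamma(u)$, and I classify the neighbors into three groups: (a) the path-predecessor $v_{\ell-1}$, whose proposal is pinned to $\{X_{v_0},Y_{v_0}\}$, contributing $1/2$ individually and to the intersection; (b) a blocked neighbor, sampled consistently, contributing $1-2/q$ individually and $1-3/q$ to the intersection; (c) an unblocked off-path neighbor, which is sampled permuted because $u\in\mathcal{S}^{\neq}$ places it in $\partial\mathcal{S}^{\neq}$, contributing $1-2/q$ both individually and to the intersection.

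Multiplying out yields $\Pr[E_X]=\Pr[E_Y]=\frac{1}{2}\QQ{d_u-1}$ and $\Pr[E_X\cap E_Y]=\frac{1}{2}\QQ{d_u-1-b_u}(1-3/q)^{b_u}$. Using the identity $\QQ{1}\left(1-\frac{1}{q-2}\right)=1-\frac{3}{q}$, the inclusion-exclusion collapses to $\frac{1}{2}\QQ{d_u-1}\left[2-\left(1-\frac{1}{q-2}\right)^{b_u}\right]$; adding the two symmetric cases (factor $2/q$) and combining with the prefix $(2/q)^{\ell-1}$ produces the desired bound for each $\mathcal{P}$. The edge case $\ell=1$ where $v_0\in\Gamma(u)$ must be handled separately since one of the two symmetric cases collapses (the proposal $c_u^X=X_{v_0}$ automatically fails at the neighbor $v_0$), but a direct algebraic comparison shows the resulting probability is strictly dominated by the same expression. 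I expect the main obstacle to be the coupling-type bookkeeping at each neighbor class of $u$, and in particular the recognition that the inclusion of $u$ itself in $\mathcal{S}^{\neq}$ (granted by the disagreement path) is precisely what forces the unblocked off-path neighbors to be sampled permuted, contributing the decisive $1-2/q$ (rather than a looser $1-3/q$) to the intersection probability.
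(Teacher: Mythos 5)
Your overall route is the same as the paper's: Propositions~\ref{prop:global-coupling-differ} and~\ref{prop:SSAW}, a union bound over unblocked SSAWs, the factor $(2/q)^{\ell-1}$ for the prefix of the disagreement path, a $1/q$ factor at the endpoint, and inclusion--exclusion to handle the blocked neighbors of $u$. Your reorganization at the endpoint (two symmetric cases $c_u^X=X_{v_0}$ or $c_u^X=Y_{v_0}$, each of probability $1/q$, with the predecessor edge charged a factor $\frac{1}{2}$) is algebraically equivalent to the paper's pinning of $c_u^X$ to the element of $\{X_{v_0},Y_{v_0}\}$ other than $c^X_{v_{\ell-1}}$, and it reproduces exactly the per-path bound in~\eqref{eq:lemma-unblocked}.

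The one step that does not stand as written is the inclusion--exclusion on the \emph{true} acceptance events $E_X,E_Y$ with the asserted equalities $\Pr[E_X]=\Pr[E_Y]=\frac{1}{2}\QQ{d_u-1}$ and $\Pr[E_X\cap E_Y]=\frac{1}{2}\QQ{d_u-1-b_u}\left(1-\frac{3}{q}\right)^{b_u}$. Acceptance at $u$ also requires $c_u^X\neq X_w$ (resp.\ $c_u^Y\neq Y_w$) for every neighbor $w$, and for a blocked $w$ that is blocked by itself (i.e.\ $X_w\in\{X_{v_0},Y_{v_0}\}$) these conditions can fail deterministically; hence all three of your quantities are in general only upper bounds. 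Writing $\Pr[E_X\cup E_Y]=\Pr[E_X]+\Pr[E_Y]-\Pr[E_X\cap E_Y]$ and substituting an \emph{over}-estimate for the subtracted intersection is not a valid derivation of an upper bound. The repair is short and is what the paper does: first relax each acceptance event to the event determined only by the neighbors' proposals, namely $c^X_w\notin\{c^X_u,X_u\}$ for all $w\in\Gamma(u)$ (and its chain-$Y$ analogue); these relaxed events contain $E_X,E_Y$, your per-neighbor factors are exact for them (using that $X_u\notin\{X_{v_0},Y_{v_0}\}$ since $u$ is unblocked, and that unblocked neighbors of $u$ are sampled from the permuted distribution), and inclusion--exclusion applied to the relaxed events --- equivalently, as in the paper, applied only to the blocked-neighbor constraints after factoring out the unblocked edges, which pass in both chains or in neither --- yields exactly the claimed bound. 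With that adjustment (and your separate treatment of the $\ell=1$ case) your argument coincides with the paper's proof.
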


\begin{proof}
Due to Proposition~\ref{prop:global-coupling-differ}, for unblocked $u\neq v_0$, the event $X_u'\neq Y_u'$ occurs only if  $c_u^X\neq c_u^Y$ and $u$ accepts its proposal in at least one chain among $X,Y$. 
Observe that any edge $uv$ between unblocked vertices $u,v$ either passes the check in both chains $X,Y$ or does not pass the check in both chains. 
Therefore, the event $X_u'\neq Y_u'$ occurs for an unblocked $u\neq v_0$ only if the following events occurs simultaneously:
\begin{itemize}
\item $c_u^X\neq c_u^Y$, which according to Proposition~\ref{prop:SSAW}, occurs only if there is a SSAW  $\mathcal{P}=(v_0,v_1,\ldots, v_{\ell})$ from $v_0$ to $v_\ell=u$ through  unblocked vertices $v_1,\ldots, v_{\ell}$ such that $\mathcal{P}$ is a path of disagreement;
\item for all unblocked neighbors $w\in\Gamma^{U}(u)$, the edge $uw$ passes the check, which means $c_w^X\not\in\{c_u^X,X_u\}$ (and meanwhile $c_w^Y\not\in\{c_u^Y,Y_u\}$ by coupling) for all $w\in\Gamma^{U}(u)$;
\item all blocked neighbors $w\in\Gamma^{B}(u)$ passes the check in at least one chains among $X,Y$, which means either $c^X_w \not\in \{c^X_u, X_u\}$ for all $w \in \Gamma^B(u)$ or $c^Y_w \not\in \{c^Y_u,Y_u\}$ for all $w \in \Gamma^B(u)$.
\end{itemize}
More specifically, these events occur only if: 
\begin{itemize}
\item
there is a SSAW  $\mathcal{P}=(v_0,v_1,\ldots, v_{\ell})$ from $v_0$ to $v_\ell=u$ through  unblocked vertices $v_1,\ldots, v_{\ell}$ such that $c_{v_i}^X\in\{X_{v_0},Y_{v_0}\}$ for $1\le i\le \ell-1$, which occurs with probability $\left(\frac{2}{q}\right)^{\ell-1}$; 
\item 
if $u\in \Gamma(v_0)$, then $c_u^X=Y_{v_0}$ (and meanwhile $c_u^Y=X_{v_0}\}$ by coupling), and
if $u\not\in \Gamma(v_0)$, $c_u^X\in\{X_{v_0},Y_{v_0}\}\setminus\{c_{v_{\ell-1}}^X\}$ (and meanwhile $c_u^Y\in\{X_{v_0},Y_{v_0}\}\setminus\{c_{v_{\ell-1}}^Y\}$ by coupling),
which in either case, occurs with probability $\frac{1}{q}$ conditioning on $(c_{v_{\ell-1}}^X,c_{v_{\ell-1}}^Y)$; 
\item 
$c_w^X\not\in\{c_u^X,X_u\}$ (and meanwhile $c_w^Y\not\in\{c_u^Y,Y_u\}$ by coupling) for all unblocked $w\in\Gamma^U(u)\setminus\{v_{\ell-1}\}$, which occurs with probability $\left(1-\frac{2}{q}\right)^{d_u-b_u-1}$ conditioning on $c_u^X$;
\item 
either $c^X_w \not\in \{c^X_u, X_u\}$ for all $w \in \Gamma^B(u)$ or $c^Y_w \not\in \{c^Y_u,Y_u\}$ for all $w \in \Gamma^B(u)$, which occurs with probability at most $\left[2\QQ{b_u}-\left(1-\frac{3}{q}\right)^{b_u}\right]$ conditioning on $(c_u^X,c_u^Y)$ by the principle of inclusion-exclusion.
\end{itemize}
Take the union bound over all SSAW $\mathcal{P}=(v_0,v_1,\ldots, v_{\ell})$ through  unblocked vertices $v_1,\ldots, v_{\ell}=u$. Due to the strongly-avoiding property, it is safe to apply the chain rule for every $\mathcal{P}$. We have:
\begin{align*}
&\Pr[X'_u \ne Y'_u \mid X,Y]\\
\leq\,& \sum_{\text{unblocked SSAW}\atop \mathcal{P}\text{ from }v_0\text{ to }u}\Bigg(\left(\frac{2}{q}\right)^{\ell(\mathcal{P})-1}\left(\frac{1}{q}\right)\QQ{d_u-b_u-1}\times\left[2\QQ{b_u}-\left(1-\frac{3}{q}\right)^{b_u}\right]\Bigg)\\
=\,&\frac{1}{q}\QQ{d_u-1}\left[2-\left(1-\frac{1}{q-2}\right)^{b_u}\right]\sum_{\text{unblocked SSAW}\atop \mathcal{P}\text{ from }v_0\text{ to }u}\left(\frac{2}{q}\right)^{\ell(\mathcal{P})-1}. \qedhere
\end{align*}
\end{proof}

\begin{lemma}
\label{lemma-blocked}
For any blocked vertex $u \neq v_0$, it holds that
\begin{align}
\label{eq:lemma-blocked}
\Pr[X'_u \neq Y'_u \mid X, Y] \leq &\frac{1}{q}\QQ{d_u-1}\sum_{\text{SSAW }\mathcal{P}\text{ from } v_0 \text{ to } u \atop  \text{with only } u \text{ blocked}}\left(\frac{2}{q} \right)^{\ell(\mathcal{P})-1},
\end{align}
where the sum enumerates all the strongly self-avoiding walks (SSAW) $\mathcal{P}=(v_0,v_1,\ldots,v_{\ell})$ from $v_0$ to $v_{\ell}=u$ through unblocked vertices   $v_1,\ldots,v_{\ell-1}$, and $\ell(\mathcal{P})=\ell$ denotes the length of the walk $\mathcal{P}$. 
\end{lemma}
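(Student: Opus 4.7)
The plan is to follow the template of Lemma~\ref{lemma-unblocked}, adapted to the fact that $u$ is blocked. First I would invoke Proposition~\ref{prop:global-coupling-differ}: since $u\neq v_0$, the event $X'_u\neq Y'_u$ forces $c^X_u,c^Y_u\in\{X_{v_0},Y_{v_0}\}$. Because $u$ is blocked, the coupling samples $(c^X_u,c^Y_u)$ from the consistent distribution, so in fact $c^X_u=c^Y_u=c$ for some $c\in\{X_{v_0},Y_{v_0}\}$. Together with $X_u=Y_u$, a common proposal at $u$ can yield $X'_u\neq Y'_u$ only if exactly one of the two chains accepts, which in turn forces some incident edge $uw^\star$ to pass in one chain and fail in the other.

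Inspecting the three checks of edge $uw^\star$ under $X_u=Y_u$ and $c^X_u=c^Y_u$, this asymmetric behavior can arise only in two ways: either $w^\star=v_0$, where the asymmetry comes from $X_{v_0}\neq Y_{v_0}$ appearing in the check $c_u\neq X_{w^\star}$, or $w^\star$ is unblocked with $c^X_{w^\star}\neq c^Y_{w^\star}$, where the asymmetry comes from the disagreement at $w^\star$ in the first two checks. In the latter case Proposition~\ref{prop:SSAW} supplies an SSAW of disagreement from $v_0$ to $w^\star$ through unblocked vertices; appending $u$ (or taking $(v_0,u)$ directly in the first case) produces an SSAW $\mathcal{P}=(v_0,v_1,\ldots,v_{\ell-1},u)$ with $v_1,\ldots,v_{\ell-1}$ unblocked and only $u$ blocked, exactly as enumerated in the sum. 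I would then union-bound over all such $\mathcal{P}$.

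For each $\mathcal{P}$, I would decompose the per-path probability into three factors: $(2/q)^{\ell-1}$ for the disagreement-path event $c^X_{v_i}\in\{X_{v_0},Y_{v_0}\}$ for $1\le i\le\ell-1$; a factor $1/q$ pinning $c$ to the unique value in $\{X_{v_0},Y_{v_0}\}$ that makes edge $uv_{\ell-1}$ flip its pass/fail status between the two chains given the fixed $c^X_{v_{\ell-1}}$; and $\QQ{d_u-1}$ bounding the probability that the remaining $d_u-1$ edges at $u$ all pass in the accepting chain, via the same single-edge estimate used in the proof of Lemma~\ref{lemma-unblocked}. The delicate point is absorbing the apparent factor of two for the two possible directions ($X$ accepts vs $Y$ accepts): since $u$ is blocked, at least one candidate value of $c$ is ruled out deterministically---either $X_u\in\{X_{v_0},Y_{v_0}\}$ and the choice $c=X_u$ produces no value change at $u$, or some blocked neighbor $w'\in\Gamma^B(u)$ has $X_{w'}\in\{X_{v_0},Y_{v_0}\}$ and the choice $c=X_{w'}$ makes the check $c\neq X_{w'}$ of edge $uw'$ fail deterministically in both chains, killing both accepting directions.

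The main obstacle I anticipate is exactly this pruning-of-directions argument, which requires a careful case split on how $u$ is blocked (by its own color, by a neighbor's color, or by both) together with a verification that in each case the spurious configurations are eliminated in a way consistent with the $\frac{1}{q}$ factor. The boundary case $\ell=1$ deserves a separate check: then $w^\star=v_0$, the path factor $(2/q)^{\ell-1}=1$ is trivial, and the differential is driven entirely by $X_{v_0}\neq Y_{v_0}$ through the third check of edge $uv_0$, after which the same single-direction pruning recovers $\frac{1}{q}\QQ{d_u-1}$ as the per-path contribution.
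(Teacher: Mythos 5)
Your proposal is correct and follows essentially the same route as the paper's proof: Proposition~\ref{prop:global-coupling-differ} plus consistency of blocked proposals, identification of the asymmetric edge (either $uv_0$ or an unblocked neighbor with $c^X_{w^\star}\neq c^Y_{w^\star}$ handled via Proposition~\ref{prop:SSAW}), a union bound over SSAWs with only $u$ blocked, and the per-path factors $(2/q)^{\ell-1}\cdot\frac{1}{q}\cdot\QQ{d_u-1}$, with one candidate value of $c_u$ eliminated by the blocking structure exactly as in the paper's case analysis (blocked by itself vs.\ by a neighbor $w_0$). The only cosmetic difference is your initial description of the $\frac{1}{q}$ factor as pinning the value that "flips" edge $uv_{\ell-1}$, whereas the uniqueness really comes from the blocking-based pruning you subsequently give, which is the paper's argument.
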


\begin{proof}
By the coupling, any blocked vertex $u \in V$ proposes consistently in the two chains, thus $c^X_u = c^Y_u$. And we have $X_u = Y_u$ for $u \ne v_0$.

We first consider $v_0$'s blocked neighbors $u \in \Gamma^B(v_0)$. There are two cases for such vertex $u$:
\begin{itemize}
\item
$X_u = Y_u \not \in\{X_{v_0},Y_{v_0}\}$. Since vertex $u$ is blocked, there must exist a vertex $w_0 \in \Gamma(u) \setminus \{v_0\}$, such that $X_{w_0} = Y_{w_0} \in \{X_{v_0},Y_{v_0}\}$. Without loss of generality, suppose $X_{w_0} = Y_{w_0} =X_{v_0}$ (and the case $X_{w_0}=Y_{w_0}=Y_{v_0}$ follows by symmetry). By Proposition~\ref{prop:global-coupling-differ}, $X'_u \ne Y'_u$ only if $c^X_u = c^Y_u \in \{X_{v_0},Y_{v_0}\}$. Note that if $c^X_u = c^Y_u = X_{v_0}$, then the edge $uw_0$ cannot pass the check in both chains, hence $X'_u = Y'_u$, a contradiction. So we must have $c^X_u = c^Y_u=Y_{v_0}$, in which case edge $v_0u$ cannot pass the check in chain $Y$, thus the event $X'_u \ne Y'_u$ occurs only when $u$ accepts the proposal in chain $X$, which happens only if for all $w \in \Gamma(u)$, $c^X_w \not \in \{c^X_u,X_u\}$. Remember that we already have $c^X_u=Y_{v_0}\neq X_u$ and note that all vertices in chain $X$ propose independently, therefore $X'_u \neq Y'_u$ occurs with probability at most $\frac{1}{q}\QQ{d_u}$.
\item
$X_u = Y_u \in \{X_{v_0},Y_{v_0}\}$. Without loss of generality, suppose $X_u = Y_u = X_{v_0}$(and the case $X_u = Y_u = Y_{v_0}$ follows by symmetry). By Proposition~\ref{prop:global-coupling-differ}, $X'_u \ne Y'_u$ only if $c^X_u = c^Y_u \in \{X_{v_0},Y_{v_0}\}$. If $c^X_u=c^Y_u=X_{v_0}$, the proposal and the current color of $u$ are the same in two chains, hence $X'_u = Y'_u$, a contradiction. So we must have $c^X_u = c^Y_u=Y_{v_0}$, in which case the edge $uv_0$ cannot pass the check in chain $Y$, thus  event $X'_u \ne Y'_u$ occurs only if vertex $u$ accepts the proposal in chain $X$, which happens only if for all $w \in \Gamma(u)$, $c^X_w \not \in \{c^X_u,X_u\}=\{X_{v_0},Y_{v_0}\}$. 
Remember that we already have $c^X_u=Y_{v_0}$ and note that all vertices in chain $X$ propose independently, therefore $X'_u \neq Y'_u$ occurs with probability at most $\frac{1}{q}\QQ{d_u}$. 
\end{itemize}
Hence, for all $u \in \Gamma^B(v_0)$, we have:
\begin{align*}
\Pr[X'_u \neq Y'_u \mid X,Y] \leq \frac{1}{q}\QQ{d_u} \leq \frac{1}{q}\QQ{d_u-1}. 
\end{align*}
The walk $\mathcal{P}=(v_0,u)$ is a strongly self-avoiding walk (SSAW) from $v_0$ to $u$ with only $u$ blocked. Therefore~\eqref{eq:lemma-blocked} is proved for blocked vertices $u \in \Gamma^B(v_0)$.

Now we consider the general blocked vertices $u \not \in \Gamma^+(v_0)$. Assume that $X_u'\neq Y'_u$.

If $u$ is blocked by itself, i.e.~$X_u = Y_u \in \{X_{v_0},Y_{v_0}\}$, then all the vertices $w \in \Gamma^+(u)$ are blocked and hence propose consistently, and for $u\not\in\Gamma^+(v_0)$ all neighbors $w$ have $X_w=Y_w$, so we must have $X'_u = Y'_u$. Thus $\Pr[X'_u \ne Y'_u \mid X, Y] = 0$ and~\eqref{eq:lemma-blocked} holds trivially.

If otherwise $u$ is not blocked by itself, i.e.~$X_u = Y_u \not \in \{X_{v_0},Y_{v_0}\}$, then $u$ must be blocked by one of its neighbors $w_0 \in \Gamma(u)$ such that $X_{w_0}=Y_{w_0} \in \{X_{v_0},Y_{v_0}\}$. By Proposition~\ref{prop:global-coupling-differ},  $X'_u \ne Y'_u$ only if $c^X_u = c^Y_u \in\{X_{v_0},Y_{v_0}\}$. We must have $c_u^X\neq X_{w_0}$, because if otherwise $c_u^X= X_{w_0}$, together with that $c^Y_u = Y_{w_0}$ which is due to that $c^X_u=c^Y_u$ and $X_{w_0}=Y_{w_0}$,   the edge $uw_0$ cannot pass the check in both chains, giving us $X'_u = Y'_u$, a contradiction.

For the following, we assume $c^X_u = c^Y_u  \in \{X_{v_0},Y_{v_0}\}$ and $c^X_u \ne X_{w_0}$, therefore $c^Y_u \ne Y_{w_0}$ because $c^X_u=c^Y_u$ and $X_{w_0}=Y_{w_0}$. We claim that $u$ must have an unblocked neighbor $w^* \in \Gamma(u)$ such that $c_{w^*}^X\neq c_{w^*}^Y$ because if otherwise for all the vertices $w \in \Gamma^+(u)$, the consistencies $c^X_w=c^Y_w$ and $X_w = Y_w$ hold, giving us $X'_u = Y'_u$, a contradiction. Therefore,  there is a neighbor $w^* \in \Gamma(u)$ such that $c_{w^*}^X\neq c_{w^*}^Y$, which by Proposition~\ref{prop:SSAW}, means that there is a strongly self-avoiding walk (SSAW) $\mathcal{P}=(v_0,v_1,\ldots,v_{\ell})$ from $v_0$ to $v_{\ell}=u$ through unblocked $v_1,v_2,\ldots,v_{\ell-1}=w^*$ such that $\mathcal{P}'=(v_0,v_1,\ldots,v_{\ell-1})$ is a path of disagreement. Fix any SSAW $\mathcal{P}=(v_0,v_1,\ldots,v_{\ell})$ from $v_0$ to $v_{\ell}=u$ with only $u$ blocked. By Proposition~\ref{prop:SSAW}:
\begin{itemize}
\item $\mathcal{P}'=(v_0,v_1,\ldots,v_{\ell-1})$ is a path of disagreement with probability at most $\left(\frac{2}{q}\right)^{\ell-1}$.
\end{itemize}
As argued above, assuming $X_u'\neq Y'_u$ we must have
\begin{itemize}
\item $c^X_u\in\{X_{v_0},Y_{v_0}\} \setminus \{X_{w_0}\}$ (and $c^Y_u=c^X_u$ due to the coupling), which occurs with probability $\frac{1}{q}$ conditioning on that $\mathcal{P}'$ is a path of disagreement.
\end{itemize}
As argued above, we have $\{c^X_{v_{\ell-1}},c^Y_{v_{\ell-1}}\}=\{c^X_u,X_{w_0}\}=\{c^Y_u,Y_{w_0}\}=\{X_{v_0},Y_{v_0}\}$. Without loss of generality, suppose $c^X_{v_{\ell-1}} = c^X_u=c^Y_u$ and $c^Y_{v_{\ell-1}} = X_{w_0}=Y_{w_0}$ (and the case $c^X_{v_{\ell-1}} = X_{w_0} = Y_{w_0}$ and $c^Y_{v_{\ell-1}} = c^X_u = c^Y_u$ follows by symmetry). Then edge $uv_{\ell-1}$ cannot pass the check in chain $X$ because $c^X_{v_{\ell-1}}=c^X_{u}$. Then the event $X'_u \ne Y'_u$ occurs only if vertex $u$ accepts the proposal in chain $Y$, which happens only if 
\begin{itemize}
\item 
$c^Y_w \not \in \{Y_u, c^Y_u\}$ for all $w \in \Gamma(u)\setminus \{v_{\ell-1}\}$.
Recall that $Y_u \neq c^Y_u$. Since $\mathcal{P}$ is a strongly self-avoiding, we have $w \not \in \mathcal{P}$ for all $w \in \Gamma(u)\setminus \{v_{\ell-1}\}$. And the proposals are mutually independent in one chain. Condition on previous events, this probability is at most $\left(1-\frac{2}{q}\right)^{d_u - 1}$.
\end{itemize}
By the union bound over all SSAW $\mathcal{P}$ from $v_0$ to $u$ with $u$ being the only blocked vertex, and the chain rule for every $\mathcal{P}$, we have
\begin{align*}
\Pr[X'_u \neq Y'_u \mid X, Y] \leq &\frac{1}{q}\QQ{d_u-1}\sum_{\text{SSAW }\mathcal{P}\text{ from } v_0 \text{ to } u \atop  \text{with only } u \text{ blocked}}\left(\frac{2}{q} \right)^{\ell(\mathcal{P})-1}.
\end{align*}
This proves \eqref{eq:lemma-blocked}.
\end{proof}

We then verify the path coupling condition: for some constant $\delta>0$, 
\begin{align}
\label{eq-target}
\mathbf{E}[\Phi(X',Y') \mid X,Y] \leq (1-\delta)\Phi(X,Y).
\end{align}
By the linearity of expectation,
\begin{align*}
&\mathbf{E}[\Phi(X',Y') \mid X, Y]
= \sum_{u \in V}\mathbf{E}[\phi_u(X', Y') \mid X, Y]\\
=\,& d_{v_0}\Pr[X'_{v_0} \ne Y'_{v_0} \mid X,Y] +\sum_{\text{unblocked }\atop u \ne v_0}d_u\Pr[X'_{u} \ne Y'_{u} \mid X,Y] +\sum_{\text{blocked }\atop w\neq v_0}d_w\Pr[X'_{w} \ne Y'_{w} \mid X,Y]
\end{align*}
Due to Lemma~\ref{lemma-v0}, 
\begin{align}
\mathbf{E}[\phi_{v_0}(X', Y') \mid X, Y]=&d_{v_0}\Pr[X_{v_0}'\neq Y'_{v_0}\mid X,Y]\notag\\
\le& d_{v_0}\left[1- \left(1-\frac{\Delta}{q}\right)\QQ{\Delta}\left(1-\frac{1}{q-2}\right)^{b_{v_0}}\right].\label{eq:v0-phi-upper-bound}
\end{align}
On the other hand, due to Lemma~\ref{lemma-unblocked} and Lemma~\ref{lemma-blocked},
\begin{align}
&\sum_{u\neq v_0}\mathbf{E}[\phi_u(X', Y') \mid X, Y]\nonumber\\
\le&
\sum_{\text{unblocked }\atop u \ne v_0}\Bigg(
\frac{d_u}{q}\QQ{d_u-1}\left[2-\left(1-\frac{1}{q-2}\right)^{b_u}\right]\times\sum_{\text{unblocked SSAW}\atop \mathcal{P}\text{ from }v_0\text{ to }u}\left(\frac{2}{q}\right)^{\ell(\mathcal{P})-1}\notag\Bigg)\\
&
+\sum_{\text{blocked }\atop u\neq v_0}
\frac{d_u}{q}\QQ{d_u-1}
\sum_{\text{SSAW } \mathcal{P}\text{ from }v_0\text{ to }u\atop\text{with only $u$ blocked}}\left(\frac{2}{q}\right)^{\ell(\mathcal{P})-1}\notag\\
\le& 
\sum_{\text{unblocked }\atop u \ne v_0}\Bigg(
\frac{\Delta}{q}\QQ{\Delta-1}\left[2-\left(1-\frac{1}{q-2}\right)^{b_u}\right]\times\sum_{\text{unblocked SSAW}\atop \mathcal{P}\text{ from }v_0\text{ to }u}\left(\frac{2}{q}\right)^{\ell(\mathcal{P})-1}\Bigg)\notag\\
&+\sum_{\text{blocked }\atop u\neq v_0}
\frac{\Delta}{q}\QQ{\Delta-1}
\sum_{\text{SSAW } \mathcal{P}\text{ from }v_0\text{ to }u\atop\text{with only $u$ blocked}}\left(\frac{2}{q}\right)^{\ell(\mathcal{P})-1}\label{eq:phi-sum-upper-bound-monotone}\\
\le& 
\sum_{\mathcal{P}\text{ from }v_0\atop\text{ to any }u\neq v_0}
\phi_{\mathcal{P}},\label{eq:phi-sum-upper-bound}
\end{align}
where the inequality~\eqref{eq:phi-sum-upper-bound-monotone} is due to the monotonicity stated in Lemma~\ref{lemma-inequality}, and the last sum in~\eqref{eq:phi-sum-upper-bound} enumerates all the walks $\mathcal{P}=(v_0,v_1,\ldots,v_\ell)$ from $v_0$. And for such walk $\mathcal{P}$, the quantity $\phi_{\mathcal{P}}$ is defined as that $\phi_{\mathcal{P}}=0$ if $\mathcal{P}$ is not a strongly self-avoiding walk (SSAW), and for a SSAW $\mathcal{P}=(v_0,v_1,\ldots,v_\ell)$ from $v_0$ to any $v_\ell=u$:
\begin{align*}
&\phi_{\mathcal{P}}
=
\begin{cases}
\frac{\Delta}{q}\QQ{\Delta-1}\left[2-\left(1-\frac{1}{q-2}\right)^{b_u}\right]\left(\frac{2}{q}\right)^{\ell-1} & \text{(\uppercase\expandafter{\romannumeral1 })} \\
\frac{\Delta}{q}\QQ{\Delta-1}\left(\frac{2}{q}\right)^{\ell-1} &\text{(\uppercase\expandafter{\romannumeral2 })} \\
0 & \text{(\uppercase\expandafter{\romannumeral3 })} 
\end{cases}\\
&\text{\uppercase\expandafter{\romannumeral1 }}: \text{if all }v_1,\ldots, v_\ell\text{ are unblocked};\\
&\text{\uppercase\expandafter{\romannumeral2 }}: \text{if all }v_1,\ldots, v_{\ell-1}\text{ are unblocked and $v_\ell=u$ is blocked};\\
&\text{\uppercase\expandafter{\romannumeral3 }}: \text{otherwise}.
\end{align*}
It is easy to verify the inequality~\eqref{eq:phi-sum-upper-bound} with this definition of $\phi_{\mathcal{P}}$.

Given any walk $\mathcal{P}=(v_0,v_1,\ldots,v_\ell)$ from $v_0$ such that all $v_1,\ldots,v_\ell$ are unblocked, we further define that
\begin{align}
\Phi_{\mathcal{P}}
=\left(\frac{q}{2}\right)^{\ell-1}\sum_{\mathcal{P}'\text{ extends }\mathcal{P}}\phi_{\mathcal{P}'},
\label{eq:definition-Phi-P}
\end{align}
where the sum enumerates all walks (not necessarily strongly self-avoiding) $\mathcal{P}'=(v_0,v_1,\ldots,v_\ell, v_{\ell+1}, \ldots)$ with  $\mathcal{P}$ as its prefix, including $\mathcal{P}$ itself. 

Then by the inequality~\eqref{eq:phi-sum-upper-bound} the expected distance except for $v_0$ can be expressed as:
\begin{align}
\sum_{u\neq v_0}\mathbf{E}[\phi_u(X', Y') \mid X, Y]\le& 
\sum_{\mathcal{P}\text{ from }v_0\atop\text{ to any }u\neq v_0}\phi_{\mathcal{P}}=
\sum_{u\in\Gamma(v_0)\setminus\Gamma^B(v_0)}\Phi_{(v_0,u)}
+\sum_{u\in\Gamma^B(v_0)}\phi_{(v_0,u)}\notag\\
=&
\sum_{u\in\Gamma(v_0)\setminus\Gamma^B(v_0)}\Phi_{(v_0,u)}
+\frac{\Delta b_{v_0}}{q}\QQ{\Delta-1}.\label{eq:Phi-sum-upper-bound}
\end{align}
Here each $(v_0,u)$ is a path (of length 1) from $v_0$ to its neighbor $u$.

And more importantly, for $\Phi_{\mathcal{P}}$ we have the following recurrence.
For any walk ${\mathcal{P}}=(v_0,v_1,\ldots,v_\ell)$ from $v_0$ through unblocked vertices $v_1,\ldots,v_\ell=u$, if $\mathcal{P}$ is not strongly self-avoiding then $\Phi_{\mathcal{P}}=0$; and if otherwise ${\mathcal{P}}$ is strongly self-avoiding, then the following recurrence  follows directly from the definition~\eqref{eq:definition-Phi-P} of $\Phi_{\mathcal{P}}$:
\begin{align}
\Phi_{\mathcal{P}}
&=
\left(\frac{q}{2}\right)^{\ell-1}\phi_{\mathcal{P}}+\left(\frac{q}{2}\right)^{\ell-1}\sum_{w\in\Gamma^B(u)}\phi_{(\mathcal{P},w)}+\frac{2}{q}\sum_{\text{unblocked }w\in\Gamma(u)\atop w\neq v_{\ell-1}}\Phi_{(\mathcal{P},w)}\notag\\
&\le \frac{\Delta}{q}\QQ{\Delta-1}\left[2-\left(1-\frac{1}{q-2}\right)^{b_u}+\frac{2b_u}{q}\right]+\frac{2}{q}\sum_{\text{unblocked }w\in\Gamma(u)\atop w\neq v_{\ell-1}}\Phi_{(\mathcal{P},w)},\label{eq:Phi-recursion}
\end{align}
where $(\mathcal{P},w)$ denotes the walk $\mathcal{P}'=(v_0,v_1,\ldots,v_{\ell},w)$ that extends $\mathcal{P}$.

The following lemma essentially states that $\Phi_{\mathcal{P}}$ is maximized when the number of blocked neighbors $b_u=0$ and then the value of $\Phi_{\mathcal{P}}$ is upper bounded by the fixpoint for this recurrence.
 
\begin{lemma}
\label{lemma-PHI}
If $3\Delta < q \leq 3.7\Delta+3$ and $\Delta \geq 5$, then for any walk  $\mathcal{P}=(v_0,v_1,\ldots,v_\ell)$ from $v_0$ such that all $v_1,\ldots,v_\ell$ are unblocked, it holds that
\begin{align*}
\Phi_{\mathcal{P}} \leq \frac{\Delta}{q-2\Delta+2}\QQ{\Delta-1}.  
\end{align*}
\end{lemma}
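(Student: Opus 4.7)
}
My plan is to establish the bound by (reverse) induction on the length $\ell$ of the walk $\mathcal{P}$, exploiting the recurrence~\eqref{eq:Phi-recursion} together with the fact that $G$ is finite, so strongly self-avoiding walks from $v_0$ have bounded length and the induction is well-founded. Write $L := \frac{\Delta}{q-2\Delta+2}\QQ{\Delta-1}$ for the target bound, and take as inductive hypothesis that $\Phi_{(\mathcal{P},w)} \le L$ for every admissible unblocked extension $w$ of $\mathcal{P}$. The base case (maximal walks admitting no unblocked continuation) follows directly from the first term in~\eqref{eq:Phi-recursion}; the substance is the inductive step, which asks to show that $L$ is stable under the recurrence.

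The number of unblocked $w \in \Gamma(u) \setminus \{v_{\ell-1}\}$ is at most $\Delta - b_u - 1$, uniformly in $\ell$: when $\ell \geq 2$ the excluded $v_{\ell-1}$ is itself unblocked, while when $\ell = 1$ the excluded vertex is $v_0$, which is neither blocked nor unblocked and so also subtracts one from the count of unblocked neighbors. Plugging this bound and the inductive hypothesis into~\eqref{eq:Phi-recursion}, the inductive step reduces to verifying, for every $b_u \in \{0,1,\ldots,\Delta-1\}$,
\begin{align*}
\frac{\Delta}{q}\QQ{\Delta-1}\left[2-\left(1-\frac{1}{q-2}\right)^{b_u}+\frac{2b_u}{q}\right] + \frac{2(\Delta-b_u-1)}{q}\,L \;\le\; L.
\end{align*}
Substituting $L$, cancelling the common factor $\frac{\Delta}{q}\QQ{\Delta-1}$, and rearranging reduces this cleanly to
\begin{align*}
1-\left(1-\frac{1}{q-2}\right)^{b_u} \;\le\; \frac{4\,b_u(\Delta-1)}{q\,(q-2\Delta+2)}.
\end{align*}

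The RHS is linear in $b_u$ and the LHS is concave in $b_u$ with value $0$ at $b_u = 0$, so it suffices to compare ``slopes'' via the elementary bound $1-(1-x)^{b_u}\le b_u x$ with $x = \frac{1}{q-2}$. This further reduces the task to the quadratic inequality $q^2 - 6(\Delta-1)q + 8(\Delta-1)\le 0$, whose roots are $3(\Delta-1)\pm\sqrt{(\Delta-1)(9\Delta-17)}$. The lower root is easily seen to lie below $3\Delta$ (it is approximately $\tfrac{4}{3}$ for large $\Delta$), so the constraint $q > 3\Delta$ takes care of the lower endpoint. The main obstacle I expect is the upper endpoint: I will need to show that $3(\Delta-1)+\sqrt{(\Delta-1)(9\Delta-17)} \ge 3.7\Delta+3$, which after squaring becomes $(0.7\Delta+6)^2 \le (\Delta-1)(9\Delta-17)$, i.e., $8.51\Delta^2 - 34.4\Delta - 19 \ge 0$; this can be checked directly at $\Delta = 5$ (giving $21.75 > 0$) and then propagated by monotonicity of the quadratic for $\Delta \ge 5$. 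Together with the hypothesis $\Delta \ge 5$, this closes the induction and yields $\Phi_{\mathcal{P}} \le L$ throughout $3\Delta < q \le 3.7\Delta+3$.
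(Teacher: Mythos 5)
Your proposal is correct and follows essentially the same route as the paper: induction on the walk length via the recurrence~\eqref{eq:Phi-recursion}, bounding the number of unblocked extensions by $\Delta-b_u-1$, and reducing the inductive step to the inequality $\left(1-\frac{1}{q-2}\right)^{b_u}+\frac{4(\Delta-1)}{q(q-2\Delta+2)}b_u\ge 1$. The only cosmetic difference is that you dispatch the $b_u$-dependence with Bernoulli's inequality while the paper argues discrete monotonicity of the left-hand side in $b_u$; both reduce to the same quadratic condition $q^2-6(\Delta-1)q+8(\Delta-1)\le 0$, verified on the stated range of $q$ and $\Delta$.
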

\begin{proof}
We prove by induction on the length of the walk. Let $\mathcal{P}=(v_0,v_1,\ldots,v_\ell)$ be a walk from $v_0$ such that all $v_1,\ldots,v_\ell$ are unblocked and $v_\ell=u$. 
When $\ell$ is longer than the longest strongly self-avoiding walk among unblocked $v_1,\ldots,v_\ell$, then $\mathcal{P}$ is not a SSAW and thus $\Phi_{\mathcal{P}}=0$.

Assume that the lemma holds for all unblocked walks longer than $\ell$. Then due to the recurrence~\eqref{eq:Phi-recursion}, 
\begin{align*}
\Phi_{\mathcal{P}}
&\le \frac{\Delta}{q}\QQ{\Delta-1}\left[2-\left(1-\frac{1}{q-2}\right)^{b_u}+\frac{2b_u}{q}\right]+\frac{2}{q}\sum_{\text{unblocked }w\in\Gamma(u)\atop w\neq v_{\ell-1}}\Phi_{(\mathcal{P},w)}\\
\text{(I.H.)}\quad &\le \frac{\Delta}{q}\QQ{\Delta-1}\left[2-\left(1-\frac{1}{q-2}\right)^{b_u}+\frac{2b_u}{q}\right]+\frac{2(\Delta-b_u-1)\Delta}{q(q-2\Delta+2)}\QQ{\Delta-1}\\
&=\bigg[1-\left(1-\frac{1}{q-2}\right)^{b_u}-\frac{4\Delta-4}{q(q-2\Delta+2)}\cdot b_u+\frac{\Delta}{q-2\Delta+2}\cdot\frac{q}{\Delta}\bigg]\frac{\Delta}{q}\QQ{\Delta-1},
\end{align*}
which is bounded from above by $\frac{\Delta}{q-2\Delta+2}\QQ{\Delta-1}$ if
\[
\left(1-\frac{1}{q-2}\right)^{b_u}+\frac{4\Delta-4}{q(q-2\Delta+2)}\cdot b_u\ge 1.
\]
The inequality holds trivially when $b_u = 0$. It is then sufficient to prove that LHS is monotone on integer $b_u\ge 0$: Denoted $f(x)=\left(1-\frac{1}{q-2}\right)^{x}+\frac{4\Delta-4}{q(q-2\Delta+2)}\cdot x$, 
\begin{align*}
f(b_u+1)-f(b_u)
&=\frac{4\Delta-4}{q(q-2\Delta+2)}-\left(1-\frac{1}{q-2}\right)^{b_u}\left(\frac{1}{q-2}\right)\\
(\text{since }b_u \geq 0)\qquad
&\geq \frac{4\Delta-4}{q(q-2\Delta+2)}-\frac{1}{q-2},
\end{align*}
which is nonnegative for $3\Delta-3-\sqrt{9\Delta^2-26\Delta+17}\le q\le 3\Delta-3+\sqrt{9\Delta^2-26\Delta+17}$. In particular this holds when $3\Delta < q \leq 3.7\Delta+3$ and $\Delta \geq 5$. This completes the induction.
\end{proof}

\paragraph{Proof of Lemma~\ref{lemma-coloring-global-coupling}:}
Combine~\eqref{eq:v0-phi-upper-bound} and~\eqref{eq:Phi-sum-upper-bound}, with Lemma~\ref{lemma-PHI}, we obtain
\begin{align}
&\mathbf{E}[\Phi(X',Y') \mid X, Y]=\sum_{u \in V}\mathbf{E}[\phi_u(X', Y') \mid X, Y]\notag\\
\le\,& d_{v_0}\left[1- \left(1-\frac{\Delta}{q}\right)\QQ{\Delta}\left(1-\frac{1}{q-2}\right)^{b_{v_0}}\right]\notag\\
&
+\frac{\Delta(d_{v_0}-b_{v_0})}{q-2\Delta+2}\QQ{\Delta-1}
+\frac{\Delta b_{v_0}}{q}\QQ{\Delta-1}.\label{eq:global-path-coupling-upper-bound}
\end{align}
We need the following technical inequality:
\begin{align}
\left(1-\frac{\Delta}{q}\right)&\le \left(1-\frac{\Delta}{q}\right)\left(1-\frac{1}{q-2}\right)^{b_{v_0}}+ \frac{2\Delta-2}{(q-2)(q-2\Delta+2)} b_{v_0}\label{eq:global-coupling-b-v0-monotone}
\end{align}
The equality holds trivially when $b_{v_0}=0$. It is then sufficient to verify that the RHS is monotone on integer $b_{v_0}\ge 0$. We denote $g(x)=\left(1-\frac{\Delta}{q}\right)\left(1-\frac{1}{q-2}\right)^{x} + \frac{2\Delta-2}{(q-2)(q-2\Delta+2)}x$, and
\begin{align*}
g(b_{v_0}+1)-g(b_{v_0})=&\frac{q}{(q-2\Delta+2)(q-2)}-\frac{1}{q-2}-\left(1-\frac{\Delta}{q}\right)\left(1-\frac{1}{q-2}\right)^{b_{v_0}}\frac{1}{q-2}\\
\geq& \frac{q}{(q-2\Delta+2)(q-2)}-\frac{1}{q-2}-\frac{q-\Delta}{q(q-2)},
\end{align*}
which is nonnegative if $\frac{q}{(q-2\Delta+2)}\geq 1 + \frac{q-\Delta}{q}$. This easily holds for $\frac{1}{2}(5\Delta-4-\sqrt{17\Delta^2-32\Delta+16})\le q\le\frac{1}{2}(5\Delta-4+\sqrt{17\Delta^2-32\Delta+16})$. In particular, it holds as long as $\Delta \le q \le 3.7\Delta+3$ and $\Delta\ge9$.

With the inequality~\eqref{eq:global-coupling-b-v0-monotone}, the RHS in~\eqref{eq:global-path-coupling-upper-bound} is maximized when $b_0=0$ and hence
\begin{align*}
\mathbf{E}[\Phi(X',Y') \mid X, Y]\le
d_{v_0}\left[1-\left(1-\frac{\Delta}{q}\right)\QQ{\Delta} + \frac{\Delta}{q-2\Delta+2}\QQ{\Delta-1}\right].
\end{align*}
Recall that $\Phi(X,Y)=d_{v_0}$. The path coupling condition~\eqref{eq-target} holds when there is a constant $\delta>0$ such that 
\begin{align}
\left(1-\frac{\Delta}{q}\right)\QQ{\Delta} - \frac{\Delta}{q-2\Delta+2}\QQ{\Delta-1}
\ge \delta.\label{eq:global-path-coupling-constant-delta}
\end{align}
For $q=\alpha^{\star}\Delta$ and $\Delta\to\infty$, then the LHS becomes $\mathrm{e}^{-2/\alpha^{\star}}\left(1-\frac{1}{\alpha^{\star}}-\frac{1}{\alpha^{\star}-2}\right)$, which equals $0$ if $\alpha^{\star}=2+\sqrt{2}$.

Furthermore, for $q\ge\alpha\Delta$, the LHS become:
\begin{align*}
\left(1-\frac{\Delta}{q}\right)\QQ{\Delta} - \frac{\Delta}{q-2\Delta+2}\QQ{\Delta-1}
&\ge\left(1-\frac{2}{q}\right)^\Delta\left(1-\frac{\Delta}{q}-\frac{\Delta}{q-2\Delta}\right)
\\&\ge\left(1-\frac{2}{\alpha\Delta}\right)^\Delta\left(1-\frac{1}{\alpha}-\frac{1}{\alpha-2}\right)
\\&\ge\left(1-\frac{2}{\alpha}\right)\left(1-\frac{1}{\alpha}-\frac{1}{\alpha-2}\right)
\end{align*}
which is a positive constant independent of $\Delta$ when $\alpha>\alpha^{\star}=2+\sqrt{2}$.

Altogether, by the path coupling Lemma~\ref{pathcoupling}, if $\alpha\Delta\le q\le3.7\Delta+3$ for a constant $\alpha > 2 + \sqrt{2}$ and $\Delta \geq 9$, then the mixing rate is bounded by $\tau(\epsilon) = O(\log\left(\frac{n}{\epsilon}\right))$.

\section{Lower bounds}\label{sec:lower-bound}
In this section, we show lower bounds for local sampling. 
Let $G(V,E)$ be a network, and $\mathcal{I}$ an instance of MRF or weighted local CSP defined on graph $G$. For example, $\mathcal{I}=(G,[q],\boldsymbol{A},\boldsymbol{b})$ for a MRF with edge activities $\boldsymbol{A}=\{A_e\}_{e\in E}$ and vertex activities $\boldsymbol{b}=\{b_v\}_{v\in V}$.

We assume that each vertex $v\in V$ may access to an independent random variable $\Psi_v$ as its source of randomness.
Then a $t$-round protocol specifies a family of functions $\Pi_{v,\mathcal{I}}$, such that for each vertex $v\in V$, the output $X_v$ is produced as
\[
X_v=\Pi_{v,\mathcal{I}}(\Psi_u, u\in B_t(v)),
\]
where $B_t(v)=\{u\in V\mid\dist(u,v)\le t\}$ represents the $t$-ball centered at $v$.
Let $\mu_{\mathsf{out}}$ denote the distribution of the output random vector $\boldsymbol{X}=(X_v)_{v\in V}$. The goal is to have $\DTV{\mu_{\mathsf{out}}}{\mu}\le \epsilon$, where $\mu=\mu_{\mathcal{I}}$ is the Gibbs distribution  defined by the MRF instance $\mathcal{I}$.

Note that in above we allow the protocol $\Pi_{v,\mathcal{I}}$ executed at each vertex $v\in V$ to be aware of the instance $\mathcal{I}$ of the MRF.
This is much stronger than the original \LOCAL{} model. In fact, the only locality property we are using to prove our lower bounds is that for any $\boldsymbol{X}=(X_v)_{v\in V}$ returned by a $t$-round protocol:
\begin{align}
\forall u,v\in V: \quad\dist(u,v)> 2t \Longrightarrow X_u\text{ and }X_v\text{ are independent}.\label{eq:non-local-independence}
\end{align}
The lower bounds implied by this property is due to the locality of randomness.

For many natural MRFs, the Gibbs distribution $\mu$ exhibits the following exponential correlations: There exist constants $\delta,\eta>0$ such that for a path $P$ of length $n$, any vertices $u,v$ from the path, there are two spin states $\sigma_u,\sigma_u'\in[q]$ such that 
$\mu_u(\sigma_u)\ge\delta,\mu_u(\sigma_u')\ge \delta$  for the marginal distribution $\mu_u$ induced by $\mu$ at vertex $u$
and
\begin{align}
\DTV{\mu_v(\cdot \mid \sigma_u)}{\mu_v(\cdot\mid\sigma_u')}\ge \eta^{\dist(u,v)}.
\label{eq:exponential-correlation}
\end{align}
This exponential correlation property is satisfied by many MRFs, in particular, the proper $q$-colorings for any constant $q$.
For MRFs having this property, for any $\epsilon>\exp(-o(n))$, vertex pairs $(u,v)$ with sufficiently small $\dist(u,v)=\Omega(\log \frac{1}{\epsilon})$ will contribute at least an $\epsilon$ total variation distance between Gibbs $(\sigma_u,\sigma_v)$ and any independent $(X_u,X_v)$. And due to~\eqref{eq:non-local-independence}, this gives an $\Omega(\log \frac{1}{\epsilon})$ lower bound for local sampling from any MRF satisfying~\eqref{eq:exponential-correlation}, where $\epsilon$ is the total variation distance. 

We then show that the $\Omega(\log n)$ lower bound holds even for a constant total variation distance~$\epsilon$. 
A similar $\Omega(\log n)$ lower bound for sampling independent sets is proved independently in~\cite{guo2016uniform}.
Altogether it shows that the $O\left(\log \left(\frac{n}{\epsilon}\right)\right)$ upper bound in Theorem~\ref{main-theorem-LocalMetropolis} is optimal.


\begin{theorem}
Let $q\ge 3$ be a constant and $\epsilon<\frac{1}{3}$. Any $t$-round protocol that samples uniform proper $q$-coloring in a path within total variation distance $\epsilon$ must have $t=\Omega(\log n)$.
\end{theorem}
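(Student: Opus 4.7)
The plan is to exploit the locality property~\eqref{eq:non-local-independence}: under any $t$-round protocol, outputs at two path-vertices at distance greater than $2t$ use disjoint sources of randomness and are therefore mutually independent, whereas the Gibbs distribution of a uniform proper $q$-coloring on the path has nontrivial correlations at every distance. Concretely, I would set $k=2t+1$ and subsample the path at spacing $k$, keeping the $m=\lfloor n/k\rfloor$ vertices $u_i = v_{ik}$. By locality, the joint $\nu_{\mathsf{out}}|_{\{u_i\}}$ is a product distribution on $[q]^m$. Under Gibbs, $\mu|_{\{u_i\}}$ is a reversible Markov chain with uniform stationary distribution and transition kernel $K_k$ obtained as the $k$-th iterate of the single-step Gibbs kernel; a transfer-matrix computation for proper $q$-coloring of a path yields $K_k(c,c) = \tfrac{1}{q} + \Theta\bigl((q-1)^{-k}\bigr)$, so adjacent subsample-vertices carry a small but nontrivial correlation of order $(q-1)^{-k}$.

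Next, I would distinguish the two distributions by a second-moment argument on the empirical coincidence count $S = \sum_{i=1}^{m-1}\mathbf{1}[X_{u_i}=X_{u_{i+1}}]$. Under $\mu|_{\{u_i\}}$, $\mathbf{E}[S]=(m-1)K_k(c,c)$ with $\mathrm{Var}[S]=O(m)$ by standard Markov chain concentration, and under any product distribution with uniform marginals, $\mathbf{E}[S]=(m-1)/q$ with the same order of variance. Once $m(q-1)^{-2k}$ grows unboundedly---equivalently $t<c\log n/\log(q-1)$ for a small enough constant $c$---the mean separation of order $m(q-1)^{-k}$ dominates the $O(\sqrt{m})$ fluctuation, and Chebyshev's inequality forces $\DTV{\mu|_{\{u_i\}}}{\nu_{\mathsf{out}}|_{\{u_i\}}}\ge 1/3$. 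Since total variation distance is non-increasing under projection, this contradicts the hypothesis $\DTV{\mu}{\nu_{\mathsf{out}}}\le\epsilon<1/3$ and forces $t=\Omega(\log n)$.

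The main obstacle is that the product marginals $\nu_{\mathsf{out},u_i}$ need not be exactly uniform; they are only guaranteed to lie within $\epsilon$ of uniform in total variation. A biased marginal can shift each pairwise contribution to $\mathbf{E}_{\nu'}[S]$ by $O(\epsilon^2)$, summing to a bias of $O(m\epsilon^2)$; for constant $\epsilon$ this overwhelms the Gibbs signal $\Theta(m(q-1)^{-k})$ as soon as $k$ exceeds a constant, so the naive coincidence-count test only yields $t=\Omega(1)$ by itself. The cleanest remedy is to replace $S$ by a $\chi^2$-style statistic that compares the empirical pairwise joint to the empirical product-of-marginals, so that the biased-marginal term cancels to leading order while the Markov correlation is preserved; an alternative is to symmetrize $\nu_{\mathsf{out}}$ by the color-permutation action of $S_q$, which preserves TV distance from the $S_q$-invariant $\mu$ and produces uniform marginals on the subsample, after which the plain Chebyshev argument applies. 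Carrying out this refined second-moment bookkeeping is the delicate step of the proof and is in the same spirit as the independent-set lower bound of~\cite{guo2016uniform}.
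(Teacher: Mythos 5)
There is a genuine gap, and it sits exactly at the step you label ``the delicate step.'' Your framework (subsample at spacing $2t+1$, product law under the protocol versus a near-uniform Markov chain under Gibbs) is fine, and you correctly diagnose that the protocol's marginals are only $\epsilon$-close to uniform, so each adjacent pair can contribute a bias of order $\epsilon^2$ to the coincidence count, which for constant $\epsilon$ swamps the Gibbs signal $\Theta\bigl((q-1)^{-k}\bigr)$ per pair as soon as $t=\omega(1)$. But neither proposed remedy repairs this. Symmetrizing over the color group $S_q$ cannot help: the coincidence statistic $S$ is invariant under a global recoloring, so the law of $S$ under the symmetrized output is identical to its law under the original output and the $\Theta(m\epsilon^2)$ bias persists; moreover the symmetrized distribution is a mixture of products, not a product, so ``the plain Chebyshev argument applies'' is false as stated. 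The $\chi^2$-style correction is also not a fix as described: the protocol may use non-identical marginals across the subsampled sites (e.g.\ alternating biases $\pm a$ with $\lVert a\rVert=\Theta(\epsilon)$), in which case subtracting an empirical product-of-marginals does not cancel the cross terms $\sum_i\langle a_i,a_{i+1}\rangle$; and even for likelihood-ratio-type statistics the adversarial term is first order in $(q-1)^{-k}$ (of size $(q-1)^{-k}\epsilon^2$ per pair) while the Gibbs signal is second order ($(q-1)^{-2k}$ per pair), so any single low-order statistic can be matched by an admissible product law once $t=\omega(1)$. In short, a mean-separation/Chebyshev test on one scalar statistic cannot, by itself, certify a constant total variation gap in the regime $t=\Theta(\log n)$.

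The missing idea, which is how the paper proves the theorem, is to restore \emph{independence on the Gibbs side} rather than to fight the bias: designate sparse ``fixed centers'' $x_i$ and, between consecutive fixed centers, a pair of ``unfixed centers'' $(u_i,v_i)$ at mutual distance $\Theta(t)$ but pairwise distance $>2t$ from everything relevant. Conditioning on the Gibbs configuration $\sigma_F$ at the fixed centers makes the pairs $(\sigma_{u_i},\sigma_{v_i})$ mutually independent across $i$ (conditional independence of the MRF), while locality makes all protocol outputs at the centers mutually independent. The conditional exponential correlation property~\eqref{eq:exponential-correlation-cond} gives, for a suitable $t=O(\log n)$, a per-pair total variation gap of at least $n^{-1/2}$ between the conditioned Gibbs pair and any independent pair, and this then \emph{tensorizes}: $\DTV{\mathcal{X}}{\mathcal{Y}}\ge 1-\bigl(1-n^{-1/2}\bigr)^{m}=1-o(1)$ for $m=\Omega(n/\log n)$ pairs. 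A final decomposition of $\DTV{\boldsymbol{X}}{\sigma}$ over the value of $\sigma_F$ (either $\boldsymbol{X}_F$ is already $1/3$-far from $\mu_F$, or the conditional part contributes $1-o(1)-\tfrac13$) handles the mismatch between the protocol's and Gibbs's laws on the fixed centers. This coupling/tensorization mechanism is qualitatively different from your statistic-based test and is what makes the lower bound survive a constant $\epsilon<\tfrac13$; without it (or an equally strong substitute), your argument only yields $t=\Omega(1)$, as you yourself observe.
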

\begin{proof}
We actually prove the lower bound for all MRFs satisfying a stronger exponential correlation property stated as follows: There exist constants $\delta,\eta>0$ such that for a path $P$ of length $n$, for any non-adjacent vertices $x,u,v,y$ in the path from left to right, any spin states $\sigma_x,\sigma_y\in[q]$, there exist two spin states $\sigma_u,\sigma_u'\in[q]$ such that $\mu_u(\sigma_u\mid \sigma_x,\sigma_y)\ge\delta,\mu_u(\sigma_u'\mid \sigma_x,\sigma_y)\ge \delta$ and
\begin{align}
\DTV{\mu_v(\cdot \mid \sigma_u, \sigma_x,\sigma_y)}{\mu_v(\cdot\mid\sigma_u',\sigma_x,\sigma_y)}\ge \eta^{\dist(u,v)}.
\label{eq:exponential-correlation-cond}
\end{align}
It can be verified by a simple recursion for marginal probabilities in paths~\cite{lu2013improved} that this property as well as the weaker correlation property~\eqref{eq:exponential-correlation} hold for uniform proper $q$-colorings in paths for any constant $q\ge 3$.

Let $P=(w_0,w_1,\ldots,w_{n-1})$ be a path of $n$ vertices. For $i=0,1,\ldots,m$ where $m=\left\lfloor\frac{n-1}{3(2t+1)}\right\rfloor$, we denote $x_i=w_{3(2t+1)i}$; and for $i=0,1,\ldots,m-1$, denote $u_i=w_{3(2t+1)i+2t+1}$, and $v_i=w_{3(2t+1)i+2(2t+1)}$. 
We denote $F=\{x_i\mid 0\le i\le m\}$ and $U=\{u_i,v_i\mid 0\le i\le m-1\}$, and let $C=F\cup U$. We call the vertices in $C$ the \concept{centers}, and the vertices in $F$ and $U$ the \concept{fixed} and \concept{unfixed} centers respectively.
Note that the pairs $(u_i,v_i)$ of consecutive unfixed centers are separated by the fixed centers $x_i$'s. Due to the conditional independence of MRF,  conditioning on any particular configuration $\sigma_F\in[q]^F$ of fixed centers, for a $\sigma\in[q]^P$ sampled from the Gibbs distribution $\mu$ consistent with $\sigma_F$ over $F$, the pairs $(\sigma_{u_i},\sigma_{v_i})$ are mutually independent of each other. For the followings we assume that we are conditioning on an arbitrarily fixed $\sigma_F\in[q]^F$.

Let ${X}_{u_i}$ and $X_{v_i}$ be the respective output of $u_i$ and $v_i$ in a $t$-round protocol. Due to the observation of~\eqref{eq:non-local-independence}, ${X}_{u_i}$ and $X_{v_i}$ are mutually independent. According to the exponential correlation of~\eqref{eq:exponential-correlation-cond}, by choosing a suitably small $t=O(\log n)$, the total variation distance between $(\sigma_{u_i},\sigma_{v_i})$ and $({X}_{u_i},{X}_{v_i})$ is at least $\exp(-\Omega(t)) =n^{-\frac{1}{4}}$. 

We denote $\mathcal{X}_i=({X}_{u_i},{X}_{v_i})$ and $\mathcal{Y}_i=({\sigma}_{u_i},{\sigma}_{v_i})$, and consider the random vector $\mathcal{X}=(\mathcal{X}_i)_{0\le i\le m-1}$ and $\mathcal{Y}=(\mathcal{Y}_i)_{0\le i\le m-1}$ where $\mathcal{Y}$ is sampled conditioning on an arbitrarily fixed $\sigma_F\in[q]^F$. As we argued above, both $\mathcal{X}=(\mathcal{X}_i)$ and $\mathcal{Y}=(\mathcal{Y}_i)$ are vectors of mutually independent variables, and $\DTV{\mathcal{X}_i}{\mathcal{Y}_i}\ge n^{-\frac{1}{4}}$.
Suppose $\mathcal{X}$ follows the product distribution $\pi = \pi_0 \times \pi_1\times\ldots \times \pi_{m-1}$, where $\pi_i$ is a distribution over $[q]^2$.
Suppose $\mathcal{Y}$ follows the product distribution $\nu = \nu_0 \times \nu_1 \times \ldots \times\nu_{m-1}$, where $\nu_i$ is a distribution over $[q]^2$.
Given $\pi$ and $\nu$, for each $0 \leq i \leq m -1$, define a map $f_i: [q]^2 \to \{0,1\}$:
\begin{align*}
	\forall c\in [q]^2,\quad f_i(c) \triangleq \begin{cases}
		1 &\text{if } \pi_i(c) > \nu_i(c)\\
		0 &\text{if } \pi_i(c) \leq \nu_i(c).
	\end{cases}
\end{align*}
Define the function $f: ([q]^2)^m \to \mathbb{N}$ by
\begin{align*}
	\forall \tau \in ([q]^2)^m , \quad f(\tau) \triangleq \sum_{i = 0}^{m-1}f_i(\tau_i), \text{ where } \tau_i \in [q]^2.
\end{align*}
Since $\DTV{\pi_i}{\nu_i} \geq  n^{-1/4}$ for all $0 \leq i \leq m-1$, we have
\begin{align*}
	\mathbf{E}_{X \sim \pi}\left[f(X)\right] - \mathbf{E}_{Y\sim \nu}\left[f(Y)\right] \geq mn^{-1/4} \geq \frac{n}{20t} \cdot n^{-1/4} \geq 4 n^{2/3},
\end{align*}
where the last inequality holds because $t = O(\log n)$ and $n$ is sufficiently large. 
Given a sample $\tau \in ([q]^2)^{m}$, we say the event $A$ occurs if
\begin{align*}
	f(\tau) > \frac{\mathbf{E}_{X \sim \pi}\left[f(X)\right] + \mathbf{E}_{Y\sim \nu}\left[f(Y)\right]}{2}.
\end{align*}  
Note that $m \leq n$. By Hoeffding's inequality, it holds that 
\begin{align*}
	\Pr_{\pi}[A] &= 1 - \Pr_{\pi}[\overline{A}] \geq 1 - \exp\left( -\frac{2n^{4/3}}{m} \right) = 1 - o(1),\\
	\Pr_{\nu}[A] &\leq  \exp\left( -\frac{2n^{4/3}}{m} \right)= o(1).
\end{align*}
By the definition of total variation distance, it holds that
\begin{align}\label{eq:logn-lower-bound-dtv-bound}
	\DTV{\mathcal{X}}{\mathcal{Y}} = \DTV{\pi}{\nu} = \max_{B \subseteq ([q]^2)^m}|\pi(B) - \nu(B)| \geq \Pr_{\pi}[A] - \Pr_{\nu}[A] \geq 1 - o(1).
\end{align}

%
%

Recall that the above $\mathcal{Y}$ is sampled conditioning on an arbitrary configuration $\sigma_F\in[q]^F$ of fixed centers. Now we consider a $\sigma\in[q]^P$ sampled from the Gibbs distribution $\mu$ on the path $P$ and its restrictions $\sigma_F$, $\sigma_U$ and $\sigma_C$ on $F=\{x_i\}$, $U=\{u_i,v_i\}$ and $C=F\cup U$. Also let $\boldsymbol{X}$ be the vector of values returned by the vertices in $P$ in a $t$-round protocol, and $\boldsymbol{X}_F$, $\boldsymbol{X}_U$ and $\boldsymbol{X}_C$ its restrictions on the respective sets of centers. The theorem follows if we can show that $\DTV{\boldsymbol{X}}{\sigma}>\frac{1}{3}$ for our choice of $t=O(\log n)$.
By definition of the total variation distance, we have:
\begin{align}
&\quad\DTV{\boldsymbol{X}}{\sigma}
\ge
\DTV{\boldsymbol{X}_C}{\sigma_C}\notag\\
&=\frac{1}{2}\sum_{\sigma_F\in[q]^F}\sum_{\sigma_U\in[q]^U}\bigg(\big|\mu(\sigma_F,\sigma_U)-\Pr[\boldsymbol{X}_F=\sigma_F\wedge \boldsymbol{X}_U=\sigma_U]\big|\bigg)\notag\\
&=\frac{1}{2}\sum_{\sigma_F\in[q]^F}\sum_{\sigma_U\in[q]^U}\bigg(\big|\mu(\sigma_F)\mu(\sigma_U\mid \sigma_F)- \Pr[\boldsymbol{X}_F=\sigma_F]\Pr[ \boldsymbol{X}_U=\sigma_U]\big|\bigg)\notag\\
&\ge
\sum_{\sigma_F\in[q]^F}\mu(\sigma_F)\cdot \frac{1}{2}\sum_{\sigma_U\in[q]^U}\left|\mu(\sigma_U\mid \sigma_F) - \Pr[ \boldsymbol{X}_U=\sigma_U]\right|-\frac{1}{2}\sum_{\sigma_F\in[q]^F}\left|\mu(\sigma_F)-\Pr[\boldsymbol{X}_F=\sigma_F]\right|.\label{eq:logn-lower-bound-total}
\end{align}
Note that
\begin{align*}
\DTV{\boldsymbol{X}}{\sigma}\ge\DTV{\boldsymbol{X}_F}{\sigma_F}=\frac{1}{2}\sum_{\sigma_F\in[q]^F}\left|\mu(\sigma_F)-\Pr[\boldsymbol{X}_F=\sigma_F]\right|.
\end{align*}

If this quantity is greater than $1/3$, then we already have $\DTV{\boldsymbol{X}}{\sigma}>1/3$ and the lower bound is proved. If otherwise, we suppose that 
\[
\frac{1}{2}\sum_{\sigma_F\in[q]^F}\left|\mu(\sigma_F)-\Pr[\boldsymbol{X}_F=\sigma_F]\right|\le \frac{1}{3}.
\] 
Observe that for any $\sigma_F\in[q]^F$, we have
\begin{align*}
\frac{1}{2}\sum_{\sigma_U\in[q]^U}\left|\mu(\sigma_U\mid \sigma_F) - \Pr[ \boldsymbol{X}_U=\sigma_U]\right|&=\DTV{\mathcal{X}}{\mathcal{Y}}\ge 1-o(1),
\end{align*}
where $\mathcal{Y}=(\mathcal{Y}_i=(\sigma_{u_i},\sigma_{v_i}))_{0\le i\le m-1}$ is sampled conditioning on $\sigma_F$ and the inequality is due to~\eqref{eq:logn-lower-bound-dtv-bound}.

Therefore, the total variation distance in~\eqref{eq:logn-lower-bound-total} can be further bounded as
\begin{align*}
\DTV{\boldsymbol{X}}{\sigma}
\ge
\sum_{\sigma_F\in[q]^F}\mu(\sigma_F)(1-o(1))-\frac{1}{3}
=1-o(1)-\frac{1}{3}>\frac{1}{3}. &\qedhere
\end{align*}
\end{proof}

Next, we state a strong $\Omega(\Diam)$ lower bound for sampling with long-range correlations.

\subsection{An $\Omega(\Diam)$ lower bound in the non-uniqueness regime}

We consider the weighted independent sets of graphs, the \concept{hardcore model}. Given a graph $G(V,E)$ and a \concept{fugacity} parameter $\lambda>0$, each configuration $\sigma$ in
\[
\mathrm{IS}(G)=\left\{\sigma\in\{0,1\}^V:\forall(u,v)\in E,\sigma_u\sigma_v=0\right\}
\] 
indicates an independent set $I$ in $G$ and is assigned a weight $w(\sigma)=\lambda^{|I|}$. The Gibbs distribution $\mu=\mu_{G}$ is defined over all independent sets in $G$ proportional to their weights. As discussed in Section~\ref{sec:model-MRF}, the model is an MRF.

The hardcore model on graphs with maximum degree $\Delta$ undergoes a computational phase transition at the \concept{uniqueness threshold} $\lambda_c(\Delta)=\frac{(\Delta-1)^{\Delta-1}}{(\Delta-2)^\Delta}$, such that sampling from the Gibbs distribution can be done in polynomial time in the \concept{uniqueness regime} $\lambda<\lambda_c$~\cite{weitz2006counting, efthymiou2016convergence} and is intractable unless NP=RP in the \concept{non-uniqueness regime} $\lambda>\lambda_c$~\cite{Cai:2016:BSB:2912239.2912263,sly2010computational, sly2014counting, galanis2016inapproximability}.

The following theorem states an $\Omega(\Diam)$ lower bound for sampling from the hardcore model in the non-uniqueness regime.
In particular when $\lambda=1$ the model represents the uniform independent sets and the non-uniqueness $\lambda>\lambda_c(\Delta)$ holds when $\Delta\ge 6$, which gives us Theorem~\ref{main-theorem-diameter-lower-bound}.

\begin{theorem}\label{thm:hardcore-lowerbound}
Let $\Delta\ge3$ and $\lambda>\lambda_c(\Delta)$.
Let $\epsilon>0$ be a sufficiently small constant.
For all $N>0$ there exists a graph $\mathcal{G}$ on $\Theta(N)$ vertices with maximum degree $\Delta$ and diameter $\Diam(\mathcal{G})=\Omega(N^{1/11})$ such that for the hardcore model on $\mathcal{G}$ with fugacity $\lambda$,
any $t$-round protocol that samples within total variation distance $\epsilon$ from the Gibbs distribution $\mu=\mu_\mathcal{G}$ must have $t=\Omega(\Diam(\mathcal{G}))$.
\end{theorem}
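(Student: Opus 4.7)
The plan is to carry out the reduction sketched after Theorem~\ref{main-theorem-diameter-lower-bound}: build a graph $\mathcal{G}$ whose hardcore Gibbs measure essentially encodes a uniform choice between the two maximum cuts of an even cycle of length $m=\widetilde{\Omega}(\sqrt{N})$, and then argue that no $t=o(\Diam(\mathcal{G}))$ protocol can reproduce the long-range correlation forced by this cycle. The argument proceeds in four steps.

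First, for $\lambda>\lambda_c(\Delta)$ I will import the bipartite \emph{phase gadget} that underlies the standard hardness proofs of Sly, Sly--Sun, and Galanis--Stefankovic--Vigoda: a bipartite graph $H$ with sides $A,B$, maximum degree $\Delta$, size $n$, diameter $O(\log n)$, and designated constant-degree ``port'' sets $P_A\subseteq A$ and $P_B\subseteq B$. In the non-uniqueness regime, $\mu_H$ concentrates (up to error $\exp(-\Omega(n))$) on two mirror-image phases $\Phi_A,\Phi_B$: in $\Phi_A$, ports of $P_A$ are occupied with probability $\approx p^{\star}>1/2$ and ports of $P_B$ with probability $\approx q^{\star}<1/2$, and symmetrically in $\Phi_B$. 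Second, I wire $m$ copies $H_1,\dots,H_m$ (with $m$ even) into a cycle by joining each $A$-port of $H_i$ through a short matching to a distinct $B$-port of $H_{i+1 \bmod m}$, taking care to keep every degree at most $\Delta$. Choosing $m=\Theta(\sqrt{N}/\log N)$ and $n=\Theta(\sqrt{N}\log N)$ gives $|\mathcal{G}|=\Theta(N)$ and $\Diam(\mathcal{G})=\Theta(m\log n)=\widetilde{\Theta}(\sqrt{N})$.

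Third, I will show that $\mu_\mathcal{G}$ is $\epsilon$-close in total variation to the balanced mixture of the two alternating global assignments $\Phi_A\Phi_B\Phi_A\cdots$ and $\Phi_B\Phi_A\Phi_B\cdots$: two consecutive gadgets in the \emph{same} phase would force many connected port pairs to be simultaneously occupied, which is forbidden by the independent set constraint and carries weight at most $\exp(-\Omega(|P_A|))$; mixed configurations inside a single gadget are already suppressed by the gadget analysis. Because $m$ is even, both alternating patterns are consistent around the cycle and each inherits mass $\approx 1/2$. This reduces sampling from $\mu_\mathcal{G}$ within small constant total variation distance to sampling almost uniformly from the two maximum cuts of the even cycle $C_m$.

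Fourth, the distributed contradiction. Fix a port vertex $u\in H_1$ and a port vertex $v\in H_{1+m/2}$; these lie at graph distance $\Omega(\Diam(\mathcal{G}))$. Under any $t$-round protocol with $t<\Diam(\mathcal{G})/3$, the locality property~\eqref{eq:non-local-independence} forces the outputs $X_u$ and $X_v$ to be independent. Under $\mu_\mathcal{G}$, however, $u$ and $v$ sit in gadgets of opposite parity along the even cycle, so up to an $O(\epsilon)$ correction the pair $(X_u,X_v)$ is supported on two perfectly anti-correlated values determined by the chosen phase, while the marginals of $X_u$ and $X_v$ are each near $(p^{\star}+q^{\star})/2$; hence the joint distribution sits at $\Omega(1)$ total variation distance from the product of its marginals. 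For sufficiently small constant $\epsilon$ this contradicts $\DTV{\mu_{\mathsf{out}}}{\mu_\mathcal{G}}\le\epsilon$. I expect the main obstacle to be step one: producing a phase gadget that simultaneously delivers the concentration bound, constant-degree ports, and $O(\log n)$ diameter uniformly throughout the entire non-uniqueness range $\lambda>\lambda_c(\Delta)$ and survives the cycle wiring with degree $\le\Delta$ preserved; this requires carefully invoking (and likely adapting) the tree-recursion and second-moment machinery of the hardcore hardness literature, after which the remaining steps are a clean diameter argument driven by~\eqref{eq:non-local-independence}.
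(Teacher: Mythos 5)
Your overall strategy (phase gadget in the non-uniqueness regime, copies wired along an even cycle so that the Gibbs measure encodes a near-uniform choice between two global phase patterns, then the locality-of-randomness property~\eqref{eq:non-local-independence} applied to two antipodal gadgets) is exactly the paper's, and your step 4 matches its final argument. But your step 2/3 contains a genuine error: with your wiring (a single matching from the $A$-ports of $H_i$ to the $B$-ports of $H_{i+1}$) two consecutive gadgets in the \emph{same} phase do \emph{not} force occupied--occupied port pairs. In phase $\Phi_A$ the $A$-ports are dense ($\approx p^\star$) and the $B$-ports are sparse ($\approx q^\star$), so a same-phase boundary pairs a dense set with a sparse set, contributing a factor $\approx(1-p^\star q^\star)^k$ per boundary, whereas an opposite-phase boundary contributes $\approx(1-(p^\star)^2)^k$ or $\approx(1-(q^\star)^2)^k$. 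Since $(1-p^\star q^\star)^2>(1-(p^\star)^2)(1-(q^\star)^2)$, your interaction is \emph{ferromagnetic}: the measure concentrates on the two constant phase assignments, not on the two maximum cuts, so your claimed $\exp(-\Omega(|P_A|))$ penalty for aligned neighbors is false. Worse, the two constant assignments need not carry comparable mass: their ratio includes $\bigl(\Pr_G[\Phi_A]/\Pr_G[\Phi_B]\bigr)^m$, and the gadget guarantee (Proposition~\ref{prop:hardcore-non-unique}) only controls this per-gadget ratio up to $1\pm\delta$ for a constant $\delta$, i.e.\ up to an exponential-in-$m$ factor overall. If one constant assignment dominates, the long-range anticorrelation you invoke in step 4 disappears and no contradiction with~\eqref{eq:non-local-independence} follows.

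The paper avoids both problems by wiring \emph{both} sides: for each cycle edge $(x,y)$ it adds $k$ edges $W^+_x$--$W^+_y$ and $k$ edges $W^-_x$--$W^-_y$. Then a same-phase boundary pairs dense--dense and sparse--sparse (factor $\Gamma=(1-(q^+)^2)(1-(q^-)^2)$ per edge pair) while an opposite-phase boundary pairs dense--sparse twice (factor $\Theta=(1-q^+q^-)^2>\Gamma$), so the interaction is antiferromagnetic and, by Lemma~\ref{lem:cut-hardcore}, the measure concentrates on max-cut phase patterns. Crucially, each of the two max cuts of the even cycle uses each gadget phase exactly $m/2$ times, so the per-gadget phase imbalance cancels exactly and the two max cuts get \emph{exactly} equal probability by the rotational symmetry of the cycle (Theorem~\ref{thm:maxcut-hardcore}); ``$m$ even'' alone does not deliver your ``mass $\approx 1/2$'' claim, and no analogous cancellation is available for your constant assignments. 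A further small fix you would need even after repairing the wiring: to make the antipodal gadgets anti-correlated you must take $m/2$ odd (as the paper does); otherwise $H_1$ and $H_{1+m/2}$ lie on the same side of both max cuts (positively correlated, which could still be used, but your ``perfectly anti-correlated'' claim as stated is wrong for $m/2$ even).
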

We follow the approaches in~\cite{Cai:2016:BSB:2912239.2912263,sly2010computational, sly2014counting, galanis2015inapproximability, galanis2016inapproximability} for the computational phase transition. The network $\mathcal{G}=H^G$ is constructed by lifting a graph $H$ with a gadget $G$, such that sampling from the hardcore model on $H^G$ with $\lambda>\lambda_c(\Delta)$ effectively samples a maximum cut in $H$. We choose $H$ to be an even cycle, in which the maximum cut imposes a long-range correlation among vertices. And to sample with such a long-range correlation, the sampling algorithm must not be local.

Unlike the results of~\cite{Cai:2016:BSB:2912239.2912263,sly2010computational, sly2014counting, galanis2015inapproximability, galanis2016inapproximability} which are for computational complexity of approximate counting, here we prove unconditional lower bounds for sampling in the \LOCAL{} model. 
Our lower bound is due to the long-range correlations in the random max-cut rather than the computational complexity of optimization.
Technical-wise, this means that in addition to show that a max-cut in $H$ is sampled, we also need that the sampled  max-cut is distributed almost uniformly.

\subsubsection{The random graph gadget}
We now describe the random graph gadget which is essential to the hardness of sampling. The gadget is constructed in two steps. For positive integers $n,r$ and $\Delta$, we first describe the construction of the random  bipartite (multi)graph $\mathcal{G}^r_n$:
\begin{itemize}
\item Let $V^+$ and $V^-$ be two vertex sets with $|V^+|=|V^-|=n+r$, such that $V^\pm=U^\pm\uplus W^\pm$ where $\left\lvert U^\pm\right\rvert=n$ and $\left\lvert W^\pm\right\rvert=r$. Let $V=V^+\cup V^-$, $W=W^+\cup W^-$ and $U=U^+\cup U^-$.
\item 
Uniformly and independently sample $\Delta-1$ perfect matchings between $V^+$ and $V^-$ and then uniformly and independently sample a perfect matching between $U^+$ and $U^-$. The union of all these matchings gives us the random bipartite (multi)graph $\mathcal{G}^r_n$, in which every vertex in $U$ has degree $\Delta$ and every vertex in $W$ has degree $\Delta-1$.
\end{itemize}
Now we describe the second part of the construction. Let $0<\theta<\psi<1/8$ be constants. Let $r':=(\Delta-1)^{\lfloor\theta\log_{\Delta-1}{n}\rfloor+2\lfloor\frac{\psi}{2}\log_{\Delta-1}{n}\rfloor}$. Note that $r'=o(n^{1/4})$. First, we sample $G$ from the distribution $\mathcal{G}_n^{r'}$. Next, attach $k$ disjoint $(\Delta-1)$-ary trees of even depth $l$ (with $k=(\Delta-1)^{\lfloor\theta\log_{\Delta-1}{n}\rfloor}$ and $l=2\lfloor\frac{\psi}{2}\log_{\Delta-1}{n}\rfloor$) to $W^\pm$, such that every vertex in $W$ is a leaf of exactly one tree and the trees do not share common vertices with the bipartite graph $G$, apart from the vertices in $W$. Let $T^\pm$ denote the roots of those trees ($|T^+|=|T^-|=k$), called ``terminals''. We denote the family of graphs that can be constructed this way by $\tilde{\mathcal{G}}(k,n,\Delta)$. Note that our construction is still bipartite with size $\Theta(n)$ and the terminals in $T^+$ and $T^-$ belongs to distinct partitions of the bipartite graph.

The \concept{phase} of a configuration $\sigma$, denoted as $Y(\sigma)$, is defined as
\begin{align*}
Y(\sigma):=
\begin{cases}
+ & \text{if $\sum_{v\in U^+}{\sigma_v\ge\sum_{v\in U^-}{\sigma_v}}$,}\\
- & \text{if $\sum_{v\in U^+}{\sigma_v<\sum_{v\in U^-}{\sigma_v}}$.}
\end{cases}
\end{align*}

It is easy to verify that the random bipartite graph $\mathcal{G}^r_n$ in the first step is an expander with high probability. 
The following proposition was proved in~\cite{Cai:2016:BSB:2912239.2912263}.
\begin{proposition}[Lemma 8 \& Lemma 9 in~\cite{Cai:2016:BSB:2912239.2912263}]\label{prop:hardcore-non-unique}
If $\lambda>\lambda_c(\Delta)=\frac{(\Delta-1)^{\Delta-1}}{(\Delta-2)^\Delta}$ then there exist two constants $0<q^-<q^+<1$ such that the followings hold. 
Let $Q^\pm_T$ denote the product measure on configurations in $\{0,1\}^T$ so that the spin states are i.i.d.~Bernoulli with probability $q^\pm$ on $T^+$ and $q^\mp$ on $T^-$, that is:
\begin{align*}
Q^\pm_T(\sigma_T)=
&\left(q^\pm\right)^{\sum_{v\in T^+}{\sigma_v}}\left(1-q^\pm\right)^{|T^+|-\sum_{v\in T^+}{\sigma_v}}\\
&\cdot\left(q^\mp\right)^{\sum_{v\in T^-}{\sigma_v}}\left(1-q^\mp\right)^{|T^-|-\sum_{v\in T^-}{\sigma_v}}.
\end{align*}
For any $\delta>0$, there exists sufficiently large constant $N_0(\delta)$ such that for all $n>N_0(\delta)$ the followings hold  altogether with positive probability for $G\sim\tilde{\mathcal{G}}(k,n,\Delta)$: 
\begin{itemize}
\item (expander) $G$ is connected with ${\Diam}\left(G\right)=O(\log n)$;
\item (balanced phases) $\Pr_G\left[Y(\sigma)=\pm\right]\in[(1-\delta)/2,(1+\delta)/2]$;
\item (phase-correlated almost independence) $\forall\tau_T\in\{0,1\}^T$,
\[\Pr_G\left[\sigma_T=\tau_T \mid Y(\sigma)=\pm\right]/Q_T^\pm(\tau_T)\in[1-\delta,1+\delta];\]
\end{itemize}
where $\Pr_G$ is the probability law for $\sigma$ sampled from $\mu_{G}$.

By the probabilistic method, there exists a $G$ satisfying the above conditions.
\end{proposition}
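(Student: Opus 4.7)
The plan is to establish the three properties by analyzing the hardcore partition function $Z_G$ together with its first and second moments under the random model $G\sim\mathcal{G}^k_n$. The values $q^+>q^-$ should be identified from the tree recursion for the hardcore model on the infinite $\Delta$-regular tree $\mathbb{T}_\Delta$: in the non-uniqueness regime $\lambda>\lambda_c(\Delta)$ the coupled fixed-point system on the bipartite classes of $\mathbb{T}_\Delta$ admits two non-symmetric solutions, giving the two extremal semi-translation-invariant Gibbs measures on $\mathbb{T}_\Delta$ in which the occupation probabilities on the two classes are $(q^+,q^-)$ and $(q^-,q^+)$ respectively. These are the target ``plus'' and ``minus'' phases, and the fact that the measure at a degree-$(\Delta-1)$ vertex is exactly $\mathrm{Bernoulli}(q^\pm)$ (rather than some value depending on $\Delta$) is the reason the terminals in $W$ are defined to have degree $\Delta-1$ in the gadget.

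For the expander property, I would invoke standard results for the random bipartite $\Delta$-regular configuration model: with probability bounded away from zero, $\mathcal{G}^k_n$ is simple, connected, has spectral gap bounded away from zero, and hence diameter $O(\log n)$. These estimates are unaffected by slightly tweaking $k=o(n)$ vertices down to degree $\Delta-1$.

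For the balanced phases, I would decompose $Z_G=Z_G^++Z_G^-+Z_G^0$ where $Z_G^\pm$ sums $w(\sigma)=\lambda^{|I(\sigma)|}$ over configurations whose occupation profile $(\sum_{V^+}\sigma_v,\sum_{V^-}\sigma_v)/n$ lies in a small neighborhood of $(q^\pm,q^\mp)$, and $Z_G^0$ collects the rest. By a large-deviation argument on the tree recursion (equivalently, by identifying the two global maxima of the Bethe free energy on the bipartite simplex), the contribution of $Z_G^0$ is exponentially suppressed in $n$. The distribution of $\mathcal{G}^k_n$ is invariant under swapping $V^+\leftrightarrow V^-$, so $\mathbf{E}[Z_G^+]=\mathbf{E}[Z_G^-]$. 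The crux is a second-moment computation showing $\mathbf{E}[(Z_G^\pm)^2]\le(1+o(1))\,\mathbf{E}[Z_G^\pm]^2$; combined with Markov's inequality this gives concentration of $Z_G^\pm$ around its mean and hence $Z_G^\pm/Z_G\in[(1-\delta)/2,(1+\delta)/2]$ with probability bounded away from zero. For large enough $n$ this probability can be made close to one, allowing us to further intersect with the expander event.

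For the phase-correlated almost-independence on $W$, I would condition on the event $\{Y(\sigma)=\pm\}$ and exploit the local convergence $\mathcal{G}^k_n\Rightarrow\mathbb{T}_\Delta$: since terminals are typically at distance $\omega(1)$ from each other and from short cycles, the joint law of $\sigma_W$ under $\mu_G(\cdot\mid Y=\pm)$ is close to the joint law under the corresponding extremal Gibbs measure on $\mathbb{T}_\Delta$, which is exactly $Q_W^\pm$ because each terminal is a degree-$(\Delta-1)$ leaf whose $\Delta-1$ tree neighbors live in the opposite bipartite class (so their occupation is governed by $q^\mp$, and the single-site marginal at the terminal is $q^\pm$). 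The almost-independence across distinct terminals then reduces to an exponential decay of point-to-point correlations conditional on the phase; this decay holds because, within a phase, the hardcore model on a locally tree-like expander inherits the uniqueness of the semi-translation-invariant Gibbs measure on $\mathbb{T}_\Delta$ (non-uniqueness between phases, uniqueness within a phase).

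The main obstacle is the second-moment computation that underlies both the balanced-phase bound and the concentration needed for the terminal independence. A naive moment computation sees interactions between two independent copies of the configuration and can fail to match the square of the first moment. The standard remedy is either (i) a cluster expansion around each of the two dominant phases, controlling fluctuations by a convergent polymer series, or (ii) interpolation between $\mathcal{G}^k_n$ and a reference tree measure combined with monotonicity of the hardcore partition function in single-site fields; the latter is the route taken by Sly and Sun and imports the needed analytic estimates from $\mathbb{T}_\Delta$ via local convergence.
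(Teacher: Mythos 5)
The paper does not prove this proposition at all: it is imported verbatim from Sly and Sun (their Lemma~3.1 and Proposition~4.1), so the only ``proof'' in the paper is the citation. Your sketch correctly identifies the broad strategy of that source (tree recursion fixed points $q^\pm$ on $\mathbb{T}_\Delta$, degree-$(\Delta-1)$ terminals so that the single-site marginal is exactly $q^\pm$, restricted partition functions $Z_G^\pm$, moment computations), but two of your key steps have genuine gaps. First, the claimed second-moment asymptotics $\mathbf{E}[(Z_G^\pm)^2]\le(1+o(1))\mathbf{E}[Z_G^\pm]^2$ is false in this model: short cycles in the configuration model inflate the second moment by a constant factor, so the best one gets directly is $\mathbf{E}[(Z_G^\pm)^2]=O(\mathbf{E}[Z_G^\pm]^2)$. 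Paley--Zygmund then gives $Z_G^\pm$ comparable to its mean with positive probability, but that alone does not give the balanced-phases property, which requires the \emph{ratio} $Z_G^+/Z_G^-$ to be close to $1$; symmetry of the distribution of $\mathcal{G}^k_n$ under swapping $V^+\leftrightarrow V^-$ only equates expectations, not the two random variables on a fixed graph. The missing ingredient is the small subgraph conditioning argument of Sly and Sun (after Robinson--Wormald): the $O(1)$ fluctuations of $Z_G^+$ and $Z_G^-$ are driven by the same short-cycle counts, hence cancel in the ratio. (Your alternative remedies --- cluster expansion, or ``interpolation plus monotonicity, the route taken by Sly and Sun'' --- misattribute their method; they use moments plus small subgraph conditioning, not interpolation.)

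Second, the phase-correlated almost-independence cannot be obtained from local weak convergence together with within-phase decay of correlations, as you propose. The statement is a \emph{uniform multiplicative} bound $\Pr_G[\sigma_W=\tau_W\mid Y(\sigma)=\pm]/Q^\pm_W(\tau_W)\in[1-\delta,1+\delta]$ over all $2^{|W|}$ boundary configurations, with $|W|=2k=\omega(1)$ (indeed $k=\Theta(m\log m)$ in the reduction). Local convergence controls marginals at a bounded number of vertices, and correlation-decay arguments give additive errors per pair of terminals; neither yields multiplicative control over exponentially many configurations of a growing set of terminals, which is exactly what the downstream calculation in Lemma~\ref{lem:cut-hardcore} consumes (it multiplies these ratios over all terminals). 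In Sly--Sun this is proved by redoing the first- and second-moment computations with the terminal spins prescribed, i.e.\ estimating $\mathbf{E}[Z_G(\tau_W,\pm)]$ for each $\tau_W$ and comparing to $Q_W^\pm(\tau_W)\mathbf{E}[Z_G^\pm]$, again combined with small subgraph conditioning. Without these two ingredients your outline does not close, so as a standalone proof it has a real gap --- though as a reading of the cited work it is directionally right.
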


\subsubsection{Reduction from Max-Cut}\label{sec:reduction-max-cut}

Let $H$ be a cycle with $m$ vertices where $m>0$ is an even integer. Fix constants $\theta=\psi=1/9$ and
let $G\in\tilde{\mathcal{G}}(2k,n,\Delta)$, 
with $k=\Theta(m^{10/9})$ and $n=\Theta(k^{1/\theta})=\Theta(m^{10})$, be the graph that satisfies the conditions in Proposition~\ref{prop:hardcore-non-unique}.

\begin{itemize}
\item For each vertex $x\in H$ let $G_x$ be a copy of $G$. We denote by $T^\pm_x$ the respective set of $2k$ terminals in $G_x$. Let $\widehat{H}^G$ be the disconnected copies of the $G_x$, $x\in H$.
\item For every edge $(x,y)\in H$, add $k$ edges between $T^+_x$ and $T^+_y$ and similarly add $k$ edges between $T^-_x$ and $T^-_y$. This can be done in such a way that the resulting (multi)graph $H^G$ is $\Delta$-regular.
\end{itemize}
\begin{definition}\label{phase}
For each $x\in H$, we write $Y_x=Y_x(\sigma)$ for the phase of a configuration $\sigma$ on $G_x$. Let $\mathcal{Y}=(Y_x)_{x\in H}\in\{+,-\}^{V(H)}$. Given the phase $\mathcal{Y}'\in\{+,-\}^{V(H)}$, we define:
\begin{align*}
Z_{H^G}(\mathcal{Y}')=\sum\limits_{\sigma\in \mathrm{IS}(H^G)}{\lambda^{\|\sigma\|_1}\mathbf{1}\{\mathcal{Y}(\sigma)=\mathcal{Y}'\}},
\end{align*}
where
\begin{align*}
\mathrm{IS}(H^G)=\left\{\sigma\in\{0,1\}^{V(H^G)}: \forall uv\in E(H^G), \sigma_u\sigma_v=0\right\}
\end{align*}
is the set of all independent sets in~$H^G$. We also use $\Pr_{H^G}$ to represent the probability law for $\sigma$ sampled from $\mu_{H^G}$.
\end{definition}

Note that the cycle $H$ has precisely two maximum cuts. A key property for proving the lower bound is that in the non-uniqueness regime, sampling from the hardcore model on graph $H^G$ corresponds to sampling a maximum cut in $H$ almost uniformly. 

\begin{theorem}\label{thm:maxcut-hardcore}
Let $\lambda>\lambda_c(\Delta)$. Let $\mathcal{Y}_1,\mathcal{Y}_2\in\{+,-\}^{V(H)}$ correspond respectively to the two maximum cuts in $H$. It holds that:
\begin{align}
\Pr\nolimits_{H^G}\left[\mathcal{Y}(\sigma)=\mathcal{Y}_1\right]=\Pr\nolimits_{H^G}\left[\mathcal{Y}(\sigma)=\mathcal{Y}_2\right]\ge\frac{1}{2}-o(1).\label{eq:maxcut2-hardcore}
\end{align}
\end{theorem}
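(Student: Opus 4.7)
The plan is to reduce the hardcore model on $H^G$ to an effective two-spin model on the cycle $H$ whose Boltzmann weights strongly favor alternating phase assignments. Conditioning on the terminal spins $\tau\in\{0,1\}^{W(H^G)}$,
\[
Z_{H^G}(\mathcal{Y}') = \sum_{\tau}\Bigl(\prod_{x\in V(H)}Z_{G_x}(Y'_x,\tau_{W_x})\Bigr)\prod_{(x,y)\in E(H)}\mathbf{1}\{\tau\text{ forms an independent set on the cross-edges of }(x,y)\},
\]
where $Z_{G_x}(s,\tau_W)$ is the sum of weights of configurations on $G_x$ with phase $s$ and terminal trace $\tau_W$. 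The three properties in Proposition~\ref{prop:hardcore-non-unique} combine to give $Z_{G_x}(s,\tau_W)=(1\pm\delta)^2\cdot\tfrac{1}{2}Z_{G_x}\cdot Q_W^s(\tau_W)$. Because the $k$ cross-edges across each $(x,y)\in E(H)$ touch disjoint terminals and $Q_W^s$ is a product measure, the residual sum over $\tau$ factorizes across $E(H)$, yielding
\[
Z_{H^G}(\mathcal{Y}') = (1\pm\delta)^{\Theta(m)}\cdot C\cdot\prod_{(x,y)\in E(H)}w(Y'_x,Y'_y),
\]
with $C=\prod_x Z_{G_x}/2$ and $w(s,t)=(1-p_s^+p_t^+)^k(1-p_s^-p_t^-)^k$, where $p_s^\varepsilon=q^+$ if $s=\varepsilon$ and $p_s^\varepsilon=q^-$ otherwise.

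A direct computation gives $w(+,-)=w(-,+)=(1-q^+q^-)^{2k}$ and $w(+,+)=w(-,-)=\bigl((1-(q^+)^2)(1-(q^-)^2)\bigr)^{k}$, with ratio $w(+,-)/w(+,+)=r^{k}$ for $r:=(1-q^+q^-)^2/\bigl((1-(q^+)^2)(1-(q^-)^2)\bigr)$. The algebraic identity $(1-q^+q^-)^2-(1-(q^+)^2)(1-(q^-)^2)=(q^+-q^-)^2$ together with $q^+>q^-$ (from Proposition~\ref{prop:hardcore-non-unique}) shows $r>1$. Hence each edge $(x,y)\in E(H)$ with $Y'_x=Y'_y$ incurs a multiplicative penalty $r^{-k}$ relative to a disagreeing edge; letting $a(\mathcal{Y}')$ denote the number of agreeing edges of $H$ under $\mathcal{Y}'$, the effective weight $W(\mathcal{Y}'):=\prod_{xy}w(Y'_x,Y'_y)$ satisfies $W(\mathcal{Y}')/W(\mathcal{Y}_1)=r^{-k a(\mathcal{Y}')}$, and the symmetry $w(s,t)=w(-s,-t)$ forces the exact identity $W(\mathcal{Y}_1)=W(\mathcal{Y}_2)$. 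On the even cycle $H$, the alternating patterns $\mathcal{Y}_1,\mathcal{Y}_2$ are the only $\mathcal{Y}'$ with $a(\mathcal{Y}')=0$, and every other vector has $a(\mathcal{Y}')\geq 2$ because cut sizes in an even cycle are even and strictly less than $m$ for non-maximum cuts.

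Summing the penalized weights over all $\mathcal{Y}'\notin\{\mathcal{Y}_1,\mathcal{Y}_2\}$ bounds the non-max-cut contribution by $2^m r^{-2k}W(\mathcal{Y}_1)$; with $k=\Theta(m\log m)$ this is $o(1)\cdot W(\mathcal{Y}_1)$, and the conclusion $\Pr_{H^G}[\mathcal{Y}=\mathcal{Y}_i]\geq(1-o(1))/2$ follows for $i=1,2$. The main technical obstacle is controlling the cumulative error $(1\pm\delta)^{\Theta(m)}$ from the $m$-fold application of Proposition~\ref{prop:hardcore-non-unique}, which requires $\delta=o(1/m)$; this is consistent with the scaling $n=\Theta(m(\log m)^2)$ provided Proposition~\ref{prop:hardcore-non-unique} is read quantitatively so as to allow $\delta\to 0$ as $n\to\infty$. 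A secondary subtlety is the claimed \emph{exact} equality $\Pr[\mathcal{Y}_1]=\Pr[\mathcal{Y}_2]$ in $H^G$ (rather than equality up to $o(1)$), which can be obtained either by arranging the random gadget $G$ so that the $V^+\leftrightarrow V^-$ swap is a graph automorphism, or simply by weakening the statement to equality up to $o(1)$---either version suffices for Theorem~\ref{thm:hardcore-lowerbound}.
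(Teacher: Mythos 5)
Your reduction to an effective two-spin model on $H$ is essentially the paper's Lemma~\ref{lem:cut-hardcore} in a two-sided form: the per-edge weights you compute are exactly the paper's $\Theta=(1-q^+q^-)^2$ and $\Gamma=(1-(q^+)^2)(1-(q^-)^2)$, and your identity $(1-q^+q^-)^2-(1-(q^+)^2)(1-(q^-)^2)=(q^+-q^-)^2$ is a nice explicit proof of $\Theta>\Gamma$, which the paper only asserts. The suppression of non-maximum-cut phase vectors also goes through as you describe, and, contrary to what you call the ``main technical obstacle'', that part needs no smallness of $\delta$: with a constant $\delta$ the accumulated error is only of the form $\left(\frac{1+\delta}{1-\delta}\right)^{2m}=e^{O(m)}$, while each agreeing edge costs $r^{-k}=e^{-\Omega(m\log m)}$ since $k=\Theta(m\log m)$, so summing over the $2^m$ phase vectors still gives total non-max-cut mass $o(1)$.

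The genuine gap is the equal split between $\mathcal{Y}_1$ and $\mathcal{Y}_2$. Your effective weights satisfy $W(\mathcal{Y}_1)=W(\mathcal{Y}_2)$, but the true probabilities differ from them by uncontrolled per-copy factors, so with the constant $\delta$ that Proposition~\ref{prop:hardcore-non-unique} actually supplies (``for any $\delta>0$ there exists a sufficiently large constant $N_0(\delta)$'') you only get $\Pr[\mathcal{Y}(\sigma)=\mathcal{Y}_1]/\Pr[\mathcal{Y}(\sigma)=\mathcal{Y}_2]$ pinned between $\left(\frac{1-\delta}{1+\delta}\right)^{2m}$ and $\left(\frac{1+\delta}{1-\delta}\right)^{2m}$, which is exponentially far from $1$; a priori one max cut could absorb almost all of the $1-o(1)$ mass. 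Neither of your proposed fixes is available as stated: reading the proposition ``quantitatively'' so that $\delta=o(1/m)$ means demanding $\delta$ polynomially small in $n$, a strengthening the paper neither states nor needs, and the random matching gadget has no $V^+\leftrightarrow V^-$ swap automorphism (modifying the gadget to force one would require re-proving Proposition~\ref{prop:hardcore-non-unique} for the new model). The paper closes this step with a different, cost-free symmetry: writing $Z_{H^G}(\mathcal{Y}_i)=Z_{\widehat{H}^G}\cdot\Pr_G[Y=+]^{m/2}\Pr_G[Y=-]^{m/2}\cdot\Pr_{\widehat{H}^G}\left[\forall (u,v)\in E(H^G)\setminus E(\widehat{H}^G),\ \sigma_u\sigma_v\neq 1\mid \mathcal{Y}(\sigma)=\mathcal{Y}_i\right]$, it uses that both alternating patterns have exactly $m/2$ vertices of each phase and that the even cycle's symmetry (all copies $G_x$ being identical) exchanges $\mathcal{Y}_1$ and $\mathcal{Y}_2$, giving the exact equality $\Pr_{H^G}[\mathcal{Y}(\sigma)=\mathcal{Y}_1]=\Pr_{H^G}[\mathcal{Y}(\sigma)=\mathcal{Y}_2]$ with no dependence on $\delta$ at all. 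You need some argument of this kind (or a genuinely quantitative gadget estimate) to obtain the $\frac{1}{2}-o(1)$ bound; as written the proposal does not establish it.
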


The theorem is implied by the following lemma, which is proved by applying a calculation in~\cite{sly2010computational} with the improved gadget property Proposition~\ref{prop:hardcore-non-unique}. 
\begin{lemma}\label{lem:cut-hardcore}
Let $\mathcal{Y}',\mathcal{Y}''\in\{+,-\}^{V(H)}$ and $\delta>0$. Suppose that $G$ satisfies the conditions in Proposition~\ref{prop:hardcore-non-unique}. It holds that
\begin{align*}
\frac{\Pr_{H^G}\left[\mathcal{Y}(\sigma)=\mathcal{Y}'\right]}{\Pr_{H^G}\left[\mathcal{Y}(\sigma)=\mathcal{Y}''\right]}\ge\left(\frac{1-\delta}{1+\delta}\right)^{2m}(\Theta/\Gamma)^{k[Cut(\mathcal{Y}')-Cut(\mathcal{Y}'')]},
\end{align*}
where $\Theta=(1-q^+q^-)^2$ and $\Gamma=(1-(q^+)^2)(1-(q^-)^2)$; and $Cut(\mathcal{Y})=|\{(x,y)\in E(H):\mathcal{Y}_x\ne\mathcal{Y}_y\}|$ for a $\mathcal{Y}\in\{+,-\}^{V(H)}$.
\end{lemma}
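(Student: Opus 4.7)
The plan is to express the partition function $Z_{H^G}(\mathcal{Y}')$ in terms of the disconnected gadgets $\widehat{H}^G=\bigsqcup_{x\in H}G_x$, so that the three properties in Proposition~\ref{prop:hardcore-non-unique} can be applied separately to each $G_x$ and then combined.

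\medskip
\noindent\textbf{Step 1: Factor out the disconnected contribution.} Since the edges $E(H^G)\setminus E(\widehat{H}^G)$ lie entirely between terminals in different copies, any $\sigma$ that is feasible in $\widehat{H}^G$ is feasible in $H^G$ if and only if none of these extra edges has both endpoints occupied, a property depending only on $\sigma_W=(\sigma_{W_x})_{x\in H}$. Writing the Gibbs measure on $\widehat{H}^G$ conditioned on $\mathcal{Y}$ as $\Pr_{\widehat{H}^G}[\,\cdot\mid\mathcal{Y}(\sigma)=\mathcal{Y}']$, this gives
\[
Z_{H^G}(\mathcal{Y}')\;=\;Z_{\widehat{H}^G}(\mathcal{Y}')\cdot\Pr\nolimits_{\widehat{H}^G}\!\left[\,\sigma\text{ feasible on added edges}\,\big|\,\mathcal{Y}(\sigma)=\mathcal{Y}'\,\right].
\]
Since $\widehat{H}^G$ is disconnected, $Z_{\widehat{H}^G}(\mathcal{Y}')=\prod_{x\in H}Z_G(\mathcal{Y}'_x)$. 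The balanced-phase bullet of Proposition~\ref{prop:hardcore-non-unique} gives $Z_G(+)/Z_G(-)\in[(1-\delta)/(1+\delta),(1+\delta)/(1-\delta)]$, hence
\[
\frac{Z_{\widehat{H}^G}(\mathcal{Y}')}{Z_{\widehat{H}^G}(\mathcal{Y}'')}\;\ge\;\left(\frac{1-\delta}{1+\delta}\right)^{m}.
\]

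\medskip
\noindent\textbf{Step 2: Replace the conditional marginal on $W$ by the product measure.} In $\widehat{H}^G$, the restrictions $\sigma_{W_x}$ are mutually independent across $x$, so for any $\tau_W\in\{0,1\}^W$,
\[
\Pr\nolimits_{\widehat{H}^G}\!\left[\sigma_W=\tau_W\mid\mathcal{Y}=\mathcal{Y}'\right]\;=\;\prod_{x\in H}\Pr\nolimits_{G}\!\left[\sigma_{W_x}=\tau_{W_x}\mid Y_x=\mathcal{Y}'_x\right].
\]
The phase-correlated almost-independence bullet bounds each factor within $(1\pm\delta)$ of $Q_W^{\mathcal{Y}'_x}(\tau_{W_x})$, so the whole product lies within $[(1-\delta)^m,(1+\delta)^m]$ of $Q^{\mathcal{Y}'}(\tau_W):=\prod_{x\in H}Q_W^{\mathcal{Y}'_x}(\tau_{W_x})$. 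Summing against the indicator of feasibility on added edges (which depends only on $\tau_W$) yields
\[
\Pr\nolimits_{\widehat{H}^G}\!\left[\text{feasible}\,\big|\,\mathcal{Y}=\mathcal{Y}'\right]\;\ge\;(1-\delta)^{m}\,P^\ast(\mathcal{Y}'),
\]
where $P^\ast(\mathcal{Y}'):=\sum_{\tau_W}Q^{\mathcal{Y}'}(\tau_W)\,\mathbf{1}\{\tau_W\text{ feasible on added edges}\}$, and the analogous upper bound $(1+\delta)^{m}P^\ast(\mathcal{Y}'')$ holds for $\mathcal{Y}''$.

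\medskip
\noindent\textbf{Step 3: Compute $P^\ast$ edge-by-edge.} Under the product measure $Q^{\mathcal{Y}'}$, occupancies are independent across vertices, and the $2k$ added edges for each $(x,y)\in E(H)$ involve disjoint terminal pairs, so the feasibility probability factors over these edges. For an added edge inside $W^+\cup W^+$ (resp.~$W^-\cup W^-$) attached to $(x,y)$, a straightforward case check in the four configurations of $(\mathcal{Y}'_x,\mathcal{Y}'_y)$ shows the non-collision probability is $1-(q^+)^2$ or $1-q^+q^-$ (resp.~$1-(q^-)^2$ or $1-q^+q^-$) according as $\mathcal{Y}'_x=\mathcal{Y}'_y$ or not. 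Multiplying the $2k$ edges for every edge of $H$ gives
\[
P^\ast(\mathcal{Y}')\;=\;\Gamma^{k(m-\mathrm{Cut}(\mathcal{Y}'))}\,\Theta^{k\,\mathrm{Cut}(\mathcal{Y}')}\;=\;\Gamma^{km}(\Theta/\Gamma)^{k\,\mathrm{Cut}(\mathcal{Y}')}.
\]
Combining Steps~1--3,
\[
\frac{\Pr_{H^G}[\mathcal{Y}=\mathcal{Y}']}{\Pr_{H^G}[\mathcal{Y}=\mathcal{Y}'']}\;\ge\;\left(\frac{1-\delta}{1+\delta}\right)^{m}\!\cdot\frac{(1-\delta)^m}{(1+\delta)^m}\cdot\frac{P^\ast(\mathcal{Y}')}{P^\ast(\mathcal{Y}'')}\;=\;\left(\frac{1-\delta}{1+\delta}\right)^{2m}\!(\Theta/\Gamma)^{k[\mathrm{Cut}(\mathcal{Y}')-\mathrm{Cut}(\mathcal{Y}'')]},
\]
as required. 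The one mildly delicate point is that the $(1\pm\delta)^m$ slack in Step~2 does not compound too badly: one must track upper/lower bounds separately in numerator and denominator rather than bounding a ratio of sums directly. Everything else is bookkeeping.
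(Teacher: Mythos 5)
Your proof is correct and follows essentially the same route as the paper: factor $Z_{H^G}(\mathcal{Y}')$ as $Z_{\widehat{H}^G}(\mathcal{Y}')$ times the conditional probability that the added terminal edges are unoccupied, bound the first ratio via the balanced-phases property and the second via phase-correlated almost-independence, and then evaluate the product measure edge-by-edge to obtain $\Gamma^{km}(\Theta/\Gamma)^{k\,Cut(\mathcal{Y}')}$, exactly as in the paper's argument. The only nit is in Step 3: for an added $W^+$--$W^+$ edge with $\mathcal{Y}'_x=\mathcal{Y}'_y=-$ the non-collision probability is $1-(q^-)^2$ rather than $1-(q^+)^2$ (the roles of $q^\pm$ swap), but since each edge of $H$ contributes $k$ edges on each of the two sides, the per-$H$-edge product is $\Gamma^k$ in either aligned case, so your formula for $P^\ast$ and the final bound are unaffected.
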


\begin{proof}
Since the graph $\widehat{H}^G$ consists of a collection of disconnected copies of $G$, the distribution of a configuration on $\widehat{H}^G$ is given by the product measure of configurations on the $(G_x)_{x\in H}$. In particular the phases are independent, therefore
\begin{align}
&\quad\frac{Z_{\widehat{H}^G}(\mathcal{Y}')}{Z_{\widehat{H}^G}(\mathcal{Y}'')}=\frac{Z_{\widehat{H}^G}(\mathcal{Y}')/Z_{\widehat{H}^G}}{Z_{\widehat{H}^G}(\mathcal{Y}'')/Z_{\widehat{H}^G}} =\frac{\Pr_G\left[Y(\sigma)=+\right]^{\sum\limits_{x\in H}{\mathbf{1}\{Y'_x=+\}}}\cdot\Pr_G\left[Y(\sigma)=-\right]^{\sum\limits_{x\in H}{\mathbf{1}\{Y'_x=-\}}}}{\Pr_G\left[Y(\sigma)=+\right]^{\sum\limits_{x\in H}{\mathbf{1}\{Y''_x=+\}}}\cdot\Pr_G\left[Y(\sigma)=-\right]^{\sum\limits_{x\in H}{\mathbf{1}\{Y''_x=-\}}}}
\notag\\
&\ge\left(\frac{1-\delta}{1+\delta}\right)^m. \label{eq:cut-hardcore1}
\end{align}
Note that the ratio $Z_{H^G}(\mathcal{Y}')/Z_{\widehat{H}^G}(\mathcal{Y}')$ is precisely the probability of a $\sigma$ sampled from $\mu_{\widehat{H}^G}$ being an independent set in $H^G$. And due to Proposition~\ref{prop:hardcore-non-unique},  conditioning on the phase $\mathcal{Y}'$ the spins of $\sigma_{\bigcup_{x\in H}{T_x}}$ are almost independent i.i.d.~Bernoulli with probabilities $q^+$ or $q^-$ depending on the phase, therefore
\begin{align}
\frac{Z_{H^G}(\mathcal{Y}')}{Z_{\widehat{H}^G}(\mathcal{Y}')}&=\mathrm{Pr}_{\widehat{H}^G}\left[\sigma\text{ is an IS in } H^G\mid\mathcal{Y}(\sigma)=\mathcal{Y}'\right]
\notag\\&=\mathrm{Pr}_{\widehat{H}^G}\left[\forall(u,v)\in E(H^G)\setminus E(\widehat{H}^G),\sigma_u\sigma_v\ne 1\mid\mathcal{Y}(\sigma)=\mathcal{Y}'\right]
\notag\\&\ge(1-\delta)^m\sum\limits_{\sigma_{\bigcup_{x\in H}{T_x}}}{Q_{\sigma_T}(\mathcal{Y}')}
\notag\\
&=(1-\delta)^m\Gamma^{k|E(H)|}(\Theta/\Gamma)^{kCut(\mathcal{Y}')},\label{eq:cut-hardcore2}
\end{align}
where
\begin{align*}Q_{\sigma_T}(\mathcal{Y}')=&\Bigg[\mathbf{1}\{\forall (u,v)\in E(H^G)\setminus E(\widehat{H}^G),\sigma_u\sigma_v\ne1\} \times\prod\limits_{x\in H}{Q_{T_x}^{Y'_x}(\sigma_{T_x})}\Bigg].
\end{align*}
Similarly, we can obtain
\begin{align}
\frac{Z_{H^G}(\mathcal{Y}'')}{Z_{\widehat{H}^G}(\mathcal{Y}'')}\le(1+\delta)^m\Gamma^{k|E(H)|}(\Theta/\Gamma)^{kCut(\mathcal{Y}'')}.\label{eq:cut-hardcore3}
\end{align}
Combining~\eqref{eq:cut-hardcore1}, \eqref{eq:cut-hardcore2} and \eqref{eq:cut-hardcore3}, we have:
\begin{align*}
\frac{\Pr_{H^G}\left[\mathcal{Y}(\sigma)=\mathcal{Y}'\right]}{\Pr_{H^G}\left[\mathcal{Y}(\sigma)=\mathcal{Y}''\right]}=\frac{Z_{H^G}(\mathcal{Y}')}{Z_{H^G}(\mathcal{Y}'')}
&\ge\left(\frac{1-\delta}{1+\delta}\right)^m(\Theta/\Gamma)^{k[Cut(\mathcal{Y}')-Cut(\mathcal{Y}'')]}\cdot\frac{Z_{\widehat{H}^G}(\mathcal{Y}')}{Z_{\widehat{H}^G}(\mathcal{Y}'')}\\
&\ge\left(\frac{1-\delta}{1+\delta}\right)^{2m}(\Theta/\Gamma)^{k[Cut(\mathcal{Y}')-Cut(\mathcal{Y}'')]}. \qedhere
\end{align*}
\end{proof}

\begin{proof}[Proof of Theorem~\ref{thm:maxcut-hardcore}:]
Let $\mathcal{Y}',\mathcal{Y}''\in\{+,-\}^{V(H)}$ such that $Cut(\mathcal{Y}')>Cut(\mathcal{Y}'')$. Let $\delta>0$, by Lemma \ref{lem:cut-hardcore}, we have
\begin{align*}
\frac{\Pr_{H^G}\left[\mathcal{Y}(\sigma)=\mathcal{Y}'\right]}{\Pr_{H^G}\left[\mathcal{Y}(\sigma)=\mathcal{Y}''\right]}&\ge\left(\frac{1-\delta}{1+\delta}\right)^{2m}(\Theta/\Gamma)^{k[Cut(\mathcal{Y}')-Cut(\mathcal{Y}'')]}.
\end{align*}
Note that for $\lambda>\lambda_c(\Delta)=\frac{(\Delta-1)^{\Delta-1}}{(\Delta-2)^\Delta}$, we have $\Theta>\Gamma$. Thus for $k=\Theta(m^{10/9})$  we have
\begin{align*}
\frac{\Pr_{H^G}\left[\mathcal{Y}(\sigma)=\mathcal{Y}'\right]}{\Pr_{H^G}\left[\mathcal{Y}(\sigma)=\mathcal{Y}''\right]}\ge\left(\frac{1-\delta}{1+\delta}\right)^{2m}(\Theta/\Gamma)^k\ge4^m.
\end{align*}
Since the size of $\{+,-\}^{V(H)}$ is at most $2^m$, it follows that with probability at least $1-o(1)$ the phases $\mathcal{Y}(\sigma)$ attain a maximum cut in $H$. Therefore, we only need to prove $Z_{H^G}(\mathcal{Y}_1)=Z_{H^G}(\mathcal{Y}_2)$ for the two maximum cuts $\mathcal{Y}_1$ and $\mathcal{Y}_2$ in $H$. By simple calculation, we have
\begin{align*}
Z_{H^G}(\mathcal{Y}_1)&=Z_{\widehat{H}^G}(\mathcal{Y}_1)\cdot\Pr\nolimits_{\widehat{H}^G}\left[\sigma\in \mathrm{IS}(H^G)\mid \mathcal{Y}(\sigma)=\mathcal{Y}_1\right]
\\&=Z_{\widehat{H}^G}(\mathcal{Y}_1)\cdot\Pr\nolimits_{\widehat{H}^G}\left[\forall (u,v)\in E(H^G)\setminus E(\widehat{H}^G),\sigma_u\sigma_v\ne 1\mid\mathcal{Y}(\sigma)=\mathcal{Y}_1\right]
\\&=Z_{\widehat{H}^G}\cdot\Pr\nolimits_G\left[Y=+\right]^{m/2}\cdot\Pr\nolimits_G\left[Y=-\right]^{m/2}
\\&\quad\cdot\Pr\nolimits_{\widehat{H}^G}\left[\forall (u,v)\in E(H^G)\setminus E(\widehat{H}^G),\sigma_u\sigma_v\ne 1\mid\mathcal{Y}(\sigma)=\mathcal{Y}_1\right]
\end{align*}
and
\begin{align*}
 Z_{H^G}(\mathcal{Y}_2)&=Z_{\widehat{H}^G}(\mathcal{Y}_2)\cdot\Pr\nolimits_{\widehat{H}^G}\left[\sigma\in \mathrm{IS}(H^G)\mid \mathcal{Y}(\sigma)=\mathcal{Y}_2\right]
\\&=Z_{\widehat{H}^G}(\mathcal{Y}_2)\cdot\Pr\nolimits_{\widehat{H}^G}\left[\forall (u,v)\in E(H^G)\setminus E(\widehat{H}^G),\sigma_u\sigma_v\ne 1\mid\mathcal{Y}(\sigma)=\mathcal{Y}_2\right]
\\&=Z_{\widehat{H}^G}\cdot\Pr\nolimits_G\left[Y=+\right]^{m/2}\cdot\Pr\nolimits_G\left[Y=-\right]^{m/2}
\\&\quad \cdot\Pr\nolimits_{\widehat{H}^G}\left[\forall (u,v)\in E(H^G)\setminus E(\widehat{H}^G),\sigma_u\sigma_v\ne 1\mid\mathcal{Y}(\sigma)=\mathcal{Y}_2\right].
\end{align*}
By symmetry of the even-length cycle, it holds that
\begin{align*}
&\quad\Pr\nolimits_{\widehat{H}^G}\left[\forall(u,v)\in E(H^G)\setminus E(\widehat{H}^G),\sigma_u\sigma_v\ne 1\mid\mathcal{Y}(\sigma)=\mathcal{Y}_1\right]\\&=\Pr\nolimits_{\widehat{H}^G}\left[\forall(u,v)\in E(H^G)\setminus E(\widehat{H}^G),\sigma_u\sigma_v\ne 1\mid\mathcal{Y}(\sigma)=\mathcal{Y}_2\right]. \qedhere
\end{align*} 
\end{proof}

\subsubsection{Proof of the $\Omega(\Diam)$ lower bound}
Now we are ready to prove Theorem~\ref{thm:hardcore-lowerbound}.
Let $N$ be sufficiently large. We choose an integer $n=\Theta(N^{10/11})$ and even integer $m=\Theta(N^{1/11})$ such that $m/2$ is odd, so that 
a gadget $G$ is constructed to satisfy Proposition~\ref{prop:hardcore-non-unique}, and the graph $\mathcal{G}=H^G$, where $H$ is a cycle of length $m$, is constructed as described in Section~\ref{sec:reduction-max-cut}.
Note that $\text{diam}\left(\mathcal{G}\right)\ge\Diam(H)\ge m/2$ and $\left\lvert V\left(\mathcal{G}\right)\right\rvert=\Theta(N)$, therefore $\text{diam}\left(\mathcal{G}\right)=\Omega(N^{1/11})$.

Let $\sigma'$ denote the output of a $t$-round protocol with $t\le 0.49\cdot \Diam(\mathcal{G})$ on network $\mathcal{G}$, whose distribution is denoted as $\mu_t$; and let $\sigma$ be sampled from the hardcore Gibbs distribution $\mu=\mu_{\mathcal{G}}$. 
By contradiction, we assume that $\DTV{\mu_t}{\mu}\le \epsilon$ for sufficiently small constant $\epsilon$.

Let $\mathcal{Y}',\mathcal{Y}''\in\{+,-\}^{V(H)}$ denote the phases corresponding to the two maximum cuts in the cycle $H$. Therefore, by Theorem~\ref{thm:maxcut-hardcore}, we have
\[
\Pr[\mathcal{Y}(\sigma)\in\{\mathcal{Y}',\mathcal{Y}''\}]\ge1-o(1).
\]
We pick $u,v\in V(\mathcal{G})$ which satisfy that $\mathrm{dist}_{\mathcal{G}}(u,v)=\Diam\left(\mathcal{G}\right)$. Since $\mathcal{G}=H^G$ is constructed by replacing each vertex $x$ in $H$ with $G_x$ which is an identical copy of $G$, 
it must hold that $u\in G_x, v\in G_y$ for some vertices $x,y$ in $H$ with $\dist_H(x,y)=m/2$. And since $m/2$ is odd, without loss of generality, we suppose that  $Y'_x=+,Y'_y=-$ and $Y''_x=-, Y''_y=+$. 
Moreover, for all $u'\in G_x, v'\in G_y$, by the triangle inequality we have:
\begin{align*}
\mathrm{dist}_{\mathcal{G}}(u,u')+\mathrm{dist}_{\mathcal{G}}(u',v')+\mathrm{dist}_{\mathcal{G}}(v',v)&\ge \mathrm{dist}_{\mathcal{G}}(u,v)=\text{diam}\left(\mathcal{G}\right).
\end{align*}
Due to Proposition~\ref{prop:hardcore-non-unique}, it holds that  $\Diam(G)=O(\log n)$, thus we have:
\[
\dist_{\mathcal{G}}(u',v')\ge\Diam(\mathcal{G})-O(\log n)=(1-o(1))\Diam\left(\mathcal{G}\right).
\] 
For the $\sigma'$ returned by a $t$-round protocol where $t\le 0.49\cdot \Diam(\mathcal{G})$,  according to the property~\eqref{eq:non-local-independence},
the $\sigma'_{G_x}$ and $\sigma'_{G_y}$ are independent of each other, thus the phases of $G_x$ and $G_y$ on $\sigma'$ are independent of each other:
\begin{align}
\Pr\left[Y_x(\sigma')=+\mid Y_y(\sigma')=-\right]=\Pr\left[Y_x(\sigma')=+\mid Y_y(\sigma')=+\right].\label{eq:lowerbound-hardcore1}
\end{align}
On the other hand, since $\DTV{\sigma'}{\sigma}\le \epsilon$, we have
\begin{align*}
\Pr\left[Y_x(\sigma')=+\mid Y_y(\sigma')=-\right]&=\frac{\Pr\left[Y_x(\sigma')=+\wedge Y_y(\sigma')=-\right]}{\Pr\left[Y_y(\sigma')=-\right]}
\\&\ge\frac{\Pr\left[Y_x(\sigma)=+\wedge Y_y(\sigma)=-\right]-\epsilon}{\Pr\left[Y_y(\sigma)=-\right]+\epsilon}\quad\text{(by $d_{\text{TV}}(\sigma',\sigma)\le\epsilon$)}
\\&\ge\frac{\Pr\left[\mathcal{Y}(\sigma)=\mathcal{Y}'\right]-\epsilon}{\Pr\left[Y_y(\sigma)=-\right]+\epsilon}
\\&\ge\frac{1/2-o(1)-\epsilon}{\Pr\left[\mathcal{Y}(\sigma)\ne\mathcal{Y}''\right]+\epsilon}\ge\frac{1-2\epsilon-o(1)}{1+2\epsilon+o(1)}, \quad\text{(by Theorem \ref{thm:maxcut-hardcore})}
\end{align*}
and
\begin{align*}
\Pr\left[Y_x(\sigma')=+\mid Y_y(\sigma')=+\right] &=\frac{\Pr\left[Y_x(\sigma')=+\wedge Y_y(\sigma')=+\right]}{\Pr\left[Y_y(\sigma')=+\right]}
\\&\le\frac{\Pr\left[Y_x(\sigma)=+\wedge Y_y(\sigma)=+\right]+\epsilon}{\Pr\left[Y_y(\sigma)=+\right]-\epsilon}\quad\text{(by $d_{\text{TV}}(\sigma',\sigma)\le\epsilon$)}
\\&\le\frac{\Pr\left[\mathcal{Y}(\sigma)\notin\left\{\mathcal{Y}',\mathcal{Y}''\right\}\right]+\epsilon}{\Pr\left[\mathcal{Y}(\sigma)=\mathcal{Y}''\right]-\epsilon}
\\&\le\frac{2\epsilon+o(1)}{1-2\epsilon-o(1)}.\qquad\qquad\qquad\qquad\qquad\text{(by Theorem \ref{thm:maxcut-hardcore})}
\end{align*}
This implies that
\begin{align*}
\Pr\left[Y_x(\sigma')=+\mid Y_y(\sigma')=+\right]<\Pr\left[Y_x(\sigma')=+\mid Y_y(\sigma')=-\right]
\end{align*}
by taking $\epsilon$ to be a sufficiently small constant, which contradicts the independence given in~\eqref{eq:lowerbound-hardcore1}.

\section{Conclusion}
In this paper, we study the local sampling problem and ask a new question about local computation:  \emph{whether a locally definable joint distribution can be sampled locally.} 

On the positive side, we give two distributed sampling algorithms \LubyGlauber{} and \LocalMetropolis{}. We show that \LubyGlauber{} achieves $O(\Delta \log n)$ mixing time under Dobrushin's condition and \LocalMetropolis{} may achieve optimal $O(\log n)$ mixing time under a stronger mixing condition. Thus many locally definable joint distributions can be sampled locally.

On the negative side, we give an $\Omega(\log n)$ lower bound for sampling from a broad class of locally defined joint distributions. Thus the $O(\log n)$-radius can be considered as the new criteria for being local for distributed sampling algorithms. Furthermore, we give an $\Omega(\mathrm{diam})=n^{\Omega(1)}$ lower bound for sampling weighted independent sets in the non-uniqueness regime. Since independent set is trivial to construct, this gives a strong separation between local sampling and local construction. The lower bounds hold even if every vertex is aware of the graph structure, which means the hardness for local sampling is due to the discrepancy between the locality of randomness in distributed algorithms and the long-range correlation in the joint distribution from which we want to sample.

\section*{Acknowledgements}
Yitong Yin wants to thank Daniel {\v{S}}tefankovi{\v{c}} for the stimulating discussions in the beginning of this project. He also wants to thank Heng Guo, Tom Hayes, Eric Vigoda, and Chaodong Zheng for helpful discussions.

\end{document}